\definecolor{MyBlue}{cmyk}{1,0.13,0,0.63}
\definecolor{MyGreen}{cmyk}{0.91,0,0.88,0.52}
\newcommand{\mylinkcolor}{MyBlue}
\newcommand{\mycitecolor}{MyGreen}
\newcommand{\myurlcolor}{black}
\newtheorem{thm}{Theorem}[section]
\newtheorem*{thm*}{Theorem}
\newtheorem{cor}[thm]{Corollary}
\newtheorem{lemma}[thm]{Lemma}
\newtheorem{prop}[thm]{Proposition}
\theoremstyle{definition}
\newtheorem{defn}[thm]{Definition}
\newtheorem{assumption}[thm]{Assumption}
\theoremstyle{remark}
\newtheorem{remark}[thm]{Remark}
\newtheorem{example}[thm]{Example}
\newtheorem{remarks}[thm]{Remarks}
\newtheorem{examples}[thm]{Examples}
\newcommand{\End}{\ensuremath{\mathrm{End}}}
\newcommand{\wt}{\ensuremath{\widetilde}}
\newcommand{\R}{\ensuremath{\mathbb{R}}}
\newcommand{\N}{\ensuremath{\mathbb{N}}}
\newcommand{\Z}{\ensuremath{\mathbb{Z}}}
\newcommand{\C}{\ensuremath{\mathbb{C}}}
\newcommand{\fraka}{{\mathfrak a}}
\newcommand{\frakc}{{\mathfrak c}}
\newcommand{\frakr}{{\mathfrak r}}
\def\calC{\mathcal{C}}
\def\calO{\mathcal{O}}
\def\calB{\mathcal{B}}
\def\calH{\mathcal{H}}
\def\calA{\mathcal{A}}
\def\calM{\mathcal{M}}
\def\calQ{\mathcal{Q}}
\def\calV{\mathcal{V}}
\def\calW{\mathcal{W}}
\def\calU{\mathcal{U}}
\newcommand{\one}{{\bf 1}}
\newcommand{\ol}{\overline}
\DeclareMathOperator{\Aut}{Aut}
\theoremstyle{definition}
\DeclareMathOperator{\Dom}{Dom}
\DeclareMathOperator{\Ker}{Ker}
\DeclareMathOperator{\Ran}{Ran}
\newcommand{\Cl}{\mathbb{C}\ell}
\newcommand{\rs}{{\mathfrak r}}
\newcommand{\hox}{\,\hat{\otimes}\,}
\newcommand{\rst}[1]{\ensuremath{{\mathbin\upharpoonright}%
\raise-.5ex\hbox{$#1$}}}
\newcommand{\Rmnum}[1]{\expandafter\@sl217--242owromancap\romannumeral #1@}
\author{Chris Bourne}
\address{WPI-AIMR, Tohoku University,
2-1-1 Katahira, Aoba-ku, Sendai, 980-8577, Japan \emph{and} 
RIKEN iTHEMS, 2-1 Hirosawa, Wako, Saitama 351-0198, Japan}
\email{chris.bourne@tohoku.ac.jp}
\date{\today}
\begin{document}

\title{Locally equivalent quasifree states and index theory}

\begin{abstract}
We consider quasifree ground states of Araki's self-dual CAR algebra from the viewpoint of index theory and
symmetry protected topological (SPT) phases. We first review how Clifford module indices characterise a topological 
obstruction to connect pairs of symmetric gapped ground states. This construction is then generalised to 
give invariants in  $KO_\ast(A^\mathfrak{r})$ with $A$ a $C^{*,\mathfrak{r}}$-algebra of allowed deformations. When $A=C^*(X)$, the Roe algebra 
of a coarse space $X$, and we restrict to gapped ground states that are locally equivalent with respect $X$, a 
$K$-homology class is also constructed. The coarse assembly map relates these two classes and clarifies the 
relevance of $K$-homology to free-fermionic SPT phases.
\end{abstract}

\maketitle

\parskip=0.02in

\section{Introduction}

Since the influential paper of Kitaev~\cite{Kitaev09}, $K$-theory of spaces and $C^*$-algebras has played an 
important role in studying the phase labels of free-fermionic topological states of matter, 
see~\cite{FM13, Thiang14, Kubota17, Kellendonk15, PSBbook, AMZ} for example. The dual theory, $K$-homology, 
also features prominently in Kitaev's paper as a way to characterise gapped local systems. 
While index pairings with Dirac operators and $K$-homology classes constructed on the (noncommutative) Brillouin torus 
have been effectively utilised to give numerical topological phase 
labels~\cite{GSB16, BCR2}, the role of $K$-homology as a means to directly characterise local gapped systems appears to be 
understudied in the mathematical physics literature. The aim of this paper is to provide some first steps 
in this direction. 

For our purposes, it is most convenient to study free-fermionic topological phases via the dynamics induced 
by gapped Bogoliubov--de Gennes (BdG) Hamiltonians on a Nambu space, a complex Hilbert space with 
real structure, see~\cite{KZ16} for example. 
Such dynamics give a quasifree, gapped and pure ground state of the self-dual algebra of 
canonical anti-commutation relations (CAR) studied by 
Araki~\cite{Araki70}. 
Similar to work by Alldridge--Max--Zirnbauer~\cite{AMZ}, we construct elements in the $K$-theory 
of a Real $C^*$-algebra $A$ of allowed deformations that characterise pairs of gapped 
BdG Hamiltonians/quasifree ground states.

To relate these constructions to $K$-homology, we consider the case $A=C^*(X)$, the Roe algebra of a coarse space $X$ 
constructed from a representation of $C_0(X)$ on the Nambu space and 
with real structure $\frakr$.
Taking inspiration from similar constructions in algebraic quantum field theory~\cite{LudersRoberts, DAntoniHollands}, 
a notion of local equivalence of gapped 
quasifree ground states is introduced for BdG Hamiltonians that are compatible with the representation of 
$C_0(X)$.
We show that such locally equivalent 
ground states give rise to a Fredholm module and $K$-homology class for $C_0(X)$. The coarse assembly map 
$\mu_X:KO^{-\ast}(C_0(X)^\frakr) \to KO_\ast( C^*(X)^\frakr )$ sends this $K$-homology class 
to the previously constructed $KO_\ast(C^*(X)^\frakr)$-valued indices. 
The equivariant assembly map can similarly be treated for quasifree gapped ground states with a compact or 
discrete group symmetry.

The coarse assembly 
map is an isomorphism for a large class of spaces. Therefore our result helps establish 
the relevance of $K$-homology as a mathematical characterisation of free-fermionic 
topological phases as well as its relation with the more well-studied approach 
via $K$-theory.

\subsection*{Mathematical results}

Given a complex Hilbert space $\calH$ with real structure $\Gamma$, 
pure quasifree states of the self-dual CAR algebra $A^\mathrm{car}_\mathrm{sd}(\calH,\Gamma)$ 
can be characterised by  skew-adjoint 
unitaries on $\calH$ that commute with $\Gamma$. Given a pair  $(J_0, J_1)$ of such 
unitaries whose corresponding quasifree states are equivalent, the space $\Ker(J_0 + J_1)$ 
is finite-dimensional and has the structure of an ungraded Clifford module. Using the 
Atiyah--Bott--Shaprio isomorphism~\cite{ABS}, the corresponding 
Clifford module index gives a $K$-theoretic obstruction for the quasifree states to be equivalent when restricted 
to the even subalgebra of $A^\mathrm{car}_\mathrm{sd}(\calH,\Gamma)$. 
In Section \ref{sec:Basic_Quasifree_index} we review these ideas and their extensions to quasifree 
pure states that are symmetric with respect to a compact group~\cite{Matsui87, CareyEvans}.

Our first task  is to extend such Clifford module indices to an index with range $K$-theory of a $C^*$-algebra $A$
with real structure $\frakr$. The main technical tool we use to define these indices is the relative 
Cayley transform considered in~\cite{BKRCayley} for pairs of unitaries acting on a Hilbert 
$A$-module and whose difference is a compact endomorphism. This construction is reviewed and slightly 
extended in Section \ref{Sec:Fred_and_Cayley}.
We then apply this construction in Section \ref{sec:KO_index_general} to pairs of gapped BdG Hamiltonians 
$(H_0,H_1)$ acting on $(\calH, \Gamma)$ such that 
$\mathrm{sgn}(H_0) - \mathrm{sgn}(H_1) \in A$.

 In the case that $A=C^*(X)$, we show in Section \ref{sec:Coarse_and_KHom} 
that the condition $\mathrm{sgn}(H_0) - \mathrm{sgn}(H_1) \in C^*(X)$ is satisfied when 
$H_0$ and $H_1$ are quasilocal and the pure quasifree states of 
$A^\mathrm{car}_\mathrm{sd}(\calH,\Gamma)$ constructed from $H_0$ and $H_1$ are locally equivalent 
with respect to the Real representation $C_0(X) \to \calB(\calH)$. 
Because of the close connection between coarse $C^*$-algebras and duality theory, 
pairs of locally equivalent quasifree states can be used to construct both a 
$KO_\ast(C^*(X)^\frakr)$-index  and a 
$K$-homology class for $C_0(X)$. Using a description of the assembly map via duality theory 
and boundary maps in $K$-theory as developed by Roe~\cite{Roe02, Roe04}, 
our main result is that the coarse assembly 
map relates our constructed $K$-homology and $K$-theory elements. Compact and discrete 
group symmetries can also be incorporated with minor adjustments.

Because we work in the category of complex $C^*$-algebras with a real structure, the assembly map 
has a natural description using 
van Daele $K$-theory~\cite{vanDaele1, vanDaele2}, which we review in 
 in Section \ref{subsec:vanDaele_and_boundary}. In particular, 
building from~\cite[Section 5.2]{BKRCayley}, we write down an explicit representative of 
the boundary map in van Daele $K$-theory composed with the equivalence to $KKR$-theory. 
Once all the relevant objects are in place, our main result follows relatively easily from this 
general boundary map computation. The boundary map computation can also be applied to 
systems with a defect that is mathematically encoded by a semi-split short exact sequence (e.g. a 
codimension $1$ boundary). We lay the mathematical framework to study such systems in Section \ref{sec:DefectsAndSES}, 
though  leave a full treatment to another place.

Coarse geometry methods have already been effectively utilised to study free-fermionic 
topological phases~\cite{Kubota17,EwertMeyer, LT20a, LT20b}. 
It would also be interesting to consider analogous methods for more general quasifree dynamics and 
states such as those defined for Hilbert $C^*$-bimodules and their corresponding Toeplitz and Cuntz--Pimnser algebras~\cite{LN04}.

\subsection*{Applications to topological phases}

Gapped BdG Hamiltonians on Nambu space define quasifree ground states of the CAR algebra and 
provide an effective description  of free-fermionic systems.
Adopting a framework analogous to the study of symmetry protected topological (SPT) phases of unique 
gapped ground 
states, we consider a compact group $G$ and $G$-symmetric ground states which are equivalent but need 
not be $G$-equivariantly equivalent. When $G$ corresponds to physical (Altland--Zirnbauer) symmetries, 
the topological obstruction to connect these ground states is given by a Clifford module index. 
More generally, 
we can use results from Matsui and Carey--Evans \cite{Matsui87, CareyEvans} to give a $KO_2^G(\R)$-valued obstruction. 
We extend this work to construct $KO_2^G(A^\frakr)$-valued 
indices, which   provide a topological obstruction to connect pairs BdG Hamiltonians and ground states 
with respect to an auxiliary $C^{\ast,\frakr}$-algebra $A$ of allowed deformations.

We then consider a coarse space $X$ and 
pairs of locally equivalent 
gapped ground states with respect to 
a representation of $C_0(X)$ on the Nambu space.
The coarse index 
$\mu_X^G:KO^{-\ast}_G(C_0(X)^\frakr) \to KO_\ast^G( C^*(X)^\frakr )$ then 
gives a topological obstruction to connect locally equivalent $G$-symmetric
gapped ground states via a path of gapped ground states that respects the 
$G$-symmetry and is local with respect to the representation of $C_0(X)$. 
For the case of a discrete group $\Upsilon$ acting isometrically and cocompactly on $X$, the range of 
the assembly map is $KO_\ast( C^*_r(\Upsilon))$, which directly connects to more standard 
approaches to free-fermionic phases of matter via $K$-theory.

Our result  provides a new and potentially useful approach for studying local gapped 
free-fermionic phases. 
Coarse geometry methods have been used to consider  interacting gapped ground states 
by Kapustin, Sopenko and Spodyneiko~\cite{KapustinSopenko, KapustinThermal}, so our framework
may also be useful beyond the free-fermionic setting.

\subsection*{Outline}
We collect some basic facts on Fredholm operators and Kasparov theory in Section \ref{Sec:Fred_and_Cayley}. 
Because gapped quasifree ground states with physical (Altland--Zirnbauer) symmetries can be described via 
Real mutually anti-commuting skew-adjoint unitaries~\cite{KZ16}, we also extend some results on the Cayley transform of unitaries~\cite{BKRCayley} 
to this setting. 
 The Cayley transform provides a way to pass between $KK$-theory and van Daele $K$-theory, 
which we also introduce as well as its application to boundary maps in Kasparov theory.

Section \ref{sec:Basic_Quasifree_index} reviews pure quasifree states of the self-dual CAR algebra $A^{\mathrm{car}}_\mathrm{sd}(\calH,\Gamma)$ 
and the construction of Clifford module indices studied in~\cite{Matsui87,CareyEvans} that characterise 
pairs of symmetric quasifree states. This is extended in Section \ref{sec:KO_index_general} to $KO_\ast(A^\frakr)$-valued 
indices and we compute the image of such indices under the boundary map from a semi-split
short exact sequence.

Finally in Section \ref{sec:Coarse_and_KHom} we consider coarse spaces, pseudolocal gapped BdG Hamiltonians, 
locally equivalent quasifree states and their topological description via $K$-homology and $K$-theory. 
The coarse assembly map relates these pictures and we briefly consider compact symmetries and discrete cocompact symmetries.

\section{Preliminaries on Index theory and the Cayley transform} \label{Sec:Fred_and_Cayley}

\subsection{Kasparov modules and $KKR$-theory}

We will primarily  work in the category of Real $C^*$-algebras or $C^{\ast, \mathfrak{r}}$-algebras, 
which are  complex $C^*$-algebras with a real structure, an anti-linear 
order-$2$ automorphism $a\mapsto a^{\frakr_A}$ such that 
$(a^*)^{\frakr_A} = (a^{\frakr_A})^*$ for all $a \in A$. We say that 
$a\in A$ is Real if $a^{\frakr_A} = a$. If $A$ has a $\Z_2$-grading $A= A^0 \oplus A^1$ we also 
assume that $(A^i)^{\frakr_A} \subset A^i$, $i \in \{0,1\}$. We recover the complex theory by 
ignoring the real structure $\mathfrak{r}_A$. Similarly, restricting to the subalgebra $A^{\frakr_A} = \{a \in A\,:\, a^{\frakr_A} = a\}$ 
gives a real $C^*$-algebra,  a $C^*$-algebra over the number field $\R$.
When the context is unambiguous, we will  write $\frakr_A$ as $\frakr$.

\begin{example}[Real Clifford algebras]
Given $r, s \in \mathbb{N}$, the Real $\Z_2$-graded Clifford algebra $\Cl_{r,s}$ is the complex $C^*$-algebra generated by 
the elements 
$\{\gamma_1,\ldots, \gamma_r, \rho_1,\ldots ,\rho_s\}$,
which are odd, mutually anti-commute and
\[
  \gamma_j = \gamma_j^\frakr = \gamma_j^*, \qquad \gamma_j \gamma_k + \gamma_k \gamma_j = 2\delta_{j,k}, \qquad \rho_j = \rho_j^\frakr = -\rho_j^*, \qquad \rho_j \rho_k + \rho_k \rho_j = -2\delta_{j,k}.
\]
As complex algebras $\Cl_{r,s} \cong \Cl_{r+s}$. The real Clifford algebra $Cl_{r,s}$ is algebraic span of 
$\{\gamma_1,\ldots, \gamma_r, \rho_1,\ldots ,\rho_s\}$ over $\R$, where $\Cl_{r,s}^{\frakr} = Cl_{r,s}$. 

We will often make use of the isomorphism $\End(\bigwedge^* \C) \cong \Cl_{1,1}$ with Real generators $\gamma$ and $\rho$. 
More generally, $\End(\bigwedge^* \C^n) \cong \Cl_{n,n}$.

We will occasionally  consider ungraded Clifford algebras, though we reserve the notation $\gamma$ and $\rho$ for 
odd generating elements. In particular, any Clifford algebra appearing in a Kasparov module will always be interpreted 
as $\Z_2$-graded.
\end{example}

We now briefly review Real Kasparov theory or $KKR$-theory~\cite{Kasparov80}. 
Unless otherwise stated, $B$ is a $\sigma$-unital $C^{*,\frakr}$-algebra and 
$E_B$ is a countably generated right Hilbert $B$-module, see~\cite{Lance} for the basic theory. 
 We will call such $B$-modules  Hilbert $C^*$-modules.
We denote by $\End_B(E)$ and 
$\mathbb{K}_B(E)$ the adjointable and compact operators respectively.
In the special case  where $E=B$ as a vector space with right-action by 
right-multiplication and $(b_1 \mid b_2)_B = b_1^* b_2$, $b_1,b_2 \in B$, we have that 
$\mathbb{K}_B(B) = B$ and $\End_B(B) = \mathrm{Mult}(B)$, the multiplier algebra of $B$.

A complex Hilbert $C^*$-module 
$E_B$ is a Real Hilbert $C^*$-module if 
there is an antilinear map $\mathfrak{r}_E:E_B\to E_B$,
called the real involution, 
such that for all $e, e_1,e_2 \in E_B$ and $b \in B$, 
\[
  (e^{\rs_E})^{\rs_E} = e, \qquad \qquad e^{\mathfrak{r}_E} \cdot b^{\mathfrak{r}_B} = (e \cdot b)^{\mathfrak{r}_E}, 
  \qquad \qquad ( e_1^{\mathfrak{r}_E} \mid e_2^{\mathfrak{r}_E} )_B = \big(( e_1 \mid e_2 )_B\big)^{\mathfrak{r}_B}.
\] 
The real involution on $E_B$ induces 
a real structure $\mathfrak{r}$ on $\End_B(E)$ 
via $S^\mathfrak{r} e = \big( S(e^{\mathfrak{r}_E} ) \big)^{\mathfrak{r}_E}$ for any $e \in E_B$. 
Given a separable Real $C^*$-algebra $A$, any representation
$\pi: A \to \End_B(E)$ should be compatible with this real 
structure, $\pi(a^{\mathfrak{r}_A}) = \pi(a)^\mathfrak{r}$ for all $a \in A$.

 We will often work with unbounded operators on Hilbert $C^*$-modules, 
 see~\cite[Chapter 9]{Lance} for more details.  
 We  recall 
 that a densely defined closed 
right $B$-linear operator $D:\Dom (D)\subset E_B \to E_{B}$ 
is \emph{regular} if $D^*$ is densely defined and the operator $1+D^*D:\Dom (D^*D) \to E_{B}$ has dense range.  
 Note also that $\Dom(D)$ must be invariant under the right $B$-action in order to 
obtain a right $B$-linear operator $D: \Dom(D) \to E_B$.
We call $D$ Real and write $D^\rs = D$ if $(\Dom (D))^{\rs_E} \subset \Dom (D)$ and 
$(De^{\rs_E})^{\rs_E} = De$ for all $e \in \Dom (D)$.
We also recall the graded commutator, where for endomorphisms
$S,\,T$ with homogenous parity 
$[S,T]_\pm = ST - (-1)^{\mathrm{deg}(S)\,\mathrm{deg}(T)}TS$.

\begin{defn}
Let $A$ and $B$ be $\Z_2$-graded Real $C^{*}$-algebras.
A Real Kasparov  module $(A, {}_\pi E_B, F)$ consists of 
\begin{enumerate}
\item[(i)] A Real and
$\Z_2$-graded Hilbert $C^*$-module ${E}_B$, 
\item[(ii)] A Real and $\Z_2$-graded $*$-homomorphism $\pi:A \to \End_B(E)$, 
\item[(iii)] A self-adjoint and odd operator $F=F^\rs \in \End_B(E)$  such that $[F,\pi(a)]_\pm, \, \pi(a)(\one - F^2) \in \mathbb{K}_B(E)$ 
for all $a\in A$.
\end{enumerate}
If $0=[F,\pi(a)]_\pm = \pi(a)(\one - F^2)$ for all $a \in A$, we say that $(A, {}_\pi E_B, F)$ is degenerate.

An  unbounded Real Kasparov module is  a triple $(\calA, {}_\pi{E}_B, D)$ 
with $\calA \subset A$ a dense $\ast$-subalgebra such that conditions (i) and (ii) of a Real Kasparov 
module are satisfied and  (iii) is replaced by the condition:
\begin{enumerate}
\item[(iii')] There is an unbounded self-adjoint, regular and odd operator $D=D^\rs$  such that for all $a\in \calA$, 
$\pi(a)\Dom(D)\subset\Dom(D)$ and
\begin{align*}
  & [D,\pi(a)]_\pm \in  \End_B(E),   &&\pi(a)(1+D^2)^{-1/2} \in  \mathbb{K}_B(E).
  \label{eq:defn}
\end{align*}
\end{enumerate}
\end{defn}

When $B=\C$, bounded and unbounded Kasparov modules are also called Fredholm modules and spectral triples respectively.

We will often omit the representation $\pi:A\to \End_B(X)$ if the context is clear. 
If $(\calA, E_B, D)$ is an unbounded Real Kasparov module, then the results of~\cite{BJ} can be easily 
adapted to the Real setting to show that $(A, E_B, F_D)$ is a Real
Kasparov module for $F_D=D(1+D^2)^{-1/2}$.
Equivalence classes of Real Kasparov modules give an abelian group $KKR(A,B)$~\cite{Kasparov80},
though this group depends on the choice of real structures for $A$ and $B$. 
Degenerate Kasparov modules represent the group identity of $KKR(A,B)$.

If $(A, E_B, F)$ is a Real  Kasparov module, then we can 
ignore the real structures and obtain a complex  Kasparov module and 
class in $KK(A, B)$. 
If we restrict the Real Hilbert $C^*$-module $E_B$ to the 
elements fixed under $\rs_E$, we obtain a real Hilbert $C^*$-module $E_{B^{\rs_B}}^{\rs_E}$. 
Similarly, the Real left action of $A$ becomes a real left action 
$\pi: A^{\rs_A} \to \End_{B^{\rs_B}}( E^{\rs_E} )$. 
We do not lose any information by restricting Real Kasparov modules to real Hilbert $C^*$-modules 
and algebras. Similarly, 
real Kasparov modules can be complexified to obtain Real Kasparov modules and 
$KKR(A, B) \cong KKO(A^{\rs_A}, B^{\rs_B})$.

The Clifford algebras $\Cl_{r,s}$ also play an important role in the $KKR$-groups, where we have that 
$KKR(A \hox \Cl_{r,s}, B ) \cong KKR( A, B \hox \Cl_{s,r})$. This isomorphism is obtained by 
the following composition
\[
  KKR(A \hox \Cl_{r,s}, B ) \xrightarrow{ \hox \mathrm{Id}_{\Cl_{s,r}}} KKR( A \hox \Cl_{r,s} \hox \Cl_{s,r}, B \hox \Cl_{s,r} ) 
  \to KKR( A, B \hox \Cl_{s,r}),
\]
where
the first map is the external product with the Kasparov module $\big(\Cl_{s,r},  {\Cl_{s,r}}_{\Cl_{s,r}}, 0 \big)$ and is 
an isomorphism of $KKR$-groups. The second map comes from the identification 
$\Cl_{r,s} \hox \Cl_{s,r} \cong \End\big( \bigwedge^* \C^{r+s} \big) \cong M_{2^{r+s}}(\C)$ and the stability of $KKR$.

If the algebra $B$ is trivially graded, $B^1=\{0\}$, we can also 
consider real $K$-theory, where $KKR(\Cl_{r,s}, B) \cong KKO(Cl_{r,s}, B^{\rs}) \cong KO_{r-s}(B^{\rs})$. 
Similarly, the real $K$-homology groups of a trivially graded algebra $A$ can be expressed 
as $KKR(A \otimes \Cl_{r,s}, \C) \cong KO^{s-r}(A^\frakr)$.

Finally we consider the case of group actions and equivariant Kasparov modules. 
Fix a compact or discrete group $G$ and an action  $\beta: G \to \Aut(B)$. 
We say that $\beta$ is Real and $\Z_2$-graded 
if $\beta_g(b^\rs) = \beta_g(b)^\rs$ and $\beta_g( B^j) \subset B^j$ for all 
$g \in G$, $b\in B$ and $j \in \{0,1\}$. 
A Real Hilbert $C^*$-module 
$E_B$ is $G$-equivariant if there is a
homomorphsim $\eta$ from $G$ into the invertible and bounded (not necessarily adjointable) linear transformations 
on $E$  that preserves the $\Z_2$-grading 
and is such that 
\[
  \eta_g(e^\frakr) = \eta_g(e)^\frakr, \qquad 
   \eta_g( e \cdot b) = \eta_g(e) \cdot \beta_g(b), \qquad   ( \eta_g(e_1) \mid \eta_g(e_2) )_B = \beta_g\big( ( e_1 \mid e_2)_B \big) 
\]
for all $e,e_1,e_2 \in E$, $b \in B$ and $g\in G$.
Such an action then induces a Real and $\Z_2$-graded action $\tilde{\eta}: G\to \Aut\big( \End_B(E) \big)$ where 
$\tilde{\eta}(T)e = \eta_g \circ T \circ \eta_{g^{-1}}(e)$ for any $T \in \End_B(E)$, $e \in E_B$ and $g \in G$. 
If $A$ is a Real $C^*$-algebra with a Real and $\Z_2$-graded group action $\alpha$, we require that 
any representation $\pi:A \to \End_B(E)$  be equivariant with respect to $\alpha$ and $\tilde{\eta}$.
We say that $T:\Dom(T)\subset E_B\to E_B$ 
is $G$-invariant if $\eta_g ( \Dom(T) ) \subset \Dom(T)$ and $\tilde{\eta}_g(T) = T$ for all $g \in G$. 

With these preliminaries in place, a $G$-equivariant (unbounded) Real Kasparov module is a Real (unbounded) 
Kasparov module with an equivariant Hilbert $C^*$-module and left-action such that  the 
self-adjoint operators $F$ or $D$ are  $G$-invariant.

\subsection{Fredholm operators on Hilbert $C^*$-modules}
We briefly provide some further information on Fredholm theory in the Hilbert $C^*$-module setting. 
A more comprehensive treatment can be found in~\cite{Joachim, vdDungen17}. 
We fix a $\sigma$-unital $C^{\ast, \frakr}$-algebra $B$ and a Real countably generated 
 Hilbert $C^*$-module $E_B$.

\begin{defn}
Let $S$ be a regular operator on $E_B$. We say that $S$ is Fredholm if there is a 
parametrix $Q \in \End_B(E)$ such that $SQ$ and $QS$ are closable with adjointable 
closures and $\ol{QS} -1$, $\ol{SQ}-1 \in \mathbb{K}_B(E)$.
\end{defn}

If $T \in \End_B(E)$ (so $T$ is bounded), then   $T$ is Fredholm if and only if  $q(T) \in \calQ_B(E)$ is invertible with 
$\calQ_B(E) = \End_B(E)/\mathbb{K}_B(E)$ the Calkin algebra of the Hilbert $C^*$-module $E_B$ and 
$q: \End_B(T) \to \calQ_B(E)$ the quotient map.

\begin{prop}[{\cite[Lemma 2.2]{Joachim}, \cite[Proposition 2.14]{vdDungen17}}] \label{prop:Fred_to_KasMod}
If $S=S^\frakr$ is a skew-adjoint Fredholm operator on a trivially graded Real Hilbert $C^*$-module $E_B$, 
then the triple 
\[
   \Big( \Cl_{1,0}, \, E_B \otimes \bigwedge\nolimits^{\! *} \C, \, S(\one - S^2)^{-1/2} \otimes \rho \Big)
\]
is a Real Kasparov module, where the left $\Cl_{1,0}$-action is generated by $\one \otimes \gamma$. 
\end{prop}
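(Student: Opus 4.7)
The plan is to verify in turn the three axioms of a Real Kasparov module by exploiting the identification $\End(\bigwedge^{*}\C) \cong \Cl_{1,1}$, under which $\gamma$ and $\rho$ are anticommuting Real odd generators satisfying $\gamma = \gamma^{*}$, $\gamma^{2}=\one$, $\rho = -\rho^{*}$ and $\rho^{2}=-\one$. First I would set up the ambient data: the Hilbert $C^{*}$-module $E_{B} \otimes \bigwedge^{*}\C$ carries the tensor-product Real structure, and since $E_{B}$ is trivially graded it inherits a $\Z_{2}$-grading purely from $\bigwedge^{*}\C$. The assignment $\gamma \mapsto \one \otimes \gamma$ then extends to a Real graded $\ast$-homomorphism $\pi:\Cl_{1,0} \to \End_{B}(E_{B} \otimes \bigwedge^{*}\C)$ because $\one \otimes \gamma$ is Real, self-adjoint, odd, and squares to $\one$.

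Next I would verify the algebraic properties of $F = S(\one-S^{2})^{-1/2} \otimes \rho$. Since $S$ is regular skew-adjoint, $S^{2} \leq 0$ and the bounded functional calculus produces a Real bounded operator $(\one-S^{2})^{-1/2}$ commuting with $S$; hence $S(\one-S^{2})^{-1/2}$ is a Real bounded skew-adjoint operator of norm at most one, and tensoring with the odd, skew-adjoint, Real element $\rho$ shows that $F$ is odd, self-adjoint, and Real as an element of $\End_{B}(E_{B} \otimes \bigwedge^{*}\C)$. The graded commutator condition is then trivial: since $\gamma$ and $\rho$ are distinct generators of $\Cl_{1,1}$ they anticommute, and both being odd gives $[F,\pi(\gamma)]_{+}=0$ identically, not only modulo compacts.

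The last and main step is the compactness condition $\pi(a)(\one-F^{2}) \in \mathbb{K}_{B}$ for $a \in \Cl_{1,0}$. Using $\rho^{2} = -\one$ together with the commutativity of $S$ and $(\one-S^{2})^{-1/2}$, I would compute $F^{2} = -S^{2}(\one-S^{2})^{-1} \otimes \one$ and then $\one - F^{2} = (\one-S^{2})^{-1} \otimes \one$, which reduces the claim to showing $(\one-S^{2})^{-1} \in \mathbb{K}_{B}(E)$. This is the step I expect to carry all the analytic weight, and is where the Fredholm hypothesis enters. Invoking the characterisation of Fredholmness for regular self-adjoint operators on Hilbert $C^{*}$-modules recorded in~\cite[Prop.~2.14]{vdDungen17} (see also~\cite{Joachim}) applied to the self-adjoint regular operator $iS$, Fredholmness of $S$ is precisely equivalent to $(\one + (iS)^{2})^{-1} = (\one-S^{2})^{-1}$ lying in $\mathbb{K}_{B}(E)$. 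Combining the three verified properties completes the check that the triple is a Real Kasparov module.
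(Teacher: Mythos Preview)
The paper gives no proof of this proposition—it is simply cited from \cite{Joachim} and \cite{vdDungen17}—so there is no argument to compare against. Your verification of the module structure, the left $\Cl_{1,0}$-action, and the algebraic properties of $F$ (Real, odd, self-adjoint, graded-commuting with $\pi(\gamma)$) is correct, as is the computation $\one - F^{2} = (\one - S^{2})^{-1} \otimes \one$.

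The gap is in the final step. Fredholmness of a regular skew-adjoint operator $S$ is \emph{not} equivalent to $(\one - S^{2})^{-1} \in \mathbb{K}_{B}(E)$; the latter is compactness of the resolvent, a strictly stronger condition. For instance, $S = i\,\one$ on an infinite-dimensional $E_{B}$ is bounded, skew-adjoint and invertible (hence Fredholm), yet $(\one - S^{2})^{-1} = \tfrac{1}{2}\one$ is not compact. For such $S$ the displayed triple is literally \emph{not} a Kasparov module, since $\Cl_{1,0}$ is unital and $\pi(\one)(\one-F^{2})$ must be compact. You have misread the cited characterisation: what it gives is that $S$ is Fredholm iff the bounded transform $S(\one-S^{2})^{-1/2}$ is invertible in $\calQ_{B}(E)$, not that the resolvent is compact. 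Producing an honest Kasparov-module representative of the resulting $K$-theory class requires replacing the bounded transform $t \mapsto t(1+t^{2})^{-1/2}$ by a normalising function $\chi$ with $\chi^{2}\equiv 1$ outside a neighbourhood of $0$ disjoint from the essential spectrum of $iS$; then $\one - \chi(-iS)^{2}$ is a compactly supported function of $iS$ on a spectral region where the projections are compact. The proposition as printed elides this distinction, and your argument inherits the imprecision rather than repairing it.
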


In the case that $S$ has a compact resolvent, the the class $[S] \in KKR(\Cl_{1,0}, B)$ from 
Proposition \ref{prop:Fred_to_KasMod} can be directly represented by the unbounded Real Kasparov module,
$$
  [S] = \Big[ \big( \Cl_{1,0}, \, E_B  \otimes \bigwedge\nolimits^{\! *} \C, \, S  \otimes \rho \big) \Big] \in KKR(\Cl_{1,0},B) \cong KO_1(B^\frakr).
$$

\subsection{The Cayley transform of odd self-adjoint unitaries} \label{subsec:CayleyOdd}

Let $A$ be a $\Z_2$-graded and $\sigma$-unital $C^{*,\frakr}$-algebra and 
$E_A$ a countably generated and $\Z_2$-graded Real Hilbert $C^*$-module over $A$. 
We suppose that $\End_A(E)$ contains as many odd self-adjoint unitaries as we need. 
We can always ensure this by taking a graded tensor product $E'_A = E_A \hox \bigwedge^* \C^n$, where 
$\End_A( E' ) \simeq \End_A(E) \hox \Cl_{n,n}$.
Let us then fix a representation of $\Cl_{k,0}$ on $E_A$ with generators $\{\gamma_j\}_{j=1}^k$.
We are interested in the set
\begin{equation} \label{eq:OSU_endo_group}
 \calO_{E_A}^k = \big\{ V \in \End_A(E) \,:\, V\text{ odd,}\,\, V=V^*=V^{-1} = V^\mathfrak{r}, \,\, V\gamma_j = -\gamma_j V \,\text{ for } j=1,\ldots,k \}.
\end{equation}

\begin{lemma}[{cf. \cite[Lemma 4.5]{BKRCayley}}] \label{lem:Cayley_selfadj}
Given $V_0,V_1 \in \calO_{E_A}^k$ with $\|V_0 -V_1\|_{\calQ_A(E)} < 2$, define the unbounded operator 
$$
 \mathcal{C}_{V_0}(V_1) = V_0(V_1+V_0)(V_1-V_0)^{-1}, \qquad \qquad
 \Dom(\mathcal{C}_{V_0}(V_1)) = (V_1-V_0)E_A.
$$
Then $\mathcal{C}_{V_0}(V_1)$ is odd, self-adjoint, Real, regular and 
anti-commutes with $\{V_0,\gamma_1,\ldots,\gamma_{k}\}$ on 
$\ol{(V_0-V_1)E}_A$, the closure of $\Dom(\mathcal{C}_{V_0}(V_1))$ in the 
module norm of $E_A$. Furthermore, 
 $F_{\calC_{V_0}(V_1)} = \calC_{V_0}(V_1)(1+ \calC_{V_0}(V_1)^2)^{-1/2}$ satisfies
$\| \one - F_{\calC_{V_0}(V_1)}^2 \|_{\calQ_A(E)} < 1$.
\end{lemma}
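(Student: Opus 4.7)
The argument extends \cite[Lemma 4.5]{BKRCayley} to track the additional anticommutation data for $V_0$ and the Clifford generators. To begin, I would reduce to the standard Cayley transform of a single unitary by setting $W := V_0 V_1$, which is a Real, even unitary on $E_A$. Using $V_0^2 = \one$ one directly verifies $V_1 - V_0 = V_0(W-\one)$ and $V_0(V_1+V_0) = W+\one$, so that on $\Dom(\calC_{V_0}(V_1)) = (V_1-V_0)E_A = V_0(W-\one)E_A$,
\[
\calC_{V_0}(V_1) = (W+\one)(W-\one)^{-1}V_0 =: XV_0.
\]
The hypotheses on $V_0, V_1$ produce two key relations for the auxiliary unitary: $V_0 W V_0 = W^{-1}$ and $W\gamma_j = \gamma_j W$ for $j=1,\ldots,k$.

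The next step is to analyze $X = (W+\one)(W-\one)^{-1}$ through the continuous functional calculus for $W$ applied to $f(z) = (z+1)/(z-1)$. This is precisely the Cayley transform treated in \cite[Lemma 4.5]{BKRCayley}, producing a Real, skew-adjoint, even, regular operator on $\overline{(W-\one)E_A}$. The identities $\overline{f(z)} = -f(z)$ and $f(1/z) = -f(z)$ for $|z|=1$, combined with $V_0 W V_0 = W^{-1}$ and $W\gamma_j = \gamma_j W$, then yield $V_0 X V_0 = -X$ and $X\gamma_j = \gamma_j X$. The claimed properties of $\calC_{V_0}(V_1) = XV_0$ follow by direct algebra: oddness, Reality and regularity are inherited from the two factors; self-adjointness comes from the computation $(XV_0)^* = V_0 X^* = -V_0 X = XV_0$, using skew-adjointness of $X$ and its anticommutation with $V_0$; and the graded commutators $\{\calC_{V_0}(V_1), V_0\} = X + V_0 X V_0 = 0$ and $\{\calC_{V_0}(V_1), \gamma_j\} = X(V_0\gamma_j + \gamma_j V_0) = 0$ vanish at once.

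Finally, for the norm estimate, the same anticommutation gives $\calC_{V_0}(V_1)^2 = XV_0 X V_0 = -X^2$, so
\[
\one - F_{\calC_{V_0}(V_1)}^2 = (\one + \calC_{V_0}(V_1)^2)^{-1} = (\one - X^2)^{-1},
\]
which via the calculus on $W = e^{i\theta}$ is $\sin^2(\theta/2) = |1-z|^2/4$, i.e.\ $(\one-W)^*(\one-W)/4$. Since $\one - W = -V_0(V_1-V_0)$ and $V_0$ is a unitary, $\|\one-W\|_{\calQ_A(E)} = \|V_1-V_0\|_{\calQ_A(E)}$, and hence
\[
\|\one - F_{\calC_{V_0}(V_1)}^2\|_{\calQ_A(E)} = \frac{\|V_1-V_0\|_{\calQ_A(E)}^2}{4} < 1
\]
by hypothesis. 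The main technical obstacle will be justifying the functional calculus for the unbounded regular operator $X$ on the Hilbert $C^*$-module $\overline{(W-\one)E_A}$; this replays the corresponding argument in \cite[Lemma 4.5]{BKRCayley}, and the quantitative hypothesis $\|V_1-V_0\|_{\calQ_A(E)} < 2$ enters precisely by removing $-1$ from the Calkin-algebra spectrum of $W$, ensuring that $\sin^2(\theta/2)$ is uniformly below $1$.
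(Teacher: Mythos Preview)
Your argument is correct and arrives at the same key identity $\one - F_{\calC_{V_0}(V_1)}^2 = \tfrac{1}{4}(V_0-V_1)^2$, but the route differs from the paper's. The paper works directly with $\calC_{V_0}(V_1)$: it verifies by hand that $V_0$ and the $\gamma_j$ preserve the domain and anticommute with $\calC_{V_0}(V_1)$, then establishes self-adjointness and regularity by explicitly computing the bounded transform $F_{\calC_{V_0}(V_1)} = \tfrac{1}{2}V_0(V_1V_0+1)(V_1V_0-1)^{-1}(2-V_0V_1-V_1V_0)^{1/2}$, checking that $(\one - F^2)^{1/2} = \tfrac{1}{2}|V_0-V_1|$ has dense range $(V_1-V_0)E_A$, and invoking \cite[Theorem 10.4]{Lance}. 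Your approach instead factors $\calC_{V_0}(V_1) = XV_0$ with $X = (W+\one)(W-\one)^{-1}$ the classical Cayley transform of the single unitary $W = V_0V_1$, imports regularity and skew-adjointness of $X$ from the cited lemma, and then reads off all properties of $XV_0$ from the conjugation relation $V_0 W V_0 = W^{-1}$ (equivalently $f(1/z)=-f(z)$) and the commutation $W\gamma_j = \gamma_j W$. Your factorisation is more conceptual and makes the anticommutation with $V_0$ and the $\gamma_j$ essentially automatic; the paper's direct computation is more self-contained and yields the explicit formula for $F_{\calC_{V_0}(V_1)}$ along the way, which is reused later (e.g.\ in Proposition~\ref{prop:Coarse_KHom_class_noSymm}).
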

\begin{proof}
 We first note that because $V_0$ and $V_1$ are self-adjoint unitaries, 
$V_0(V_1 \pm V_0) = (V_0 \pm V_1)V_1$. In particular, for any $e \in E_A$,
\[
  V_0 ( V_1 - V_0) e =  -(V_1-V_0)V_1 e \in (V_1-V_0)E_A.
\]
and so $V_0$ preserves the domain of $\calC_{V_0}(V_1)$. 
Because $V_0$ and $V_1$ anti-commute with $\{\gamma_1,\ldots,\gamma_{k}\}$, 
we see that $\{\gamma_j\}_{j=1}^k$ preserve $\Dom(\mathcal{C}_{V_0}(V_1))$ 
and a simple computation gives that these operators anti-commute with 
$\calC_{V_0}(V_1)$.  We similarly have that on $\Dom(\calC_{V_0}(V_1))$
\begin{align*}
  V_0 \, \calC_{V_0}(V_1) &= (V_1 + V_0) V_1 V_1 (V_1 - V_0)^{-1} 
  = V_0(V_1 + V_0) \big( (V_1-V_0)V_1 \big)^{-1} \\
  &= V_0(V_1 + V_0) \big( V_0(V_0 - V_1) \big)^{-1} 
  = - V_0 (V_1+ V_0)( V_1-V_0) V_0 \\
  &= - \calC_{V_0}(V_1) \,V_0.
\end{align*}
It is immediate that $\calC_{V_0}(V_1)$ is Real and odd. 

To prove the that $\calC_{V_0}(V_1)$ is self-adjoint and regular, one considers the bounded 
transform $F_{\calC_{V_0}(V_1)} = \calC_{V_0}(V_1)(1+ \calC_{V_0}(V_1)^2)^{-1/2}$. 
To make sense of this operator, we first compute using the normality of $V_0V_1$
\begin{align*}
  \big( 1+ \calC_{V_0}(V_1)^2 \big)^{-1/2} &= \big( 1 + V_0(V_1V_0 +1)(V_1V_0 - 1)^{-1} V_0 (V_0V_1+1)(1-V_0V_1)^{-1} \big)^{-1/2} \\
  &=\big( 1 - (2 + V_1V_0 + V_0V_1)(-2 +V_1V_0+V_0V_1)^{-1} \big)^{-1/2} \\
  &= \big( ( 2 - V_1 V_0 - V_0 V_1 + 2 + V_1 V_0 + V_0 V_1) (2 - V_0V_1 - V_1V_0)^{-1} \big)^{-1/2} \\
  &= \big( 4 (2 - V_0 V_1 - V_1 V_0)^{-1} \big)^{-1/2} 
  = \frac{1}{2} (2 - V_0 V_1 - V_1 V_0)^{1/2}.
\end{align*}
Therefore we can write $F_{\calC_{V_0}(V_1)} = \frac{1}{2} V_0(V_1V_0+1)(V_1V_0-1)^{-1}(2 - V_0 V_1 - V_1 V_0)^{1/2}$.
It is shown in~\cite[Lemma 4.5]{BKRCayley} that $F_{\calC_{V_0}(V_1)}$ is self-adjoint and 
has norm bounded by $1$. Then using that $V_0$ commutes with $V_0 V_1 + V_1 V_0$ and 
the normality of $V_1V_0$,
\begin{align}  \label{eq:Cayley_F^2}
  F_{\calC_{V_0}(V_1)}^2 &= \frac{1}{4} \big(  V_0(V_1V_0+1)(V_1V_0-1)^{-1}(2 - V_0 V_1 - V_1 V_0)^{1/2} \big)^2  \nonumber \\
  &= -\frac{1}{4} (V_1V_0+1)(V_0V_1+1)( V_1V_0-1)^{-1}(1-V_1V_0)^{-1} (2 - V_0 V_1 - V_1 V_0) \nonumber \\
  &= \frac{1}{4} (V_1V_0+1)(V_0V_1+1) = \frac{1}{4} ( 2+ V_0V_1 + V_1V_0).
\end{align}
We therefore have that
\begin{align} \label{eq:Cayley_1-F^2}
  \one - F_{\calC_{V_0}(V_1)}^2 &=  \frac{1}{2} - \frac{1}{4} \big( V_0V_1 + V_1V_0\big) 
  = \frac{1}{4} \big( 2 - V_0 V_1 - V_1 V_0 \big) 
  = \frac{1}{4}(V_0 -V_1)^2.
\end{align}
In particular, $\one - F_{\calC_{V_0}(V_1)}$ is positive and $(\one - F_{\calC_{V_0}(V_1)})^{1/2}$ has dense range 
$(V_1-V_0)E_A$. Applying~\cite[Theorem 10.4]{Lance}, 
$\calC_{V_0}(V_1)$ is self-adjoint and regular.

Finally using Equation \eqref{eq:Cayley_1-F^2},
\[
  \big\| \one - F_{\calC_{V_0}(V_1)}^2 \big\|_{\calQ_A(E)} = \frac{1}{4} \big\| (V_0-V_1)^2  \big\|_{\calQ_A(E)} 
  < 1.   \qedhere
\]
\end{proof}

The operator $\calC_{V_0}(V_1)$ maps $(V_1-V_0)E_A$ to $(V_1+V_0)E_A$ in analogy 
to the standard Cayley transform for unitary operators on Hilbert spaces. 
Because the operator $V_0-V_1$ need not be dense in $E_A$ nor have closed range,
the operator $\mathcal{C}_{V_0}(V_1)$ is a densely-defined unbounded operator on the submodule 
$\ol{(V_0-V_1)E}_A \subset E_A$.  One may consider $\calC_{V_0}(V_1)$ as densely defined 
right $A$-linear map $\ol{(V_0-V_1)E}_A \to E_A$ such that it is self-adjoint and regular on 
$\ol{(V_0-V_1)E}_A$. 
Because the operators $\{\gamma_j\}_{j=1}^k$ anti-commute with $V_0$ and $V_1$, they 
 restrict to mutually anti-commuting odd self-adjoint unitaries acting on $\ol{(V_0-V_1)E}_A$.

\begin{prop} \label{prop:OSU_pair_class}
Let $V_0,V_1 \in \calO_{E_A}^k$ with $\|V_0 -V_1\|_{\calQ_A(E)} < 2$. 
Then the triple
\[
  \Big( \Cl_{k+1, 0}, \, E_A, \, F_{\calC_{V_0}(V_1)} = \calC_{V_0}(V_1)(1+ \calC_{V_0}(V_1)^2)^{-1/2} \Big)
\]
is a Real Kasparov module with left Clifford generators $\{V_0, \gamma_1,\ldots,\gamma_k\}$. 
If $V_0-V_1 \in \mathbb{K}_A(E)$, then the class in $KKR(\Cl_{k+1,0}, A)$ of this Kasparov 
module can be represented by the unbounded Kasparov module 
\[
   \Big( \Cl_{k+1,0}, \,  \ol{(V_1-V_0)E}_A, \, \mathcal{C}_{V_0}(V_1) \Big)
\]
with Clifford generators $\{V_0, \gamma_1,\ldots,\gamma_k\}$.
\end{prop}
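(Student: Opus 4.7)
The plan is to systematically verify the Real Kasparov module axioms for the bounded triple, leveraging the explicit computations of Lemma~\ref{lem:Cayley_selfadj}. First, since $V_0, \gamma_1, \ldots, \gamma_k$ are mutually anti-commuting odd self-adjoint Real unitaries with square one (by $V_0 \in \calO_{E_A}^k$), they furnish a Real $\Z_2$-graded $*$-representation $\pi\colon \Cl_{k+1,0} \to \End_A(E)$. Oddness, Realness, self-adjointness and boundedness of $F_{\calC_{V_0}(V_1)}$ are all immediate from Lemma~\ref{lem:Cayley_selfadj} together with the closed-form expression derived in its proof.

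The two identities from Lemma~\ref{lem:Cayley_selfadj} drive the remaining verifications. Continuous functional calculus applied to the anti-commutation relations $V_0 \calC_{V_0}(V_1) = -\calC_{V_0}(V_1) V_0$ and $\gamma_j \calC_{V_0}(V_1) = -\calC_{V_0}(V_1) \gamma_j$ (valid on the domain of $\calC_{V_0}(V_1)$) transfers to the bounded transform via the odd function $x \mapsto x(1+x^2)^{-1/2}$; the explicit formula further shows these anti-commutations extend exactly to all of $E_A$ (the apparent singularity of $(V_1V_0-1)^{-1}$ being cancelled by the zero of $(2-V_0V_1-V_1V_0)^{1/2}$), so the graded commutator conditions hold with equality and lie trivially in $\mathbb{K}_A(E)$. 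The identity $1 - F^2_{\calC_{V_0}(V_1)} = \tfrac{1}{4}(V_0 - V_1)^2$ combined with the hypothesis $\|V_0-V_1\|_{\calQ_A(E)} < 2$ yields $\|1-F^2\|_{\calQ_A(E)} < 1$, from which a class in $KKR(\Cl_{k+1,0}, A)$ is obtained via the standard Kasparov perturbation argument, replacing $F$ by a nearby operator for which $1-F^2$ is literally compact.

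For the second statement, the strengthening $V_0 - V_1 \in \mathbb{K}_A(E)$ makes $1-F^2 = \tfrac{1}{4}(V_0-V_1)^2$ genuinely compact, eliminating the need for perturbation. For the unbounded representative, the computation in the proof of Lemma~\ref{lem:Cayley_selfadj} yields $(1+\calC_{V_0}(V_1)^2)^{-1/2} = \tfrac{1}{2}|V_0-V_1|$, which is now compact on $\overline{(V_1-V_0)E}_A$; combined with the other properties of $\calC_{V_0}(V_1)$ from Lemma~\ref{lem:Cayley_selfadj}, this makes $(\Cl_{k+1,0}, \overline{(V_1-V_0)E}_A, \calC_{V_0}(V_1))$ an unbounded Real Kasparov module whose bounded transform coincides with $F_{\calC_{V_0}(V_1)}$ restricted to this submodule. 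A Real adaptation of the Baaj--Julg theorem then identifies the $KKR$-classes of the bounded and unbounded triples.

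The principal subtlety I anticipate is reconciling the bounded triple on all of $E_A$ with the bounded transform on the submodule: the closed-form expression gives $F_{\calC_{V_0}(V_1)} = 0$ on $\ker(V_0-V_1)$ while the Clifford action $\pi$ remains nontrivial there, so the complementary summand is not manifestly a Kasparov module in the strict sense. I would handle this by noting that on $\ker(V_0-V_1)$ one has $V_0 = V_1$ exactly, and would then argue that the contribution from this complementary piece is trivial in $KKR(\Cl_{k+1,0}, A)$, either via a direct operator homotopy or by exhibiting it as (stably) a degenerate Kasparov module after a compact perturbation, so that the two $KKR$-classes agree.
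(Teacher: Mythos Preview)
Your proposal is correct and follows essentially the same route as the paper: both invoke Lemma~\ref{lem:Cayley_selfadj} for the algebraic properties of $F_{\calC_{V_0}(V_1)}$, use the identity $1-F^2=\tfrac{1}{4}(V_0-V_1)^2$ to pass from $\|V_0-V_1\|_{\calQ_A(E)}<2$ to Fredholmness, and compute $(1+\calC_{V_0}(V_1)^2)^{-1/2}=\tfrac{1}{2}|V_0-V_1|$ to obtain compact resolvent when $V_0-V_1\in\mathbb{K}_A(E)$. You are in fact more careful than the paper, which simply declares the bounded triple a Kasparov module from Fredholmness alone (without the perturbation step you mention) and asserts that the unbounded triple is an ``unbounded lift'' without addressing the complementary-submodule issue you correctly flag.
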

\begin{proof}
By Lemma \ref{lem:Cayley_selfadj}, the estimate 
$\|V_0 -V_1\|_{\calQ_A(E)} < 2$ implies 
that $\| \one - F_{\calC_{V_0}(V_1)}^2 \|_{\calQ_A(E)} < 1$ and so 
$F_{\calC_{V_0}(V_1)}$ is invertible in the Calkin algebra and, hence, 
Fredholm. Because $\calC_{V_0}(V_1)$ anti-commutes with $\{V_0, \gamma_1,\ldots, \gamma_k\}$, 
so does $F_{\calC_{V_0}(V_1)}$. Thus the triple 
$\big( \Cl_{k+1, 0}, \, E_A, \, F_{\calC_{V_0}(V_1)}\big)$ is a Real Kasparov module.

Similar to Equations \eqref{eq:Cayley_F^2} and \eqref{eq:Cayley_1-F^2}, we compute that
\begin{align*}
  \one + \calC_{V_0}(V_1)^2 &= 1 + (2+ V_0V_1 + V_1V_0)( 2 - V_0V_1 - V_1V_0)^{-1} \\
    &= 4(2 - V_0V_1 - V_1V_0)^{-1} = 4(V_1-V_0)^{-2}.
\end{align*}
Therefore 
$(\one+\calC_{V_0}(V_1)^2)^{-1/2} = \frac{1}{2}|V_0-V_1|$, which will be compact 
if $V_0-V_1 \in \mathbb{K}_A(E)$. This result combined with Lemma \ref{lem:Cayley_selfadj} 
shows that $\big( \Cl_{k+1,0},  \, \ol{(V_1-V_0)E}_A,  \, \mathcal{C}_{V_0}(V_1) \big)$ is an 
unbounded Kasparov module and is an unbounded lift of 
$\big( \Cl_{k+1, 0}, \, E_A, \, F_{\calC_{V_0}(V_1)}\big)$.
\end{proof}

\subsection{The Cayley transform of skew-adjoint ungraded unitaries} \label{subsec:skew_cayley}

Fix a $\sigma$-unital, ungraded and Real $C^*$-algebra $B$ and 
an ungraded and countably generated Real Hilbert $C^*$-module $E_B$.  
We also suppose that there exist operators $\{\kappa_j\}_{j=1}^{n} \subset \End_B(E)$ 
such that for all $j,k \in \{1,\ldots,n\}$,
$$
  \kappa_j^*=-\kappa_j, \qquad \qquad  \kappa_j^\frakr = \kappa_j, \qquad \qquad \kappa_j \kappa_k + \kappa_k\kappa_j = -2\delta_{j,k}.
$$
Such an assumption can always be satisfied by taking an ungraded (Real) representation of 
$\Cl_{0,n}$ on $\C^{\nu}$ and considering $E'_B = E_B \otimes \C^{\nu}$.
We then define the set
$$
  \calU_{E_B}^n = \big\{ J \in \End_B(E)\,:\, J=J^\mathfrak{r} = -J^*,\,\,   J^2=-1, \,\, \kappa_j J = -J \kappa_j \,\text{ for all } j=1,\ldots,n \big\}.
$$

\begin{lemma}
Let $J_0,J_1 \in \calU_{E_B}^n$ be such that $\|J_0 - J_1 \|_{\calQ_B(E)} < 2$. Define 
the operator
$$
  \calC_{J_0}(J_1) = J_0(J_1 +J_0)(J_1 - J_0)^{-1}, \qquad 
  \Dom\big(\calC_{J_0}(J_1)\big) = (J_1 - J_0)E_B.
$$
Then $\calC_{J_0}(J_1)$ is an unbounded, Real, regular and skew-adjoint operator on 
$\ol{(J_1-J_0)E}_B$ that 
anti-commutes with $\{J_0,\kappa_1,\ldots,\kappa_{n}\}$. 
\end{lemma}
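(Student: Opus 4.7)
The plan is to mimic the proof of Lemma \ref{lem:Cayley_selfadj}, adjusting signs for $J_0^2 = J_1^2 = -1$ in place of $V_0^2 = V_1^2 = 1$.

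First I would handle domain invariance, anti-commutation, and Real-ness by direct algebraic manipulation. The basic identities
\begin{align*}
  J_0(J_1 - J_0) &= J_0 J_1 + 1 = -(J_1 - J_0)\, J_1, \\
  J_0(J_1 + J_0) &= J_0 J_1 - 1 = (J_0 + J_1)\, J_1,
\end{align*}
combined with the anti-commutation of each $\kappa_j$ with both $J_0$ and $J_1$, imply that $J_0$ and the $\kappa_j$'s preserve the domain $(J_1 - J_0) E_B$. Assembling these same identities yields the anti-commutation $J_0\, \calC_{J_0}(J_1) = -\calC_{J_0}(J_1)\, J_0$ on the domain, and the $\kappa_j$ case is analogous to the corresponding calculation in Lemma \ref{lem:Cayley_selfadj}. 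Realness follows at once from $J_i^\frakr = J_i$ and the compatibility of $\frakr$ with adjoints and inverses.

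The main technical step is skew-adjointness and regularity on $\ol{(J_1 - J_0) E}_B$. I would reduce this to Lemma \ref{lem:Cayley_selfadj} by tensoring with $\Cl_{0,1}$: let $\rho$ be its odd Real skew-adjoint generator ($\rho^2 = -1$, $\rho^* = -\rho$), set $\tilde E_B := E_B \hox \Cl_{0,1}$, and define
$$V_i := J_i \hox \rho, \qquad \kappa_j'' := \kappa_j \hox \rho.$$
A short calculation shows that $V_0, V_1, \kappa_1'', \ldots, \kappa_n''$ are odd Real self-adjoint unitaries on $\tilde E_B$ that mutually anti-commute and square to $1$, so $V_0, V_1 \in \calO_{\tilde E_B}^n$ in the sense of \eqref{eq:OSU_endo_group} with Clifford generators the $\kappa_j''$; moreover $\|V_0 - V_1\|_{\calQ_B(\tilde E)} = \|J_0 - J_1\|_{\calQ_B(E)} < 2$. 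Lemma \ref{lem:Cayley_selfadj} then gives that $\calC_{V_0}(V_1)$ is odd, Real, self-adjoint and regular on $\ol{(V_1 - V_0)\tilde E}_B$. A direct computation using $\rho^2 = -1$ and $\rho^{-1} = -\rho$ yields
$$\calC_{V_0}(V_1) \;=\; \calC_{J_0}(J_1) \hox \rho,$$
and factoring out the unitary $\rho$ transfers self-adjointness of $\calC_{V_0}(V_1)$ into skew-adjointness of $\calC_{J_0}(J_1)$ (the sign flip arising from $\rho^* = -\rho$), while regularity and domain properties pass through unchanged.

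The only real obstacle is careful bookkeeping of Clifford conventions and the interplay of the Real structure with the odd skew-adjoint $\rho$; it is precisely this $\rho^* = -\rho$ that converts the self-adjoint conclusion of Lemma \ref{lem:Cayley_selfadj} into the skew-adjoint conclusion sought here. Once the dictionary between the tensored and original settings is set up, the remaining assertions follow from the already-proved self-adjoint case without further computation.
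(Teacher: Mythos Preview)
Your proposal is correct and follows essentially the same approach as the paper: the paper's proof also tensors with $\Cl_{0,1}$, sets $V_i = J_i \otimes \rho$ and $\kappa_j \otimes \rho$, applies Lemma~\ref{lem:Cayley_selfadj} to these odd self-adjoint unitaries, and reads off the skew-adjoint conclusion from $\calC_{J_0\otimes\rho}(J_1\otimes\rho) = \calC_{J_0}(J_1)\otimes\rho$. Your additional explicit algebraic identities for domain invariance and anti-commutation are a slight elaboration of what the paper leaves implicit, but the core reduction is identical.
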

\begin{proof}
Given $\{J_0,J_1,\kappa_1,\ldots,\kappa_{n}\}$ acting on $E_B$ we can consider 
$\{J_0\otimes \rho, J_1\otimes \rho, \kappa_1\otimes \rho,\ldots,\kappa_{n}\otimes \rho\}$ acting on 
$(E\otimes \Cl_{0,1})_{B\otimes \Cl_{0,1}}$ and with $\rho$ the skew-adjoint generator. 
All operators are now odd self-adjoint unitaries and so we can apply Lemma \ref{lem:Cayley_selfadj}. 
Expressing these results in terms of operators 
on $E_B$, we get the desired results, e.g.,
$\calC_{J_0\otimes \rho}(J_1\otimes \rho) = \calC_{J_0}(J_1) \otimes \rho$, so the 
self-adjointness and regularity of $\calC_{J_0\otimes \rho}(J_1\otimes \rho)$ gives the 
skew-adjointess and regularity of $\calC_{J_0}(J_1)$.
\end{proof}

Note that because $\kappa_j$ anti-commute with $J_0$ and $J_1$ for all $j\in \{1,\ldots,n\}$, the operators 
$\kappa_j$ also restrict to skew-adjoint unitaries on the submodule $\ol{(J_0-J_1)E}_B$. 
An adaptation of Proposition \ref{prop:OSU_pair_class} to the ungraded and skew-adjoint setting gives the following.

\begin{prop} \label{prop:skew_cayley_kk}
Let $J_0,J_1 \in \calU_{E_B}^n$ be such that $\|J_0 - J_1 \|_{\calQ_B(E)} < 2$. 
Then the triple
\[
  \Big( \Cl_{n+2,0}, \, E_B \otimes \bigwedge\nolimits^{\! *} \C, \, 
     \calC_{J_0}(J_1)(\one  - \calC_{J_0}(J_1)^2)^{-1/2}  \otimes \rho \Big)
\]
is a Real Kasparov module, where the $\Cl_{n+2,0}$-action has generators 
$\{\one \otimes \gamma, J_0 \otimes \rho, \kappa_1\otimes \rho, \ldots, \kappa_n \otimes \rho\}$. 
If $J_0 - J_1 \in \mathbb{K}_B(E)$, then the corresponding class in $KKR(\Cl_{n+2}, B)$ can be 
represented by the unbounded Real 
Kasparov module
$$
  \Big( \Cl_{n+2,0}, \, \ol{(J_1-J_0)E}_B \otimes \bigwedge\nolimits^{\! *} \C, \, \calC_{J_0}(J_1)\otimes \rho \Big),
$$
with $\Cl_{n+2,0}$-generators $\{\one \otimes \gamma, J_0 \otimes \rho, \kappa_1\otimes \rho, \ldots, \kappa_n \otimes \rho\}$.
\end{prop}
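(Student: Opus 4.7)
The proof will be a direct reduction to Proposition \ref{prop:OSU_pair_class} via the same tensoring trick that was used in the preceding lemma. The key observation is that multiplying a skew-adjoint Real ungraded unitary by the odd skew-adjoint generator $\rho \in \Cl_{0,1}$ turns it into an odd self-adjoint Real unitary, thereby converting the ungraded skew-adjoint situation into the graded self-adjoint one covered by Proposition \ref{prop:OSU_pair_class}.

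The plan is as follows. First I set $V_i = J_i \otimes \rho \in \End_B(E \otimes \bigwedge^{\!*}\C)$ for $i=0,1$ and $\tilde{\kappa}_j = \kappa_j \otimes \rho$ for $j = 1,\ldots,n$, and check that together with $\one \otimes \gamma$ these produce $n+2$ mutually anti-commuting odd self-adjoint Real unitaries on $E_B \otimes \bigwedge^{\!*}\C$. For instance, $(J_i \otimes \rho)^* = (-J_i) \otimes (-\rho) = J_i \otimes \rho$, $(J_i \otimes \rho)^2 = J_i^2 \otimes \rho^2 = (-1)(-1) = \one$, and the anticommutation $\{J_0 \otimes \rho,\, \kappa_j \otimes \rho\} = (J_0\kappa_j + \kappa_j J_0) \otimes \rho^2 = 0$ follows from the anticommutation of $J_0$ with $\kappa_j$. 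In particular $V_0, V_1 \in \calO_{E_B \otimes \wedge^{\!*}\C}^{\,n+1}$ with $k = n+1$ ambient Clifford generators $\{\one \otimes \gamma, \tilde\kappa_1,\ldots,\tilde\kappa_n\}$, and the Calkin-norm estimate $\|V_0 - V_1\|_{\calQ(E\otimes \wedge^{\!*}\C)} = \|(J_0 - J_1)\otimes \rho\| = \|J_0 - J_1\|_{\calQ_B(E)} < 2$ is preserved under the tensoring.

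Next I apply Proposition \ref{prop:OSU_pair_class} directly to the pair $(V_0, V_1)$ to obtain a Real Kasparov module $(\Cl_{n+2,0}, E_B \otimes \bigwedge^{\!*}\C, F_{\calC_{V_0}(V_1)})$ whose left Clifford action is generated by $\{V_0, \one \otimes \gamma, \tilde\kappa_1, \ldots, \tilde\kappa_n\}$, i.e.\ exactly the generators listed in the statement. To identify the operator, I use that $\calC_{V_0}(V_1) = \calC_{J_0}(J_1) \otimes \rho$ (which is how the previous lemma was reduced) together with $\rho^2 = -\one$, giving
\[
  \one + \calC_{V_0}(V_1)^2 = \one - \calC_{J_0}(J_1)^2 \otimes \one,
\]
whence $F_{\calC_{V_0}(V_1)} = \calC_{J_0}(J_1)(\one - \calC_{J_0}(J_1)^2)^{-1/2} \otimes \rho$, which is exactly the stated operator.

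For the unbounded representative under the compactness hypothesis, note that $V_0 - V_1 = (J_0 - J_1) \otimes \rho$ lies in $\mathbb{K}_B(E \otimes \bigwedge^{\!*}\C)$ iff $J_0 - J_1 \in \mathbb{K}_B(E)$, and that $\ol{(V_1 - V_0)(E \otimes \bigwedge^{\!*}\C)} = \ol{(J_1 - J_0)E}_B \otimes \bigwedge^{\!*}\C$ because $\rho$ is unitary on $\bigwedge^{\!*}\C$. The second half of Proposition \ref{prop:OSU_pair_class} then yields the unbounded Real Kasparov module with operator $\calC_{V_0}(V_1) = \calC_{J_0}(J_1) \otimes \rho$ on that submodule, completing the proof. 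The only step requiring real care is verifying the anticommutation and sign bookkeeping in the reduction to the graded self-adjoint case; everything else is an application of the previous proposition to the tensor-twisted data.
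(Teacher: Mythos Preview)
Your proof is correct and is precisely the adaptation of Proposition~\ref{prop:OSU_pair_class} that the paper intends: the paper does not write out a separate proof but states that the proposition follows from Proposition~\ref{prop:OSU_pair_class} via the same tensoring-by-$\rho$ trick used in the preceding lemma, and your reduction carries out exactly this argument with the correct sign and anticommutation bookkeeping.
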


We also list a few properties of the bounded transform $F_{\calC_{J_0}(J_1)} = \calC_{J_0}(J_1)(\one  - \calC_{J_0}(J_1)^2)^{-1/2}$. 
Completely  
analogous computations to those in Lemma \ref{lem:Cayley_selfadj}  give that 
$( \one - \calC_{J_0}(J_1)^2 \big)^{-1/2} = \frac{1}{2} ( 2 + J_1J_0 + J_0 J_1)^{1/2}$ and so 
\begin{equation}  \label{eq:Skew_Cayley_F}
       F_{\calC_{J_0}(J_1)} = \frac{1}{2} J_0 (J_1J_0 -1) (J_1 J_0 - 1)^{-1} ( 2 + J_1J_0 + J_0 J_1)^{1/2},
\end{equation}
which similar to  Equations \eqref{eq:Cayley_F^2} and \eqref{eq:Cayley_1-F^2} has the  properties 
\begin{equation} \label{eq:Skew_Cayley_F_properties}
  F_{\calC_{J_0}(J_1)}^2 = \frac{1}{4}( -2 + J_1J_0 + J_0 J_1), \qquad \qquad 
  \one + F_{\calC_{J_0}(J_1)}^2 = - \frac{1}{4} (J_0 - J_1)^2.
\end{equation}

\subsection{Van Daele $K$-theory and boundary maps} \label{subsec:vanDaele_and_boundary}

As it will be useful for several of our results below, we give a brief overview of van Daele $K$-theory, 
first considered in~\cite{vanDaele1, vanDaele2} and 
then further developed in~\cite{Roe04, Kellendonk15, Kubota16, BKRCayley}.

\begin{defn}
Let $A$ be a complex $C^*$-algebra. We say that $A$ has a balanced $\Z_2$-grading 
if $A$ contains an odd self-adjoint unitary. That is, there is an odd element $e$ satisfying $e=e^* = e^{-1}$. 
In particular, $A$ is unital. If $A$ has a real structure $\rs_A$, we also require $e^{\rs_A} = e$.
\end{defn}

For simplicity, we will assume that any $\Z_2$-graded and unital $C^*$-algebra $A$ is balanced 
graded, taking the tensor product $A = A' \hox \Cl_{1,1}$ if necessary. 
We can extend the grading and real structure of $A$ to $M_k(A)$ entrywise.

Let $V(A)=\bigsqcup_k\pi_0\big(\mbox{\rm OSU}(M_k(A))\big)$ be the disjoint union of homotopy classes of odd self-adjoint unitaries in $M_k(A)$, 
which is an abelian semigroup under direct summation,
$[x]+[y] = [x\oplus y]$. The Grothendieck group obtained from this semigroup will be denoted 
$GV(A)$. 
The semigroup homomorphism $d:V(A)\to \N$ taking the 
value $k$ on $M_k(A)$ induces a group
homomorphism $d:GV(A)\to\Z$.

\begin{defn} 
\label{defn:vD-K}
If $A$ is unital and has a balanced $\Z_2$-grading, then
the van Daele $K$-theory group of $A$ is $DK(A)=\Ker(d:GV(A)\to\Z)$.

If $A$ is not unital then we set
$DK(A)=\Ker(q_*:DK(A^\sim)\to DK(\C))$ where $q:A^\sim\to\C$ quotients the minimal unitisation $A^\sim$ by the ideal $A$.

Elements of $DK(A)$ are formal differences of odd self-adjoint unitaries, denoted by $[x]-[y]$. 
\end{defn}

It will also be useful to consider van Daele $K$-theory relative to a choice of basepoint~\cite{vanDaele1}.
For a balanced graded algebra $A$ and $e \in A$ an odd self-adjoint unitary, 
we let $V_e(A) =  \bigcup_{k}\pi_0\big(\mbox{\rm OSU}(M_{k}(A))\big)$ where we embed 
$M_k(A)$ into $M_{k+1}(A)$ via $x\mapsto x\oplus e$. 
Van Daele's $K$-theory group with a basepoint is defined as the Grothendieck
group $DK_e(A) = GV_e(A)$

The group $DK_e(A)$ does not depend on the choice of $e$ up to isomorphism~\cite[Proposition 2.12]{vanDaele1}.  
For any choice of basepoint $e$, $DK_e(A)\cong DK(A)$~\cite[Section 2.1.1]{BKRCayley}.

\begin{examples}
\begin{enumerate}
  \item If $A$ is a unital and trivially graded 
algebra, then odd self-adjoint unitaries of $A \otimes \Cl_{1,1}$ are of the form
$$
U = \frac{1}{2} (u+u^*) \otimes \gamma + \frac{1}{2} (u-u^*)\otimes \rho = \begin{pmatrix} 0 & u^* \\ u & 0 \end{pmatrix}
$$ 
where $u \in A$ is unitary and we make have made the identification $\{\gamma,\rho\} \sim \{\sigma_1, -i\sigma_2\}$. 
If $U^\frakr = U$, then $u^\frakr = u$ and 
the map $U\mapsto u$ 
furnishes an isomorphism $DK_{1\otimes \gamma}(A \otimes \Cl_{1,1}) \cong KO_1(A^\frakr)$.

  \item If $A$ is trivially graded and unital, then odd self-adjoint unitaries in $A \otimes \Cl_{1,0}$ take the 
  form $U = x \otimes \gamma$. Hence $x=x^*=x^\frakr$ is an even unitary in $A$ and the map 
  $DK_{1\otimes \gamma}(A) \ni [x \otimes \gamma] \mapsto \big[\tfrac{1-x}{2} \big] \in KO_0(A^\rs)$ is an isomorphism.
\end{enumerate}
\end{examples}

For  a balanced graded algebra $A$ with a closed, two-sided and graded ideal $I$ we 
define the relative van Daele group
$$
DK(A,\, A/I):=\{[x]-[y]:\,\, x,y\in \mbox{\rm OSU}(M_n(A)),\ \
x-y\in M_n(I)\}.
$$
Here $[\cdot]$ denotes homotopy classes in $\mbox{\rm OSU}(M_n(A))$.
As expected, there is an excision isomorphism $DK(I) \cong DK(A, A/I)$~\cite[Proposition 2.4]{BKRCayley}.

The excision isomorphism gives us a basepointed description of the van Daele $K$-theory for non-unital 
$C^*$-algebras $A$ such that $\mathrm{Mult}(A)$ is balanced graded. For such algebras we fix an odd 
self-adjoint unitary $e \in \mathrm{Mult}(A)$ and let 
$A^{\sim e}$ be the subalgebra of $\mathrm{Mult}(A)$ generated by $A$ and $e$. We  may then consider 
$DK_e(A) = DK( A^{\sim e}, \, A^{\sim e}/A)$, see~\cite[Section 2.1]{BKRCayley} for the full details.

\begin{remark} \label{rk:extra_clifford_gens_shifts_the_degree}
Let $A$ be balanced graded and fix the $\Cl_{k,0}$-generators $\{\gamma_1,\ldots, \gamma_k\}\subset A$. 
Recalling $\calO_{E_A}^k$ from Equation \eqref{eq:OSU_endo_group} on Page \pageref{eq:OSU_endo_group}, 
we may also wish to consider 
homotopy classes of odd self-adjoint unitaries in the set
\[
  \calO_A^k = \big\{ V = V^* = V^\frakr = V^{-1} \in A \,:\, V \text{ odd, }  V\gamma_j = -\gamma_j V \,\text{ for } j=1,\ldots,k  \big\}.
\]
Extending to matrices, we can define another semigroup of homotopy classes of odd self-adjoint unitaries 
in $\bigoplus_n M_n(A)$ that anti-commute with $\{\gamma_j^{\oplus n}\}_{j=1}^k$.
However, noting that $\Cl_{k,0} \hox \Cl_{0,k} \cong \End(\bigwedge^* \C^k) \cong M_{2^k}(\C)$ and that the 
representations of $\Cl_{k,0}$ and $\Cl_{0,k}$ on $\bigwedge^*\C^k$ graded-commute, homotopy classes of odd self-adjoint unitaries in 
$\bigoplus_n M_n(A)$ that anti-commute (graded-commute) with $\{\gamma_j^{\oplus n}\}_{j=1}^k$ are equivalent 
to homotopy classes of odd self-adjoint unitaries in $\bigoplus_m M_m( A \hox \Cl_{0,k})$. 
Hence from the perspective of $K$-theory, it suffices to consider  $DK(A \hox \Cl_{0,k})$.
\end{remark}

We can use the Cayley transform of odd self-adjoint unitaries from Section \ref{subsec:CayleyOdd} to relate 
van Daele $K$-theory to $KKR$-theory.

\begin{thm}[{\cite[Theorem 4.15]{BKRCayley}}] \label{thm:Cayley_iso_extra_Clifford}
Let $A$ be a Real $C^*$-algebra such that $\mathrm{Mult}(A)$ is balanced graded. Suppose that 
$V_0, \, V_1 \in \mathrm{Mult}(A)$ are odd self-adjoint unitaries anti-commuting with 
the $\Cl_{k,0}$-generators  $\{\gamma_1,\ldots, \gamma_k\}\subset \mathrm{Mult}(A)$ and such that 
$V_0 - V_1 \in A$. Then the unbounded Real Kasparov module from Proposition \ref{prop:OSU_pair_class},
\[
   (V_0,V_1) \mapsto   \big( \Cl_{k+1,0}, \, \ol{(V_1-V_0)A}_A, \, \mathcal{C}_{V_0}(V_1) \big) 
\]
gives an isomorphism
$DK(\mathrm{Mult}(A) \hox \Cl_{0,k}, \, \mathrm{Mult}(A) \hox \Cl_{0,k} / A\hox \Cl_{0,k}) \cong KKR(\Cl_{k+1,0}, A)$.
\end{thm}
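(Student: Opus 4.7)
The plan is to verify that the assignment $\Phi : [V_0] - [V_1] \mapsto \bigl[\bigl(\Cl_{k+1,0},\, \ol{(V_1 - V_0)A}_A,\, \mathcal{C}_{V_0}(V_1)\bigr)\bigr]$ is well-defined on the relative van Daele group and admits an inverse. Well-definedness amounts to checking three features. First, $\Phi$ respects direct sums: the Cayley transform of $(V_0\oplus V_0',\, V_1\oplus V_1')$ on $A^{n+n'}$ is the direct sum of the individual Cayley transforms, representing the sum of the two classes in $KKR(\Cl_{k+1,0},A)$. Second, the constant pair $V_0 = V_1$ is sent to the zero class, since the associated Hilbert $A$-module $\ol{0\cdot A}_A$ is trivial and the resulting Kasparov module is degenerate. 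Third, $\Phi$ is homotopy-invariant: a continuous path $t \mapsto (V_0^t, V_1^t)$ of admissible pairs with $\|V_0^t - V_1^t\|_{\calQ_A(A)} < 2$ and $V_0^t - V_1^t \in A \hox \Cl_{0,k}$ produces, via the explicit formulas for $F_{\mathcal{C}_{V_0^t}(V_1^t)}$ from Lemma \ref{lem:Cayley_selfadj}, a norm-continuous operator homotopy of bounded Kasparov modules, preserving the $KKR$-class.

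For the inverse I would use Kasparov stabilisation and the freedom to add degenerate modules to bring any class in $KKR(\Cl_{k+1,0}, A)$ into the form of a bounded Real Kasparov module on a standard module $\calH_A \otimes \bigwedge^{\!*}\C^k$ whose Clifford action is in standard form, with one distinguished odd self-adjoint Real unitary $V_0$ playing the role of the first Clifford generator. The operator $F$ is then odd, self-adjoint and Real, anti-commutes with $V_0$ and the $\gamma_j$, and satisfies $1-F^2$ compact modulo the $A$-action. Inverting the Cayley transform via the identity $1 - F^2 = \tfrac14 (V_0 - V_1)^2$ from Equation \eqref{eq:Cayley_1-F^2} then lets one solve for $V_1$ as an odd self-adjoint Real unitary compatible with $V_0$ whose bounded transform $F_{\mathcal{C}_{V_0}(V_1)}$ coincides with $F$ up to operator homotopy. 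Compactness of $1-F^2$ forces $V_0 - V_1 \in A \hox \Cl_{0,k}$, giving an element of the relative van Daele group.

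The step I expect to be the main obstacle is showing that this reconstruction respects the equivalence relations on both sides: operator homotopies and unitary equivalences of Kasparov modules should lift to van Daele homotopies of pairs. In the complex ungraded setting this is classical (cf.\ \cite{vanDaele2}), but the combination of the real structure $\frakr$, the auxiliary Clifford anti-commutation relations, and the $\Cl_{0,k}$ tensor factor mandated by Remark \ref{rk:extra_clifford_gens_shifts_the_degree} means the stabilisation and polar-type reconstruction arguments have to be performed within the $\Z_2$-graded Real category. Verifying compatibility with $\frakr$ throughout, particularly in the step where one extracts $V_1$ from $F$ and $V_0$ using a polar-like decomposition, is the most delicate part. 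Once the reconstruction is shown to be inverse to $\Phi$ up to the van Daele relations and up to operator homotopy, bijectivity on $K$-theoretic classes follows, completing the proof of the isomorphism.
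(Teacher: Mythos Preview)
Your proposal attempts a direct proof of the isomorphism from scratch, but this is not what the paper does. The paper's proof is essentially a two-line reduction: by Remark~\ref{rk:extra_clifford_gens_shifts_the_degree}, odd self-adjoint unitaries anti-commuting with $\{\gamma_j\}_{j=1}^k$ in $M_n(\mathrm{Mult}(A))$ are exactly odd self-adjoint unitaries in $M_m(\mathrm{Mult}(A)\hox\Cl_{0,k})$, so the pair $(V_0,V_1)$ with $V_0-V_1\in A$ defines a relative van Daele class in $DK(\mathrm{Mult}(A)\hox\Cl_{0,k},\,\mathrm{Mult}(A)\hox\Cl_{0,k}/A\hox\Cl_{0,k})$. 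Excision identifies this with $DK(A\hox\Cl_{0,k})$, and then the cited result \cite[Theorem~4.15]{BKRCayley} (the $k=0$ case of the present theorem, applied to the algebra $A\hox\Cl_{0,k}$) furnishes the isomorphism with $KKR(\Cl_{1,0},A\hox\Cl_{0,k})\cong KKR(\Cl_{k+1,0},A)$. In other words, the paper does not reprove the Cayley isomorphism here; it simply explains why the extra Clifford generators amount to a degree shift and defers to the already-established base case.

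Your sketch, by contrast, outlines how one might establish the base-case isomorphism itself. The forward direction (well-definedness, additivity, homotopy invariance) is fine. The inverse construction is where your sketch is thin: the relation $1-F^2=\tfrac14(V_0-V_1)^2$ only determines $(V_0-V_1)^2$, not $V_1$, so ``solving for $V_1$'' from $F$ and $V_0$ requires the full expression for $F_{\calC_{V_0}(V_1)}$ and a genuine inversion argument, together with a normalisation step to ensure the recovered $V_1$ is actually unitary rather than merely a contraction. You correctly flag this as the delicate part, and indeed in \cite{BKRCayley} the inverse is handled by different means. If you are writing up this theorem as stated in the present paper, the intended proof is the reduction argument above, not a reconstruction of \cite[Theorem~4.15]{BKRCayley}.
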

\begin{proof}
Recalling Remark \ref{rk:extra_clifford_gens_shifts_the_degree}, because $V_0$ and $V_1$ anti-commute 
with $\{\gamma_j\}_{j=1}^k$, they give a class in the degree shifted $DK( \mathrm{Mult}(A) \hox \Cl_{0,k})$ and 
because $V_0 - V_1 \in A$ we can take the relative class 
$[V_0]-[V_1] \in DK(\mathrm{Mult}(A) \hox \Cl_{0,k}, \, \mathrm{Mult}(A) \hox \Cl_{0,k} / A\hox \Cl_{0,k})$. 
Recalling the excision isomorphism, the cited result in~\cite{BKRCayley} then finishes the proof.
\end{proof}

Theorem \ref{thm:Cayley_iso_extra_Clifford} also shows that if $A$ is trivially graded, then 
$DK(A\otimes \Cl_{r,s}) \cong KO_{1+s-r}(A^\frakr)$, see also~\cite{Roe04}.

Finally let us consider boundary maps associated to the $\Z_2$-graded short exact sequence of 
Real $C^*$-algebras,
\[
  0 \to B \to E \xrightarrow{q}  A \to  0.
\]
The corresponding boundary map $\partial: DK(A) \to DK(B\hox \Cl_{1,0})$ was studied by van Daele~\cite{vanDaele2}.
We assume that $E$ is balanced graded and consider the composition
\[
  DK(A \hox \Cl_{0,k}) \xrightarrow{\partial} DK(B\hox \Cl_{1,k}) \xrightarrow{\calC_B} KKR( \Cl_{k,0}, B),
\]
where $\calC_B$ is the Cayley isomorphism from Theorem \ref{thm:Cayley_iso_extra_Clifford} 
 and we have made the identification 
$KKR(\Cl_{1,0}, B \hox \Cl_{1,k}) \cong KKR(\Cl_{k,0}, B)$.

Let us fix a basepoint odd self-adjoint unitary $e \in E$ and mutually anti-commuting odd self-adjoint unitaries 
$\{\gamma_j\}_{j=1}^k \subset E$ that anti-commute with $e$. These elements 
descend to anti-commuting odd self-adjoint unitaries in $\mathrm{Mult}(A)$ via the quotient map.

\begin{prop}[cf. {\cite[Section 5.2]{BKRCayley}}]  \label{prop:bdry_map_is_nice}
Let $V \in M_n(A^{\sim{q(e)}})$ be an odd self-adjoint unitary  that anti-commutes with 
$\{q(\gamma^{\oplus n}_j)\}_{j=1}^k$ and $V -q(e^{\oplus n}) \in M_n(A)$. 
Suppose that $\tilde{V} \in M_n(E)$ is a Real, odd and self-adjoint lift of $V$ 
that anti-commutes with  $\{\gamma^{\oplus n}_j\}_{j=1}^k$.
 Then the map $\calC_B \circ \partial( [V] - [e]) \in KKR(\Cl_{k,0}, B)$
can be represented by 
the bounded Real Kasparov module
\[
  \big( \Cl_{k,0}, \, B^{\oplus n}_B, \, \tilde{V} \big)
\]
with   Clifford generators $\{\gamma^{\oplus n}_j\}_{j=1}^k$.
\end{prop}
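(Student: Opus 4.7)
The plan is to compute $\calC_B \circ \partial([V] - [q(e^{\oplus n})])$ by choosing an explicit van Daele representative of the boundary built from the lift $\tilde{V}$, then applying the Cayley isomorphism of Theorem \ref{thm:Cayley_iso_extra_Clifford} and comparing with $(\Cl_{k,0}, B^{\oplus n}_B, \tilde{V})$. The argument is a direct extension of the computation in~\cite[Section 5.2]{BKRCayley} from $k=0$ to general $k$, with the additional work being to track the Clifford generators $\{\gamma_j^{\oplus n}\}_{j=1}^k$ and confirm they commute appropriately with every object introduced.

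As a preliminary check, I would verify that $(\Cl_{k,0}, B^{\oplus n}_B, \tilde{V})$ is a Real Kasparov module: the lift $\tilde{V} \in M_n(E) \subset \End_B(B^{\oplus n})$ is adjointable, odd, self-adjoint and Real by hypothesis; it graded-anti-commutes with each $\gamma_j^{\oplus n}$ by assumption, so $[\tilde{V}, \gamma_j^{\oplus n}]_\pm = 0$ is compact; and $q(\tilde{V}^2 - 1) = V^2 - 1 = 0$ forces $1 - \tilde{V}^2 \in M_n(B) = \mathbb{K}_B(B^{\oplus n})$, which provides the required compactness for the $\Cl_{k,0}$-action.

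For the main step I would write $\partial([V] - [q(e^{\oplus n})])$ explicitly as $[W_{\tilde{V}}] - [W_e] \in DK(B \hox \Cl_{1,k})$, where $W_{\tilde{V}}$ and $W_e$ are odd self-adjoint unitaries in a matrix algebra over the multiplier algebra of $E \hox \Cl_{1,k}$ whose difference lies in $M_n(B \hox \Cl_{1,k})$. Following~\cite[Section 5.2]{BKRCayley}, these lifts are built from $\tilde{V}$, $e^{\oplus n}$ and the extra $\Cl_{1,0}$-generator by functional calculus on $1 - \tilde{V}^2$ and $1 - e^2$, chosen precisely so that the resulting elements are unitary, anti-commute with each $\gamma_j^{\oplus n}$, and have difference in the ideal. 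Applying Theorem \ref{thm:Cayley_iso_extra_Clifford} and the external-product identification $KKR(\Cl_{1,0}, B \hox \Cl_{1,k}) \cong KKR(\Cl_{k,0}, B)$ then produces an unbounded Kasparov module whose operator is the Cayley transform of the pair $(W_e, W_{\tilde{V}})$. The principal obstacle is to simplify the bounded transform of this Cayley operator---using $W_{\tilde{V}}^2 = W_e^2 = 1$ and the explicit functional-calculus form of the lifts---and recognise that it recovers $\tilde{V}$, up to operator homotopy, as the Kasparov operator, with $\{\gamma_j^{\oplus n}\}_{j=1}^k$ providing the left $\Cl_{k,0}$-action. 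This is precisely the calculation carried out for $k = 0$ in~\cite[Section 5.2]{BKRCayley}, and the hypothesis that $\tilde{V}$ anti-commutes with $\{\gamma_j^{\oplus n}\}$ ensures full compatibility with the additional tensor factor $\Cl_{0,k}$, so no further homotopy is needed for $k \geq 1$.
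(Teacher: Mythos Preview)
Your proposal is correct and follows essentially the same route as the paper. The paper streamlines the argument by directly invoking~\cite[Proposition 5.7]{BKRCayley}, which already packages your functional-calculus lift and Cayley computation into the unbounded representative $\tan(\tfrac{\pi}{2}\tilde{V})$ on $\overline{\cos(\tfrac{\pi}{2}\tilde{V})B^{\oplus n}}_B \hox \Cl_{0,k}$; from there one just moves the $\gamma_j^{\oplus n}$ from the right $\Cl_{0,k}$-factor to a left $\Cl_{k,0}$-action, takes the bounded transform $\sin(\tfrac{\pi}{2}\tilde{V})$, and applies the straight-line operator homotopy $(1-t)\sin(\tfrac{\pi}{2}\tilde{V}) + t\tilde{V}$.
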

In the above result we have implicitly used that $E \subset \mathrm{Mult}(B)$ as $B$ is an ideal in $E$.
\begin{proof}
By~\cite[Proposition 5.7]{BKRCayley}, the composition $\calC_B \circ \partial$ is represented 
by the unbounded Kasparov module
\[
   \Big(\C, \, \overline{\cos (\tfrac{\pi}{2} \tilde{V}) B^{\oplus n}_B}   \hox {\Cl_{0,k}}_{\Cl_{0,k}},  \, \tan(\tfrac{\pi}{2} \tilde{V} ) \hox \one \Big).
\]
Because the operators $\{\gamma_j^{\oplus n}\}_{j=1}^k$ anti-commute with $\tilde{V}$, they give 
a well-defined representation of $\Cl_{k,0}$ on $\overline{\cos (\tfrac{\pi}{2} \tilde{V}) B^{\oplus n}_B}$ 
and anti-commute with $\tan(\tfrac{\pi}{2} \tilde{V} )$. Therefore, up to 
the isomorphism $KKR(\C, B \hox \Cl_{0,k}) \cong KKR(\Cl_{k,0}, B)$, $\calC_B \circ \partial$ can be 
represented by  the unbounded Kasparov module
\[
  \Big( \Cl_{k,0}, \, \overline{\cos (\tfrac{\pi}{2} \tilde{V}) B^{\oplus n}_B},  \, \tan(\tfrac{\pi}{2} \tilde{V} ) \Big)
\]
with Clifford generators $\{\gamma_1,\ldots,\gamma_k\}$.
We  take the bounded transform to get the Kasparov module
\[
  \Big( \Cl_{k,0}, \, B_B^{\oplus n}, \, \sin(\tfrac{\pi}{2} \tilde{V}) \Big).
\]
Finally, the straight-line operator homotopy $F_t = (1-t) \sin(\tfrac{\pi}{2} \tilde{V}) + t \tilde{V}$ for $t\in [0,1]$ does not change 
the $KKR$-class and gives the result.
\end{proof}

By the equivalence of skew-adjoint unitaries $J$ in a trivially graded 
$C^*$-algebra $A$ with odd self-adjoint unitaries $J \otimes \rho \in A \otimes \Cl_{0,1}$, 
we can also apply the results of this section to the ungraded skew-adjoint setting.

\section{Quasifree ground states from the viewpoint of SPT phases} \label{sec:Basic_Quasifree_index}

\subsection{Definition and properties}

Fermionic quasifree ground states can be naturally studied using 
Araki's self-dual CAR algebra.
Fix a separable complex Hilbert space $\mathcal{H}$ and a real involution, 
a self-adjoint anti-unitary $\Gamma$. Equivalently, $\calH$ is a Real Hilbert $\C$-module 
with real involution $v^\frakr = \Gamma v$.
The self-dual CAR algebra $A^\mathrm{car}_\mathrm{sd}(\calH,\Gamma)$ 
is the $C^*$-algebra generated by $\one$ and $\frakc(v)$ for $v\in \calH$ such that 
$v\mapsto \frakc(v)$ is linear and with relations 
$$
  \frakc(v)^* = \frakc(\Gamma v)  , \qquad \qquad \{\frakc(v)^*,\frakc(w)\}  = \langle v,w \rangle_{\calH} , 
  \qquad v, w \in \calH.
$$
The self-dual CAR algebra is $\Z_2$-graded by the parity automorphism $\Theta$, where  
$\Theta(\frakc(v))=-\frakc(v)$ for all $v\in \calH$.
One recovers the more familiar CAR algebra by means of a \emph{basis projection}, 
an orthogonal projection $P$ on $\calH$ such that $P + \Gamma P \Gamma = \one_\calH$. 
Given a basis projection, there is a graded isomorphism 
$ A^\mathrm{car}(P\calH) \cong A^\mathrm{car}_\mathrm{sd}(\calH,\Gamma)$ which on 
generators is given by 
\[
  \fraka^*(Pv)    \mapsto \frakc(Pv)  , \qquad  \qquad  \fraka(Pv)  \mapsto  \frakc(\Gamma Pv)   , 
  \qquad v \in \calH.
\]

Basis projections also are used to construct pure quasifree states on 
$A^\mathrm{car}_\mathrm{sd}(\calH,\Gamma)$. We summarise some of 
the key results of~\cite{Araki70}.

\begin{thm}[{\cite[Theorem 1]{Araki70}}] \label{thm:Araki_quasifree}
Let $P$ be a basis projection on $(\calH, \Gamma)$. 
\begin{enumerate}
  \item[{\rm (i)}] There is a quasifree, pure and $\Theta$-invariant state $\omega_P$ on $A^\mathrm{car}_\mathrm{sd}(\calH,\Gamma)$ 
such that 
$$
  \omega_P \big(\frakc(v)^* \frakc(w)\big)  =  \langle v, Pw \rangle_{\calH} , \qquad  v, \, w \in \calH,
$$
and is extended to $A^\mathrm{car}_\mathrm{sd}(\calH,\Gamma)$ by the formulas 
\begin{align*}
  &\omega_P (\frakc(v_1)\cdots \frakc(v_{2n+1})) = 0,    \\
  &\omega_P (\frakc(v_1)\cdots \frakc(v_{2n})) = (-1)^{n(n-1)/2} \sum_{\sigma} (-1)^{\sigma} 
      \prod_{j=1}^n \omega_P \big( \frakc(v_{\sigma(j)}) \frakc(v_{\sigma(j+n)}) \big) ,
\end{align*}
where $n \in \mathbb{N}$, $v_j \in \calH$ for all $j$ and the sum is over permutations $\sigma$ such that 
$$
  \sigma(1) < \sigma(2) < \ldots < \sigma(n)  , \qquad \sigma(j) < \sigma(j+n)  , \qquad  j\;=\; 1,\ldots, n .
$$
\item[{\rm (ii)}] 
Let $P_0$ and $P_1$ be basis projections on $(\calH, \Gamma)$. Then $\omega_{P_0}$ and $\omega_{P_1}$ are unitarily equivalent 
if and only if $P_0 - {P}_1$ is in the ideal of Hilbert-Schmidt operators. 
\end{enumerate}
\end{thm}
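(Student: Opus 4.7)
The plan is to reduce (i) to the Fock representation of the ordinary CAR algebra via the basis projection, and to treat (ii) as the Shale--Stinespring implementability criterion in the fermionic self-dual setting.

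For (i), I would exploit the graded isomorphism $A^{\mathrm{car}}_{\mathrm{sd}}(\calH,\Gamma)\cong A^{\mathrm{car}}(P\calH)$ furnished by $P$: take the antisymmetric Fock space $\calF_P = \bigwedge^{*} P\calH$ with vacuum $\Omega$, and let $\pi_P$ be the irreducible representation in which $\fraka^{*}(Pv)$ is exterior multiplication and $\fraka(Pv)$ is its adjoint. Setting $\omega_P(\cdot) = \langle \Omega, \pi_P(\cdot)\Omega\rangle$, the decomposition $\frakc(v) = \fraka^{*}(Pv) + \fraka(\Gamma(\one-P)v)$ together with the vacuum condition $\fraka(Pv)\Omega = 0$ and the CAR relation $\{\fraka(Pv),\fraka^{*}(Pw)\} = \langle Pv, Pw\rangle$ collapses $\omega_P(\frakc(v)^{*}\frakc(w))$ to $\langle v, Pw\rangle$. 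The $2n$-point formula is then Wick's theorem for the Fock state; the sign $(-1)^{n(n-1)/2}$ arises from reordering all annihilators to the right of all creators. Purity is equivalent to irreducibility of $\pi_P$, and $\Theta$-invariance follows because $\Theta$ is implemented by the number-parity operator $(-1)^{N}$ on $\calF_P$, which fixes $\Omega$.

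For the forward implication of (ii), a unitary equivalence of the GNS representations $(\pi_{P_0},\calF_{P_0},\Omega_{P_0})$ and $(\pi_{P_1},\calF_{P_1},\Omega_{P_1})$ forces the Bogoliubov automorphism associated to $P_0 \leadsto P_1$ to be unitarily implementable on $\calF_{P_0}$. Expanding an implementer in normal-ordered form relative to $P_0$, the ``pair-creating'' part is the second quantization of the off-diagonal block $P_{0}^{\perp} P_1 P_0$, whose boundedness on $\calF_{P_0}$ requires precisely $P_{0}^{\perp} P_1 P_0 \in \mathcal{L}^{2}(\calH)$; by self-duality and a symmetric argument the conjugate block is also Hilbert--Schmidt, and together these conditions are equivalent to $P_0 - P_1 \in \mathcal{L}^{2}(\calH)$. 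Conversely, given $P_0 - P_1 \in \mathcal{L}^{2}(\calH)$, I would construct the intertwiner explicitly as the exponentiated second quantization of a Hilbert--Schmidt one-particle operator built from the polar decomposition of $P_1|_{P_0\calH}$, whose convergence on $\calF_{P_0}$ is guaranteed precisely by the hypothesis; a direct matrix-element computation then verifies the intertwining with $\pi_{P_0}$ and $\pi_{P_1}$.

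The main obstacle is the forward direction of (ii). The infinite-dimensionality of $\calH$ is essential: naive finite-dimensional determinant formulas for Bogoliubov transformations diverge, and the Hilbert--Schmidt threshold must be extracted from requiring the implementer to be bounded on $\calF_{P_0}$ rather than merely defined as a quadratic form. A cleaner alternative route is to restrict $\omega_{P_1}$ to $A^{\mathrm{car}}(P_0\calH)$ (a gauge-invariant quasifree state of symbol $P_0 P_1 P_0$) and invoke the Powers--Stormer quasi-equivalence criterion for gauge-invariant quasifree states, which reduces the question to a trace-class condition on the difference of symbols, itself equivalent to $P_0 - P_1 \in \mathcal{L}^{2}(\calH)$.
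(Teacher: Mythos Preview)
The paper does not prove this theorem at all: it is stated as a known result and attributed to Araki's original paper~\cite{Araki70} with no argument supplied. There is therefore no ``paper's own proof'' to compare against.

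That said, your sketch is a reasonable outline of the standard argument. For~(i) the reduction to the Fock representation via the isomorphism $A^{\mathrm{car}}_{\mathrm{sd}}(\calH,\Gamma)\cong A^{\mathrm{car}}(P\calH)$ is exactly how one shows existence, and the Wick formula with the sign $(-1)^{n(n-1)/2}$ is the usual combinatorics of antisymmetric normal ordering. For~(ii) both routes you mention---the Shale--Stinespring implementability analysis and the reduction to the Powers--St{\o}rmer criterion for gauge-invariant quasifree states---are legitimate and well known. One small caution on the Powers--St{\o}rmer route: restricting $\omega_{P_1}$ to $A^{\mathrm{car}}(P_0\calH)$ does not in general yield a \emph{gauge-invariant} quasifree state (the two-point function $\omega_{P_1}(\fraka(P_0 u)\fraka(P_0 v))$ need not vanish), so you would either have to work with the more general quasifree criterion or argue more carefully with the self-dual formulation; the direct Shale--Stinespring approach avoids this wrinkle.
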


A simple method to construct quasifree states is to consider the unitary dynamics on $(\calH, \Gamma)$  
 generated by a self-adjoint operator $H=H^*$ such that 
$\Gamma \big( \Dom(H)\big) \subset \Dom(H)$ and $\Gamma H  = -H\Gamma$. We will 
call such operators Bogoliubov--de Gennes (BdG) Hamiltonians.
We will furthermore restrict to 
\emph{gapped} BdG Hamiltonians by assuming that  $0 \notin\sigma(H)$. 

A BdG Hamiltonian defines a quasifree dynamics 
$\beta: \R \to \Aut(A^\mathrm{car}_\mathrm{sd}(\calH,\Gamma))$ given on generators 
by 
$\beta_t( \frakc(v)) = \frakc(e^{itH }v)$ for all $v\in \calH$.
The ground state of this action is then completely described 
by  the basis projection $P = \chi_{(0,\infty)}(H)$, 
$P + \Gamma P \Gamma = \one_\calH$.

\begin{prop}[{\cite[Proposition 6.37]{EvansKawahigashi}}] \label{prop:gappedBdG_gappedGS}
Let $\beta:\mathbb{R} \to \mathrm{Aut}(A^\mathrm{car}_\mathrm{sd}(\calH,\Gamma))$ be a 
quasifree dynamics with BdG Hamiltonian $H$. If $0 \notin \sigma(H)$, then the 
quasifree state $\omega_{P}$ associated to the basis projection $P=\chi_{(0,\infty)}(H)$ is 
the unique ground state for the dynamics $\beta$. Furthermore, this ground state 
is gapped in the sense that the generator of the dynamics  on 
the GNS space has a spectral gap above $0$.
\end{prop}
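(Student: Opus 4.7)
The plan is to establish the proposition in three stages: first that $P=\chi_{(0,\infty)}(H)$ is indeed a basis projection so that Theorem \ref{thm:Araki_quasifree} supplies the state $\omega_P$; second that $\omega_P$ is the unique ground state of $\beta$; and third that the generator of $\beta$ on the GNS space has the required gap.

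For stage one, I would apply the identity $\Gamma H = -H\Gamma$ and the Borel functional calculus to obtain $\Gamma \chi_{(0,\infty)}(H)\Gamma = \chi_{(-\infty,0)}(H)$. The spectral gap $0 \notin \sigma(H)$ then gives $\chi_{(0,\infty)}(H) + \chi_{(-\infty,0)}(H) = \one_{\calH}$, so that $P + \Gamma P\Gamma = \one_{\calH}$ and $P$ is a basis projection. Theorem \ref{thm:Araki_quasifree}(i) therefore produces the pure, $\Theta$-invariant quasifree state $\omega_P$, which is $\beta$-invariant because its two-point function $\langle v, P w\rangle_\calH$ is preserved by $e^{itH}$.

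For the ground-state stage, recall that the infinitesimal generator $\delta$ of $\beta$ satisfies $\delta(\frakc(v)) = i\frakc(Hv)$ on a dense subalgebra. Evaluating $-i\omega_P(\frakc(v)^*\delta(\frakc(v))) = \langle v, PHv\rangle_\calH$ for $v\in\Dom(H)$ gives a non-negative number because $PH = H\chi_{(0,\infty)}(H)\geq 0$; by polarisation and the quasifree Wick expansion in Theorem \ref{thm:Araki_quasifree}(i), this extends to the full ground-state inequality $-i\omega_P(a^*\delta(a))\ge 0$. For uniqueness, any ground state $\omega$ is $\beta$-invariant and, as an extreme point of the ground-state face, pure. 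For a pure $\beta$-invariant state the two-point function $(v,w)\mapsto \omega(\frakc(v)^*\frakc(w))$ defines a projection $Q$ on $\calH$ commuting with $e^{itH}$ and satisfying $Q+\Gamma Q\Gamma = \one$; the ground-state inequality restricted to one-particle modes forces $QH\ge 0$ on $\Ran(Q)$, and combined with the basis-projection constraint and the gap this forces $Q=P$. A standard Fock-space argument (the GNS triple of a pure quasifree ground state is determined up to unitary equivalence by $Q$) then identifies $\omega$ with $\omega_P$.

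For the gap, I would use the basis-projection isomorphism $A^\mathrm{car}_\mathrm{sd}(\calH,\Gamma)\cong A^\mathrm{car}(P\calH)$ to identify the GNS space of $\omega_P$ with the antisymmetric Fock space $\calF_a(P\calH)$, with cyclic vector the Fock vacuum. Under this identification the dynamics $\beta$ is implemented by $U_t = e^{it\, d\Gamma(H\restriction_{P\calH})}$, where $d\Gamma$ is second quantisation. Since $P\calH$ is the strictly positive spectral subspace and $0\notin\sigma(H)$, we have $H\restriction_{P\calH}\ge \varepsilon\,\one$ where $\varepsilon = \mathrm{dist}(0,\sigma(H))>0$, so the spectrum of $d\Gamma(H\restriction_{P\calH})$ is contained in $\{0\}\cup[\varepsilon,\infty)$, giving the claimed gap above $0$.

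The main obstacle is the uniqueness assertion. Verifying that $\omega_P$ satisfies the ground-state inequality and computing the generator on Fock space are essentially direct calculations once the identifications are in place, but ruling out non-quasifree ground states requires the structural input that any pure $\beta$-invariant ground state of a Bogoliubov dynamics is quasifree. I would either cite this from Bratteli--Robinson (or Araki's original work) or argue it by showing that the GNS vector of such a state must be annihilated by $\frakc(v)$ for $v\in(\one-P)\calH$, forcing the GNS representation to coincide with the Fock representation attached to $P$.
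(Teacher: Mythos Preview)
The paper does not prove this proposition at all: it is stated with a citation to \cite[Proposition 6.37]{EvansKawahigashi} and no argument is given. Your proposal therefore goes well beyond what the paper does, supplying an actual proof sketch where the paper simply defers to the literature.

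As a sketch of the cited result your three stages are sound. Stage one is a clean verification that $P$ is a basis projection. Stage three is the standard second-quantisation argument and is correct as written. In stage two you have correctly identified the only genuine subtlety: the ground-state inequality and the Fock-space gap computation are routine, but uniqueness requires knowing that any pure ground state of a quasifree (Bogoliubov) dynamics is itself quasifree, which is a structural fact one would typically cite rather than reprove. Your suggested route---showing that the GNS vacuum of a pure ground state is annihilated by $\frakc(v)$ for $v$ in the negative spectral subspace, hence coincides with the Fock vacuum for $P$---is exactly how this is done in the references, so either citing Evans--Kawahigashi or Araki directly, or carrying out that annihilation argument, would complete the proof.
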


\begin{example}[BdG Hamiltonians from superconductors] \label{ex:canonical_example}
The canonical example we will consider is the Nambu space $\calH = \calV \oplus \calV^*$ 
with $\calV$ the Hilbert space of electrons and $\calV^*$ the space of holes related by the 
(anti-linear) Riesz map $R: \calV \to \calV^*$. In particular, $\calH$ has the natural real 
involution $\Gamma = \begin{pmatrix} 0 & R^{-1} \\ R & 0 \end{pmatrix}$. We will also 
use the isomorphism $\calV \oplus \calV^* \simeq \calV \otimes \C^2$ with real involution  
$\Gamma =  \mathfrak{C} (\one \otimes \sigma_1)$ and $\mathfrak{C}$ complex conjugation.
Typical examples of $\calV$ for discrete models are $\ell^2(\Lambda, \C^n)$ with 
$\Lambda$ a countable set. For continous models we will consider $\calV = L^2(\R^d, \C^n)$ 
or $L^2(M, V)$ with $V\to M$ a complex Hermitian 
vector bundle over a complete Riemannian manifold.

Generically, we will consider BdG Hamiltonians on $\calV \otimes \C^2$ of the form 
\begin{equation} \label{eq:Generic_BdG}
  \begin{pmatrix} h & \delta \\ \delta^* & - \ol{h} \end{pmatrix}, \qquad \qquad \ol{A} = \mathfrak{C} A \mathfrak{C}, 
  \qquad \qquad \delta^* = - \ol{\delta}.
\end{equation}
In discrete systems, e.g. $\calH = \ell^2(\Z^d) \otimes \C^{2n}$, we typically have that 
$h = \big( p(S_1,\ldots, S_n) - \mu \big)\otimes \one_{n}$ with $p(S_1,\ldots, S_n)$ is a self-adjoint finite polynomial 
of the shift operators (e.g. the discrete Laplacian) and $\mu \in \R$ the Fermi energy. 
The pairing potential $\delta$ is then determined by the type of superconductor under consideration 
($s$-wave, $d$-wave $(p\pm ip)$-wave etc.). Concrete examples for $d=2$ can be found 
in~\cite[Section 2.3]{DNSB17}. 

For continuous models such as $\calH = L^2(\R^d, \C^{2n})$, the BdG Hamiltonians we consider take the same 
form as Equation \eqref{eq:Generic_BdG}, but now where $h= (\sum_j (-i\partial_j - A_j)^2  - \mu) \otimes \one_n$ 
is a (possibly magnetic) Laplacian. The coupling term $\delta$ is often a first-order differential operator and 
depends on the example under consideration, see~\cite[Section 2]{StoneRoy} for example. For the purposes 
of this paper, the specific form of $H$ is not so important provided it is sufficiently local,  
elliptic and  $0 \notin \sigma(H)$. 

Let us also note that we may also consider weakly disordered
BdG Hamiltonians, i.e. the disordered Hamiltonian still anti-commutes with $\Gamma$ and has 
a spectral gap at $0$.
\end{example}

Given 
a unitary or anti-unitary operator $W$ on $\mathcal{H}$ that commutes with $\Gamma$, we can define a
linear or anti-linear 
automorphism $\beta_W$ of $A^\mathrm{car}_\mathrm{sd}(\calH,\Gamma)$ such that 
$\beta_W( \frakc(v)) = \frakc( Wv) $. Note also that any such quasifree automorphism will 
commute with the parity automorphism $\Theta = \beta_{-\one}$.

\begin{lemma} \label{lemma:quasifree_symm_invariance}
Fix a basis projection $P + \Gamma P \Gamma = \one$ on $(\calH, \Gamma)$.
\begin{enumerate}
   \item[{\rm (i)}] If $W$ is a unitary operator on $\calH$ such that $[W,\Gamma] = 0 = [W, P]$, then
   $\omega_P ( \beta_W (a)) = \omega_P(a)$ for all $a \in A^\mathrm{car}_\mathrm{sd}(\calH,\Gamma)$.   
  \item[{\rm (ii)}] If $T$ is an anti-unitary operator on $\mathcal{H}$ such that 
$[T,\Gamma] =[T,P]= 0$, then the quasifree state $\omega_{P}$ 
is such that $ \omega_{P}( \beta_T(a^*) ) = \omega_{P}(a)$ for all $a \in A^\mathrm{car}_\mathrm{sd}(\calH,\Gamma)$.   
\end{enumerate}
\end{lemma}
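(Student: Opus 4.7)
The plan is to reduce both statements to the two-point function and then verify them by direct computation. First I would observe that $\omega_P$ is $\Theta$-invariant and therefore vanishes on odd monomials $\frakc(v_1)\cdots\frakc(v_{2m+1})$; since $\beta_W$ and $\beta_T$ both intertwine $\Theta$ (they are quasifree), both identities are automatic on the odd part of $A^{\mathrm{car}}_{\mathrm{sd}}(\calH,\Gamma)$. On the even part, Theorem \ref{thm:Araki_quasifree}(i) expresses $\omega_P$ on $\frakc(v_1)\cdots\frakc(v_{2n})$ as a signed sum of products of two-point functions $\omega_P(\frakc(u)\frakc(z))$. Using $\frakc(u)=\frakc(\Gamma^2 u)=\frakc(\Gamma u)^*$ one has
\[
  \omega_P(\frakc(u)\frakc(z))=\omega_P(\frakc(\Gamma u)^*\frakc(z))=\langle \Gamma u, Pz\rangle_\calH,
\]
so the entire problem is reduced to analysing how $\langle \Gamma u, Pz\rangle_\calH$ transforms under $W$ and $T$.

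For part (i), I would compute
\[
  \omega_P\bigl(\beta_W(\frakc(u)\frakc(z))\bigr)=\langle \Gamma Wu, PWz\rangle_\calH=\langle W\Gamma u, WPz\rangle_\calH=\langle \Gamma u, Pz\rangle_\calH,
\]
where the second equality uses $[W,\Gamma]=[W,P]=0$ and the third that $W$ is unitary. Applying this summand-by-summand to the Pfaffian-like formula of Theorem \ref{thm:Araki_quasifree}(i) gives $\omega_P(\beta_W(a))=\omega_P(a)$ on all even monomials, hence on $A^{\mathrm{car}}_{\mathrm{sd}}(\calH,\Gamma)$ by linearity and norm-continuity.

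For part (ii), the main point is to show that the anti-linear functional $\omega_P\circ\beta_T$ equals the anti-linear functional $a\mapsto\overline{\omega_P(a)}$; statement (ii) then follows because $\omega_P(\beta_T(a^*))=\omega_P(\beta_T(a)^*)=\overline{\omega_P(\beta_T(a))}=\omega_P(a)$. On two-point functions, using that $T$ is anti-unitary and commutes with both $\Gamma$ and $P$,
\[
  \omega_P\bigl(\beta_T(\frakc(u)\frakc(z))\bigr)=\langle \Gamma Tu, PTz\rangle_\calH=\langle T\Gamma u, TPz\rangle_\calH=\overline{\langle \Gamma u, Pz\rangle_\calH}=\overline{\omega_P(\frakc(u)\frakc(z))}.
\]
Inserting this into each factor of the Pfaffian-like formula (the overall prefactor $(-1)^{n(n-1)/2}$ and the signs $(-1)^\sigma$ are real) propagates the identity to every even monomial, and the odd case is trivial, completing the proof.

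The only point requiring care is the bookkeeping in the second step: one must confirm that the anti-linearity of $\beta_T$ interacts correctly with both the sum and the products in Theorem \ref{thm:Araki_quasifree}(i), and that the formula $\frakc(u)=\frakc(\Gamma u)^*$ is used consistently when rewriting two-point functions without a $*$. These are routine but are the only places where a sign or a complex conjugate could be mishandled.
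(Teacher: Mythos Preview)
Your proposal is correct and follows essentially the same approach as the paper: both reduce to two-point functions using the quasifree structure and then verify the identities by direct computation using $[W,\Gamma]=[W,P]=0$ (resp.\ $[T,\Gamma]=[T,P]=0$) together with unitarity (resp.\ anti-unitarity). The only cosmetic difference is that in part~(ii) you factor the argument through the intermediate identity $\omega_P\circ\beta_T=\overline{\omega_P(\cdot)}$, whereas the paper computes $\omega_P(\beta_T(a^*))$ directly on $\frakc(u)^*\frakc(v)$; the underlying manipulations are the same.
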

\begin{proof}
Because $\omega_{P}$ is quasifree,  we only need to check invariance on operators 
of the form $\frakc(u)^* \frakc(v)$ for arbitrary $u,v \in \calH$. For the linear case, 
\begin{align*}
  \omega_{P} \big( \beta_W( \frakc(u)^* \frakc(v) )\big)  &= \omega_{P}( \frakc(Wu)^* \frakc(Wv)) 
  = \langle Wu , P Wv \rangle_{\mathcal{H}}  \\
   &= \langle Wu, W P v \rangle_{\mathcal{H}} = \langle u, P v \rangle_{\mathcal{H}} = \omega_{P}( \frakc(u)^* \frakc(v) )  .
\end{align*}
For the anti-linear case, we first note that because $(\calH,\Gamma)$ is a Real Hilbert space, $\ol{\langle u,v \rangle} = \langle \Gamma u, \Gamma v\rangle$. 
We now compute
\begin{align*}
  \omega_{P} \circ \beta_T \big(   (\frakc(u)^* \frakc(v))^* \big)  &=   \omega_{P} \big( \beta_T ( \frakc(v)^* \frakc(u) ) \big)
 = \omega_{P} ( \frakc( \Gamma T \Gamma v)^* \frakc(Tu) ) \\
  &= \langle  \Gamma T\Gamma v, P T u \rangle_{\mathcal{H}}  =\langle  Tv, PT u  \rangle_{\mathcal{H}}  
   = \langle v, Pu \rangle_\mathcal{H} \\
   &   = \ol{ \langle u, Pv \rangle_\mathcal{H} } 
   = \langle u, Pv \rangle_\calH
  = \omega_{P}( \frakc(u)^* \frakc(v) )  .    \qedhere
\end{align*}
\end{proof}

Note that the assumptions  $[T,\Gamma]=[W,\Gamma]=0$ imply that  $T$ and $W$ are (linear) orthogonal 
operators on the real Hilbert space $\calH_\R = \{ v \in \calH\,:\, \Gamma v = v\}$.

\subsection{Parity symmetry and $\Z_2$-indices}

Let us recall the basic classifying principle of 
symmetry protected topological (SPT) phases of gapped ground states with an on-site symmetry.

\vspace{0.13cm}

 {\sl Fix a reference ground state $\omega_0$ and on-site symmetry $G$. We assume that a gapped ground state 
 $\omega$ is connected to $\omega_0$ but need not be $G$-equivariantly connected.}
 
\vspace{0.13cm}

Usually $\omega_0$ is taken to be a product state.
Here we will consider a similar notion.

\vspace{0.13cm}
{\sl Let $\omega_0$ and $\omega_1$ be quasifree pure gapped ground states of $A^\mathrm{car}_\mathrm{sd}(\calH,\Gamma)$. We assume 
 that  $\omega_1$ is unitarily equivalent to $\omega_0$, but their restrictions to 
 the $\Theta$-invariant (even) subalgebra $A^\mathrm{car}_\mathrm{sd}(\calH,\Gamma)^{0}$ need not be unitarily equivalent. }
\vspace{0.13cm}

Hence, let $H_0$ and $H_1$ be BdG Hamiltonians with $0 \notin \sigma(H_0) \cup \sigma(H_1)$  and such that the 
corresponding quasifree 
ground states $\omega_0$ and $\omega_1$ are unitarily equivalent. 
By Theorem \ref{thm:Araki_quasifree}, the basis projections  $P_0 = \chi_{(0,\infty)}(H_0)$ 
and $P_1 = \chi_{(0,\infty)}(H_1)$ are such that $P_0 - P_1$ is Hilbert-Schmidt. 
Let us therefore consider the Real skew-adjoint unitaries $J_0 = i(2P_0 - \one)$, $J_1 = i(2P_1- \one)$, where 
$\Gamma J_0 \Gamma = J_0$ and $\Gamma J_1 \Gamma = J_1$. In the sequel we will often write 
$J_k = i \, \mathrm{sgn}(H_k) = i H_k |H_k|^{-1}$, $k\in \{0,1\}$, 
which is defined via the Borel functional calculus.

\begin{lemma}[{\cite[Proposition 4.4]{BCLR}}]
The sum $J_0 + J_1$ is a Real and Fredholm operator on $\calH$.
\end{lemma}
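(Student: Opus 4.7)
The plan is to reduce the Fredholm claim to the fact that $J_0 + J_1$ differs from a unitary (hence invertible) operator by a compact operator. The key observation is that the unitary equivalence of $\omega_0$ and $\omega_1$ combined with Theorem \ref{thm:Araki_quasifree}(ii) forces $P_0 - P_1$ to be Hilbert--Schmidt, and hence compact, on $\calH$. Writing $J_k = i(2P_k - \one)$, this immediately gives
\[
   J_1 - J_0 \;=\; 2i(P_1 - P_0) \;\in\; \mathbb{K}(\calH).
\]

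With this in hand, the first step is the algebraic rearrangement
\[
   J_0 + J_1 \;=\; 2 J_0 + (J_1 - J_0).
\]
Because $J_0$ is a skew-adjoint unitary, $2J_0$ is invertible on $\calH$ with inverse $-\tfrac{1}{2}J_0$. The operator $J_0 + J_1$ is therefore a compact perturbation of an invertible operator, which is a standard criterion for being Fredholm. I would invoke this directly, noting that a parametrix for $J_0 + J_1$ is given by $-\tfrac{1}{2}J_0$ up to a compact error, since
\[
   (J_0 + J_1)\bigl(-\tfrac{1}{2}J_0\bigr) \;=\; \one - \tfrac{1}{2}J_1 J_0 + \tfrac{1}{2}\one - \one \;=\; \one + \tfrac{1}{2}(J_0 - J_1)J_0,
\]
and the right-hand side differs from $\one$ by a compact operator; an analogous computation handles the other composition.

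For the Real property, I would observe that each $H_k$ satisfies $\Gamma H_k \Gamma = -H_k$, so by Borel functional calculus $\Gamma \, \mathrm{sgn}(H_k)\, \Gamma = -\mathrm{sgn}(H_k)$. Since $\Gamma$ is anti-linear, $\Gamma\, i\, \mathrm{sgn}(H_k)\, \Gamma = i\, \mathrm{sgn}(H_k) = J_k$, so $J_k^\frakr = J_k$ for $k \in \{0,1\}$, and consequently $(J_0 + J_1)^\frakr = J_0 + J_1$.

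The argument is short and I do not anticipate a genuine obstacle, but the step that most deserves care is verifying that the premise $P_0 - P_1 \in \mathcal{L}^2(\calH)$ is actually available: this requires unpacking the running assumption that $\omega_0$ and $\omega_1$ are unitarily equivalent, so that Araki's theorem can be applied. Once that input is secured, the compact-perturbation-of-an-invertible argument closes the proof with no further input.
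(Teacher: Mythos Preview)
Your argument is correct. The paper does not supply its own proof of this lemma; it simply cites \cite[Proposition 4.4]{BCLR}. Your approach---writing $J_0+J_1 = 2J_0 + (J_1-J_0)$ as a compact perturbation of an invertible operator, using that $P_0-P_1$ is Hilbert--Schmidt by Araki's theorem---is the natural one and matches what the cited reference does. Note that your worry about securing the hypothesis $P_0-P_1\in\mathcal{L}^2(\calH)$ is already addressed in the paper: the sentence immediately preceding the lemma invokes Theorem~\ref{thm:Araki_quasifree} to record exactly this fact. One cosmetic point: the displayed chain of equalities in your parametrix computation is correct but awkwardly arranged; it would read more cleanly as $(J_0+J_1)\bigl(-\tfrac{1}{2}J_0\bigr) = \tfrac{1}{2}\one - \tfrac{1}{2}J_1J_0 = \one + \tfrac{1}{2}(J_0-J_1)J_0$, which is visibly $\one$ plus a compact.
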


Let us therefore consider the finite-dimensional space $\Ker(J_0 + J_1)$. Noting that $J_1(J_0+J_1)=(J_0+J_1)J_0$,  
we see that both $J_0$ and $J_1$ act on $\Ker(J_0+J_1)$. Choosing $J_0$ or $J_1$ as the generator, 
we therefore obtain an ungraded $\Cl_{0,1}$-action on $\Ker(J_0+J_1)$. Note that we cannot take a $\Cl_{0,2}$-action 
 as $J_0$ and $J_1$ neither commute nor anti-commute in general.
Hence $\Ker(J_0 + J_1)$ can be regarded as an ungraded $\Cl_{0,1}$-module and we can ask 
 whether it extends to a $\Cl_{0,2}$-module. As the following result shows, this extension 
only occurs when $\omega_0$ and $\omega_1$ are equivalent on the even subalgebra.

\begin{prop}[{\cite[Theorem 4]{ArakiEvans}}]
The states $\omega_0$ and $\omega_1$ restricted to $A^\mathrm{car}_\mathrm{sd}(\calH,\Gamma)^{0}$ 
  are equivalent if and only if $\frac{1}{2} \mathrm{dim}\,\Ker(J_0+J_1)$ is even.
\end{prop}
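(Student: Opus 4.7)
The plan is to recast the question as the computation of the $\Z_2$-grading parity of an implementing unitary between the GNS representations of $\omega_0$ and $\omega_1$. Since $\omega_0$ and $\omega_1$ are equivalent, there is a unitary $U$ intertwining the Fock representations on $\bigwedge^{\!*} P_0 \calH$ and $\bigwedge^{\!*} P_1 \calH$, unique up to phase. Each Fock space carries the $\Z_2$-grading by fermion number parity, which implements the parity automorphism $\Theta$, and the restrictions $\omega_0|_{A^0}$ and $\omega_1|_{A^0}$ are equivalent if and only if $U$ may be chosen to intertwine these $\Z_2$-gradings. So the proposition is equivalent to showing that the graded parity of $U$ is $(-1)^{n}$ with $n = \tfrac{1}{2}\dim\Ker(J_0+J_1)$.

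First I would reduce the comparison of $P_0$ and $P_1$ to a direct sum of manageable blocks using the Hilbert--Schmidt condition $P_0 - P_1 \in \calL^2$ from Theorem \ref{thm:Araki_quasifree}(ii). Spectral analysis of the compact self-adjoint operator $P_0 P_1 P_0$ on $P_0\calH$ produces three kinds of pieces: an infinite-dimensional subspace where $P_0 = P_1$, pairs of eigenspaces with eigenvalue in $(0,1)$ giving two-dimensional "mode-mixing" blocks, and the eigenspaces at the endpoints $0$ and $1$. The identity $J_0 + J_1 = 2i(P_0 + P_1 - \one)$ identifies $\Ker(J_0+J_1)$ precisely with the endpoint eigenspaces, i.e.\ those modes that are fully swapped between $P_0\calH$ and $(\one-P_0)\calH = \Gamma P_0 \calH$. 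Because $\Gamma$ exchanges these two subspaces, the kernel splits into two $\Gamma$-conjugate halves of equal complex dimension, so $\dim\Ker(J_0+J_1)$ is automatically even and $n$ counts the swapped modes.

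Next I would compute the parity of the implementer $U$ block by block. On a generic $(0,1)$-block the local Bogoliubov rotation is implemented by an even element of the finite-dimensional CAR algebra, contributing $+1$ to the parity. On each swapped block of $\Ker(J_0+J_1)$ the local rotation is a total exchange of a creation and an annihilation operator, whose implementer is an odd element of the CAR algebra, contributing $-1$. Multiplying over all blocks gives total parity $(-1)^n$, so $U$ is $\Z_2$-graded if and only if $n$ is even, which is the claim.

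The main obstacle is making the infinite-dimensional product of local implementers converge and controlling the overall phase (Pfaffian/Shale--Stinespring type issues), and justifying that the endpoint eigenspaces are the only obstruction after the block reduction. This is exactly the content of Theorem 4 of~\cite{ArakiEvans}, which can be invoked to close the proof; alternatively, one can phrase the argument entirely in terms of the $\Cl_{0,1}$-module $\Ker(J_0+J_1)$ and ask when this module extends to a $\Cl_{0,2}$-module, an extension existing precisely when its complex dimension is divisible by four.
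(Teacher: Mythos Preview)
The paper does not supply a proof of this proposition; it is stated with a direct citation to \cite[Theorem 4]{ArakiEvans} and nothing more. Your sketch is therefore not competing with any argument in the paper --- it is an outline of the very result being quoted.

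That said, your outline is a faithful summary of the standard route: reduce to the grading parity of the Fock-space implementer, decompose into two-dimensional Bogoliubov blocks via the spectrum of $P_0 P_1 P_0$ (or equivalently of $J_0 J_1 + J_1 J_0$), observe that generic blocks are implemented by even CAR elements while the fully swapped blocks sitting in $\Ker(J_0+J_1)$ are implemented by odd ones, and multiply. The identity $J_0+J_1 = 2i(P_0+P_1-\one)$ and the evenness of $\dim\Ker(J_0+J_1)$ via the Real structure $\Gamma$ are correct. You are right that the delicate point is the convergence of the infinite product of local implementers and the control of the overall Pfaffian sign, and you appropriately flag that this is exactly where one either invokes \cite{ArakiEvans} or the equivalent Clifford-module extension criterion. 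Since the paper itself only cites the result, your write-up already goes further than what is required here.
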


If $H_0$ and $H_1$ are bounded, the index $\frac{1}{2} \mathrm{dim}\,\Ker(J_0+J_1)$
also computes the $\Z_2$-valued spectral flow of a skew-adjoint and Real Fredholm 
path $\{iH_t\}_{t\in[0,1]}$ connecting  $iH_0$ and $iH_1$~\cite[Proposition 6.2]{CPSB}. If $H_0$ or $H_1$ is unbounded, the 
same is true using the unbounded version of the $\Z_2$-valued spectral flow~\cite[Section 6]{BCLR}.

\subsection{Altland--Zirnbauer symmetries}

Here we consider the physical symmetries of free-fermionic systems 
as considered by Altland and Zirnbauer~\cite{AZ97, KZ16, AMZ}. 
Such symmetries may be unitary or anti-unitary operators on $\calH$ and commute with the BdG Hamiltonians. 
The quasifree ground state $\omega_P$ will then be invariant under such symmetries 
by Lemma \ref{lemma:quasifree_symm_invariance}.
In particular, we will take advantage of the following result first noted by 
Kennedy and Zirnbauer and then further developed by Alldridge, Max and Zirnbauer.

\begin{prop}[{\cite[Section 2]{KZ16}, \cite[Section 3.1--3.2]{AMZ}}] \label{prop:KZ_symmetries}
Let $H$ be a BdG Hamiltonian on $(\calH,\Gamma)$ with $0 \notin \sigma(H)$ and $J= iH|H|^{-1}$.
If $H$ has Altland--Zirnbauer symmetries, then there are 
Real mutually anti-commuting skew-adjoint unitaries $\{\kappa_j\}_{j=1}^n \subset \calB(\calH)$ 
such that $\kappa_j J  = -  J \kappa_j$ for all $j \in \{1,\ldots, n\}$ and where the integer  $n \in \{0,\ldots, 7\}$ depends 
on the symmetry.
\end{prop}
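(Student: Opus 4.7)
My plan is to construct the generators $\kappa_j$ from the available Altland--Zirnbauer symmetries of $H$ by composing them with $\Gamma$ and powers of $i$. I first recall that AZ symmetries on a BdG Hamiltonian consist of some subset of a time-reversal operator $T$ (anti-unitary, $[T,H]=0$, $T^2 = \pm \one$), a particle-hole operator $C$ (anti-unitary, $\{C,H\}=0$, $C^2 = \pm\one$), and a chiral operator $S$ (unitary, $\{S,H\}=0$). All such symmetries are assumed to commute with $\Gamma$ since they respect the BdG structure. The ten possible combinations yield the tenfold way, eight of which are real and indexed by $n \in \{0,\ldots,7\}$.

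Next I would compute how each symmetry interacts with $J = iH\lvert H\rvert^{-1}$. The anti-linearity of $\Gamma$ combined with $\Gamma H = -H\Gamma$ gives $\Gamma J = J\Gamma$, so $\Gamma$ itself is Real but contributes no Clifford generator. In contrast, a $\C$-linear unitary $V$ commuting with $\Gamma$ and anti-commuting with $H$ satisfies $VJ = -JV$; if in addition $V^2 = -\one$, then $V^* = -V$ and $V$ is immediately a Real skew-adjoint unitary anti-commuting with $J$. For an anti-unitary symmetry $R$ commuting with $\Gamma$, the product $R\Gamma$ is $\C$-linear with $(R\Gamma)^2 = R^2 \Gamma^2 = \pm \one$, and it inherits the commutation behaviour of $R$ with $H$. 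Multiplication by $i$, by $\Gamma$, or by already-constructed $\kappa$'s can then be used to enforce Reality and skew-adjointness in cases where the most direct combination fails.

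In the third step I would go through the eight real AZ classes in turn, selecting in each case an explicit collection of independent Real skew-adjoint unitaries formed from $T$, $C$, $S$ and $\Gamma$ that mutually anti-commute, square to $-\one$, and anti-commute with $J$. The resulting count $n\in\{0,\ldots,7\}$ matches the standard shift in $KO$-degree associated to the tenfold way, and the detailed constructions follow Kennedy--Zirnbauer~\cite{KZ16} and Alldridge--Max--Zirnbauer~\cite{AMZ}.

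The principal obstacle is the bookkeeping of Reality: the naive composition of an anti-unitary symmetry with $\Gamma$ that produces the correct anti-commutation with $H$ typically flips the Real structure, which then has to be restored by multiplication by $i$ together with another already-chosen Clifford element. Verifying that such a consistent choice exists in every class, and that the number of independent generators one obtains is precisely that of the corresponding real AZ class, is the essential nontrivial content of the statement; this is where I would rely on the systematic analysis in~\cite[Section 2]{KZ16} and~\cite[Section 3.1--3.2]{AMZ}.
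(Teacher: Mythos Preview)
Your proposal is correct and matches the paper's treatment: the paper does not prove this proposition but cites \cite{KZ16} and \cite{AMZ} for the case-by-case construction, giving only the single illustrative example $\kappa_T = \Gamma T$ for time-reversal with $T^2=-\one$, which is exactly the kind of composition you describe. Your outline of building the $\kappa_j$ from $T$, $C$, $S$ composed with $\Gamma$ and powers of $i$, together with the observation that the Reality bookkeeping is the nontrivial part deferred to the references, is precisely what the paper intends.
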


Proposition \ref{prop:KZ_symmetries} as stated hides many details, but we remark that the 
skew-adjoint unitaries $\{\kappa_j\}_{j=1}^n$ are concretely constructed from the physical symmetry 
operators that commute with $H$. For example, if $H$ is time-reversal symmetric via a self-adjoint anti-unitary 
$T$ such that $[T,H] = [T,\Gamma]=0$ and $T^2 = -\one$, we take $\kappa_T = \Gamma T$.

We consider gapped BdG Hamiltonians $H_0$ and $H_1$ which have the same Altland--Zirnbauer symmetry 
type and whose gapped ground states $\omega_0$ and $\omega_1$ are unitarily 
equivalent if we ignore symmetries.
That is, $J_0 - J_1$ is Hilbert-Schmidt and there are operators $\{\kappa_j\}_{j=1}^n$ such that 
$J_i \kappa_j = - \kappa_j J_i$ for all $i \in \{0,1\}$, $j \in \{1,\ldots,n\}$.

Therefore we again consider  the finite-dimensional space $\Ker(J_0+J_1)$, where $\kappa_j \cdot \Ker(J_0 + J_1) \subset \Ker(J_0+J_1)$.
Hence, $\Ker(J_0+J_1)$ is an ungraded $\Cl_{0, n+1}$-module with generators $\{J_0, \kappa_1,\ldots,\kappa_n\}$. 
 Letting $\calM_n$ denote the Grothendieck group of equivalence classes of ungraded $\Cl_{0,n}$-modules, we can 
 consider the class $[\Ker(J_0+J_1)] \in \calM_{n+1}/\calM_{n+2} \cong KO_{n+2}(\R)$ via the 
Atiyah--Bott--Shapiro isomorphism~\cite[Theorem 11.5]{ABS}. 
The vanishing of this class implies that the $KO_{n+2}(\R)$-valued spectral flow between the (bounded or unbounded) 
skew-adjoint endpoints
$iH_0$ and $iH_1$ vanishes~\cite[Section 5--6]{BCLR}. A non-trivial Clifford index guarantees that the ground state gap will close 
on any Fredholm path connecting 
$\omega_0$ and $\omega_1$. We therefore say that $[\Ker(J_0+J_1) ] \in KO_{n+1}(\R)$ is a topological obstruction to connect 
the two ground states $\omega_0$ and $\omega_1$ in a way that respects the Altland--Zirnbauer symmetries of $H_0$ and $H_1$.

\begin{remark}[Reinterpretation via Cayley map]
Let $J_0$ and $J_1$ be Real skew-adjoint unitaries on $(\calH,\Gamma)$ that anti-commute with 
the ungraded $\Cl_{0,n}$-generators $\{\kappa_j\}_{j=1}^n$. Suppose further that 
and $J_0 - J_1$ is Hilbert-Schmidt. In particular, $\|J_0 - J_1\|_{\calQ(\calH)} = 0$ and so the operators 
$J_0, \,J_1 \in \mathrm{Mult}(\mathbb{K}(\calH))$ fall into the framework of Section \ref{subsec:skew_cayley}. 
Hence there is an unbounded Kasparov module
\[
  \Big( \Cl_{n+2,0}, \, \ol{(J_0-J_1)\mathbb{K}}_\mathbb{K} \otimes \bigwedge\nolimits^{\! *} \C, \, J_0(J_1+J_0)(J_1 - J_0) \otimes \rho \Big), \quad \mathbb{K} = \mathbb{K}(\calH)
\]
with left Clifford generators $\{\one\otimes \gamma, J_0\otimes \rho, \kappa_1\otimes \rho, \ldots, \kappa_n\otimes \rho\}$.
We therefore obtain a class in $KKR(\Cl_{n+2,0}, \mathbb{K}(\calH)) \cong KO_{n+2}(\mathbb{K}(\calH)) \cong KO_{n+2}(\R)$ which can be 
determined by the ungraded Clifford module index $[\Ker(J_0+J_1)] \in KO_{n+2}(\R)$.
\end{remark}

\subsection{Compact group symmetries}

Let $G$ be a compact group and $\nu:G \to \Z_2$ a homomorphism. 
A representation $W$ of $G$ is  unitary/anti-unitary with respect to $\nu$ if $W_g$  is unitary (respectively anti-unitary)
for all $g \in G$ such that  $\nu(g)=0$ (respectively $\nu(g)=1$).  
We fix a unitary/anti-unitary representation $W$ and assume that $W_g \Gamma = \Gamma W_g$ for all $g \in G$. 
This then gives a linear/anti-linear action $\beta$ on $A^\mathrm{car}_\mathrm{sd}(\calH,\Gamma)$ 
relative to $\nu: G\to \Z_2$
such that $\beta_g(\frakc(v)) = \frakc( W_g v)$ for any $g \in G$ and $v \in \calH$.
Now take a gapped BdG Hamiltonian such that $[H,W_g]$ is well-defined and $[H,W_g]=0$ for all $g \in G$. 
Then for $P = \chi_{(0,\infty)}(H)$ the basis projection on $(\calH,\Gamma)$,  $[W_g, P]=0$ for any $g\in G$ and 
the ground state $\omega_P$ is invariant under $\beta$ by Lemma \ref{lemma:quasifree_symm_invariance}.
For $J = i \,\mathrm{sgn}(H) = iH|H|^{-1}$, a simple computation shows that
\[
   W_g J W_g^* = (-1)^{\nu(g)} J \quad \text{ for all } g\in G.
\]

Following the viewpoint of SPT phases, we are also interested in  ground states on the $G$-invariant 
subalgebra 
\[
A^\mathrm{car}_{\mathrm{sd}}(\calH,\Gamma)^{G}  = \{ a\in A^\mathrm{car}_{\mathrm{sd}}(\calH,\Gamma) \, : \,
 \beta_g(a) = a^{\nu(g)\ast} \text{ for all } g\in G \}, 
 \quad a^{\nu(g)\ast} = \begin{cases} a, & \nu(g) = 0 \\ a^*, & \nu(g)=1 \end{cases}.
\]
Given a state $\omega$ on $A^\mathrm{car}_{\mathrm{sd}}(\calH,\Gamma)$, we let 
$\omega^G$ denote its restriction to $A^\mathrm{car}_{\mathrm{sd}}(\calH,\Gamma)^G$.

Let us now take two $G$-symmetric gapped BdG Hamiltonians $H_0$ and $H_1$ such that 
$\omega_0$ and $\omega_1$ are equivalent without symmetry. 
Then the space $\Ker(J_0+J_1)$ is finite dimensional and $W_g \cdot\Ker(J_0+J_1) \subset \Ker(J_0+J_1)$ 
for all $g \in G$. 
Therefore $\Ker(J_0 + J_1)$ determines  an element of $R(G)$, the representation ring of $G$.

\begin{defn}[{cf. \cite[Section 5]{CareyEvans}}]
We denote by $R(G)_k$ the Grothendieck group of ungraded and real $(\Cl_{0,k-1}, G)$-bimodules modulo those 
extendable to ungraded and real $(\Cl_{0,k}, G)$-bimodules.
\end{defn}

By a $G$-equivariant extension of the Atiyah--Bott--Shapiro isomorphism, 
$R(G)_k \cong KO^G_{k}(\R)$~\cite[Section 2]{KasparovNovikov}. 
The ground states  $\omega_0$ and $\omega_1$ are such that $\Ker(J_0+J_1)$ is a 
$(\Cl_{0,1}, G)$-module with the Clifford generator $J_0$ and the unitary/anti-unitary representation $W$.  
We can therefore associate the element $[\Ker(J_0+J_1)] \in R(G)_2$.

\begin{thm}[{\cite[Theorem A]{Matsui87}, \cite[Theorem 5.1]{CareyEvans}}]
Let $G$ be a compact group and $W$ a unitary/anti-unitary representation on  $(\calH,\Gamma)$ 
commuting with $\Gamma$. 
Suppose that $P_0$ and $P_1$ are basis projections on $(\calH,\Gamma)$ such that for all $g\in G$,
$[W_g, P_0] = [W_g, P_1]=0$. Then 
the states $\omega_{P_0}^G$ and $\omega_{P_1}^G$ 
   on $A^\mathrm{car}_\mathrm{sd}(\mathcal{H},\Gamma)^G$  
   are equivalent if and only if:
 \begin{enumerate}
  \item[{\rm (i)}] $P_0 - P_1$ is a Hilbert-Schmidt operator, 
  \item[{\rm (ii)}] $[\Ker(J_0+J_1)]$ represents the identity in $R(G)_2$ 
  with $J_k = i(2P_k -\one)$ for $k\in \{0,1\}$.
\end{enumerate}
\end{thm}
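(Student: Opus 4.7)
The strategy is to promote the $\Z_2$-index criterion of the preceding Araki--Evans proposition to the $(G,\nu)$-equivariant setting, with condition (i) providing the Fredholm framework in which (ii) is well-posed. For the necessity of (i) I would argue that the $\nu$-twisted $G$-invariant subalgebra $A^\mathrm{car}_\mathrm{sd}(\calH,\Gamma)^G$ generates, in any quasifree GNS representation, the same weak closure as the even subalgebra $A^\mathrm{car}_\mathrm{sd}(\calH,\Gamma)^0$, so equivalence of $\omega^G_{P_0}$ and $\omega^G_{P_1}$ propagates to equivalence of the restrictions to the even part; by the Araki--Evans proposition together with Theorem \ref{thm:Araki_quasifree}(ii) this forces $P_0 - P_1$ to be Hilbert--Schmidt, hence $J_0 + J_1$ is Fredholm with finite-dimensional kernel and (ii) is meaningful.

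Granted (i), I would realise both quasifree states on a common Fock space $\calF$ via Araki's construction, reducing the problem to whether a Bogoliubov-type implementer $U : \calF \to \calF$ of the full-algebra equivalence $\omega_{P_0} \sim \omega_{P_1}$ can additionally be chosen to intertwine the second-quantised $G$-action induced by $W$. Such an implementer exists by (i) and is unique up to a finite-dimensional phase that is entirely controlled by $\Ker(J_0+J_1)$: on the orthogonal complement of the kernel the implementer is rigidly determined by $P_0$, $P_1$ and $\Gamma$, while its restriction to the kernel is parametrised by a choice of skew-adjoint involution anti-commuting with $J_0$. The relation $W_g J_k W_g^{-1} = (-1)^{\nu(g)} J_k$ shows that $W$ preserves $\Ker(J_0+J_1)$ and, together with $\Gamma$, endows it with the structure of a real $(\Cl_{0,1},G)$-module with Clifford generator $J_0$.

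A $G$-equivariant phase choice implementing the restricted equivalence exists exactly when this $(\Cl_{0,1},G)$-module extends to a $(\Cl_{0,2},G)$-module, i.e.\ when $[\Ker(J_0+J_1)]$ is the identity in $R(G)_2 \cong KO_2^G(\R)$; conversely, a $G$-equivariant implementer of the restricted equivalence supplies such a second Clifford generator, giving necessity. The main technical obstacle will be the bookkeeping around the $\nu$-twist: when $\nu(g)=1$ the operator $W_g$ is anti-linear, so the identification of $\Ker(J_0+J_1)$ as a \emph{real} $(\Cl_{0,1},G)$-module in Kasparov's sense requires carefully combining $W_g$, $\Gamma$ and $J_0$ to extract the correct real structure and to verify that the obstruction lands in $R(G)_2$ rather than some larger group.
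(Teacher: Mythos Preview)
The paper does not prove this theorem at all: it is quoted from \cite{Matsui87} and \cite{CareyEvans}, with only the one-line remark that the anti-unitary case follows because $W$ restricts to an orthogonal representation on $\calH_\R = \{v \in \calH : \Gamma v = v\}$. So there is no proof in the paper to compare your proposal against; you are effectively sketching a proof of the cited results themselves.

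That said, your outline has one genuine gap. Your argument for the necessity of (i) rests on the claim that $A^\mathrm{car}_\mathrm{sd}(\calH,\Gamma)^G$ and the even subalgebra $A^\mathrm{car}_\mathrm{sd}(\calH,\Gamma)^0$ generate the same weak closure in a quasifree GNS representation. This is not true in general: for instance, when $G = U(1)$ acts by gauge automorphisms, the $G$-fixed algebra is strictly smaller than the even subalgebra and its weak closure in a Fock representation is a proper subfactor. Equivalence of states on the smaller algebra $A^G$ does not automatically propagate upward to $A^0$, so you cannot simply invoke Araki--Evans and Theorem~\ref{thm:Araki_quasifree}(ii) this way. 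The route taken in \cite{Matsui87} is instead to decompose $\calH$ into $G$-isotypic components, identify $A^\mathrm{car}_\mathrm{sd}(\calH,\Gamma)^G$ with a tensor product of self-dual CAR algebras over these components, and then apply the non-equivariant criterion componentwise; the Hilbert--Schmidt condition on $P_0 - P_1$ is recovered by summing over the pieces.

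Your treatment of the sufficiency direction and of the obstruction (ii) --- realising both states on a common Fock space, locating the ambiguity of the implementer in $\Ker(J_0+J_1)$, and identifying the obstruction to a $G$-equivariant choice with the $(\Cl_{0,1},G)$-module class --- is in the right spirit and close to the Carey--Evans argument. The bookkeeping around the $\nu$-twist that you flag is exactly what the paper's one remark about restricting to $\calH_\R$ is meant to dispatch.
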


The result in~\cite{Matsui87} is stated for unitary representations, but the proof also holds 
for unitary/anti-unitary actions commuting with $\Gamma$ as $W$ restricts to a (linear) orthogonal 
representation on  $\calH_\R = \{ v\in \calH\,:\, \Gamma v = v \}$.

\section{Locality and $KO_{\ast}(A^\frakr)$-valued indices} \label{sec:KO_index_general}

Let us now move to a more abstract setting, where we have a $C^{\ast,\frakr}$-algebra 
$(A,\frakr)$ such that $(\mathrm{Mult}(A), \frakr) \subset (\calB(\calH),\mathrm{Ad}_\Gamma)$.  In 
particular, $a^\frakr = \Gamma a \Gamma$ for all $a \in \mathrm{Mult}(A)$. 
When $A= \mathbb{K}(\calH)$, we recover the results of the previous section.
We will assume that $A$ is trivially graded, 
but will often consider tensor products of the form $A \otimes \Cl_{r,s}$, which is $\Z_2$-graded. 
We assume that $A$ is $\sigma$-unital but make no further  assumptions on the specific form of $A$, so $A$ may 
be a crossed product, groupoid or (uniform) Roe algebra. In Section \ref{sec:Coarse_and_KHom} 
we will consider the case  $A = C^*(X)$.
We also note that the content of this section is similar to  previous work by Alldridge, Max and Zirnbauer~\cite{AMZ}.

\subsection{Basic construction}  \label{Subsec:General_KTheory_class_noSymm}

Because we also  consider unbounded BdG Hamiltonians, we recall the notion of a 
normalising function.
\begin{defn}
We call a continuous odd function $\chi: \R \to [-1,1]$ a normalising function if 
$\chi(t) \to \pm 1$ as $x \to \pm \infty$.
\end{defn}

We fix a Hilbert space $\calH$ and real involution $\Gamma$. 
We wish to consider topological properties of BdG Hamiltonians and quasifree ground states 
 relative to the $C^*$-algebra $A$. We therefore make the following assumption for this section.

\begin{assumption} \label{assump:A_locality}
Let $H_0$ and $H_1$ act on $(\calH,\Gamma)$ such  that for $k\in\{0,1\}$, 
$0 \notin \sigma(H_k)$, and $H_k=H_k^*=-\Gamma H_k \Gamma$. 
\begin{enumerate}
  \item[(a)] (Unbounded case) If $H_0$ and $H_1$ are unbounded, we assume that 
  $\chi(H_0), \, \chi(H_1) \in \mathrm{Mult}(A)$ and $\chi(H_0)-\chi(H_1) \in A$ for any normalising function $\chi$.
  \item[(b)] (Bounded case) If $H_0$ and $H_1$ are bounded, we assume that 
  $H_0$ and $H_1$ are invertible elements of $\mathrm{Mult}(A)$ and $H_0- H_1 \in A$.
\end{enumerate}
\end{assumption}

Physically, Assumption \ref{assump:A_locality} implies that we are given some Real $C^*$-algebra $A$
that specifies the allowed deformations of the BdG Hamiltonians within the Nambu space $(\calH,\Gamma)$.

\begin{lemma}
\begin{enumerate}
  \item[{\rm (i)}] Given Assumption \ref{assump:A_locality}(a), $J_0=iH_0|H_0|^{-1}$ and $J_1 = iH_1 |H_1|^{-1}$ are Real invertible elements of $\mathrm{Mult}(A)$. 
Furthermore, $J_0 - J_1$ and $P_0 -P_1 \in A$ for $P_0 = \chi_{(0,\infty)}(H_0)$ and $P_1 = \chi_{(0,\infty)}(H_1)$.
  \item[{\rm (ii)}] Given Assumption \ref{assump:A_locality}(b), then $\chi(H_0), \, \chi(H_1) \in \mathrm{Mult}(A)$
  for any normalising function $\chi$ and  $J_0 - J_1 \in A$ 
\end{enumerate}
\end{lemma}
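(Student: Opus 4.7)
I would treat the two cases in turn, reducing each claim to the (Borel or continuous) functional calculus together with a small algebraic manipulation.

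For (i), the hypothesis $0 \notin \sigma(H_k)$ for $k\in\{0,1\}$ ensures that $\sigma(H_0) \cup \sigma(H_1) \subset (-\infty,-\varepsilon] \cup [\varepsilon,\infty)$ for some $\varepsilon > 0$. I would then pick any normalising function $\chi$ that coincides with $\mathrm{sgn}$ on $\{|t| \geq \varepsilon\}$; the Borel functional calculus gives $\chi(H_k) = \mathrm{sgn}(H_k) = -iJ_k$ for $k=0,1$, so Assumption \ref{assump:A_locality}(a) immediately yields $J_k \in \mathrm{Mult}(A)$ and $J_0 - J_1 \in A$. Invertibility is clear from $J_k^2 = -\one$, and Realness follows from $\Gamma J_k \Gamma = \Gamma(iH_k)\Gamma \cdot \Gamma|H_k|^{-1}\Gamma = iH_k \cdot |H_k|^{-1} = J_k$, using anti-linearity of $\Gamma$, the relation $\Gamma H_k \Gamma = -H_k$, and the fact that $|H_k|$ is a Borel function of $H_k^2$ and hence $\frakr$-invariant. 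The projection statement then follows from the identity $P_k = \tfrac{1}{2}(\one - iJ_k)$, which yields $P_0 - P_1 = -\tfrac{i}{2}(J_0 - J_1) \in A$.

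For (ii), boundedness and invertibility of $H_k \in \mathrm{Mult}(A)$ imply that $\sigma(H_k) \subset \R \setminus \{0\}$ is compact, so the continuous functional calculus produces $\chi(H_k) \in C^*(H_k,\one) \subset \mathrm{Mult}(A)$ for any continuous $\chi$. To show $J_0 - J_1 \in A$, I would first note that for any real polynomial $p$, the telescoping identity $H_0^n - H_1^n = \sum_{j=0}^{n-1} H_0^j (H_0 - H_1) H_1^{n-1-j}$, together with $H_0 - H_1 \in A$ and the fact that $A$ is an ideal of $\mathrm{Mult}(A)$, gives $p(H_0) - p(H_1) \in A$. By Stone--Weierstrass, $\mathrm{sgn}$ is a uniform limit of real polynomials on the compact set $\sigma(H_0) \cup \sigma(H_1)$, so $\mathrm{sgn}(H_0) - \mathrm{sgn}(H_1)$ is a norm limit of elements of the closed subspace $A$, and hence lies in $A$; multiplying by $i$ gives $J_0 - J_1 \in A$.

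The main obstacle is the polynomial approximation step in (ii); everything else reduces essentially immediately to functional calculus and the standing hypotheses. What makes Stone--Weierstrass applicable is precisely that the compact set $\sigma(H_0) \cup \sigma(H_1)$ avoids $0$, where $\mathrm{sgn}$ would otherwise fail to be continuous; the gap assumption on $H_0$ and $H_1$ is exactly what is needed.
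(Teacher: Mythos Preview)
Your argument is correct. Part (i) is essentially identical to the paper's proof: both choose a normalising function that agrees with $\mathrm{sgn}$ on a neighbourhood of $\sigma(H_0)\cup\sigma(H_1)$ and read off the conclusions from Assumption~\ref{assump:A_locality}(a), with the relation $J_k = i(2P_k-\one)$ handling the projection statement. You additionally verify Realness explicitly, which the paper leaves implicit.

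Part (ii) is where you genuinely diverge. The paper argues directly with the decomposition
\[
H_0|H_0|^{-1} - H_1|H_1|^{-1} = H_0\big(|H_0|^{-1}-|H_1|^{-1}\big) + (H_0-H_1)|H_1|^{-1},
\]
and then shows $|H_0|^{-1}-|H_1|^{-1}\in A$ via the integral representation $|T|^{-1}=\tfrac{2}{\pi}\int_0^\infty (T^2+x^2)^{-1}\,\mathrm{d}x$, a resolvent identity, and a norm estimate on the integrand. Your route via the telescoping identity $H_0^n-H_1^n=\sum_j H_0^j(H_0-H_1)H_1^{n-1-j}\in A$ followed by Stone--Weierstrass approximation of $\mathrm{sgn}$ on the compact set $\sigma(H_0)\cup\sigma(H_1)\subset\R\setminus\{0\}$ is more elementary and avoids any analytic machinery beyond the continuous functional calculus. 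The paper's integral method is perhaps more readily adapted to other functions or to settings where polynomial approximation is less convenient, but for this particular statement your argument is shorter and entirely adequate.
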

\begin{proof}
(i) For a fixed $k\in \{0,1\}$, by assumption there is some $\epsilon>0$ such that $[-\epsilon,\epsilon] \cap \sigma(H_k) = \emptyset$. 
Therefore there is a 
suitable normalising function $\chi$ such that $i\chi(H_k) = J_k$. It is then immediate that $J_k \in \mathrm{Mult}(A)$ for 
any $k \in \{0,1\}$ and $J_0-J_1 \in A$. The results for $P_0$ and $P_1$ immediately follow as $J_k = i(2P_k-\one)$, $k \in \{0,1\}$.

(ii) Because $H_0$ and $H_1$ are bounded and invertible,
the first statement immediately follows from the continuous functional calculus. 
Next we write $H_1 = H_0+a$ for some $a \in A$ and note that 
\[
  H_0|H_0|^{-1} - H_1|H_1|^{-1} = H_0 \big( |H_0|^{-1} - |H_1|^{-1}\big) + a |H_1|^{-1}.
\]
Because $H_1$ is invertible in $\mathrm{Mult}(A)$, $|H_1|^{-1} \in \mathrm{Mult}(A)$ and so $a |H_1|^{-1} \in A$. 
The result will therefore follow if we can show that $|H_0|^{-1} - |H_1|^{-1} \in A$. We will use the 
integral formula for fractional powers, where for any self-adjoint and invertible operator $T$
\[
  |T|^{-1} = (T^2)^{-1/2} = \frac{2}{\pi} \int_0^\infty (T^2 + x^2)^{-1} \,\mathrm{d}x.
\]
Using the resolvent identity, we therefore find that 
\begin{align*}
   |H_0|^{-1} - |H_1|^{-1} &= \frac{2}{\pi} \int_0^\infty \Big( (H_0^2+x^2)^{-1} -  ( (H_0+a)^2 +x^2)^{-1} \Big) \mathrm{d}x \\
   &= \frac{2}{\pi} \int_0^\infty \Big( (H_0^2+x^2)^{-1} ((H_0 +a)^2 - H_0^2) ( (H_0+a)^2 +x^2)^{-1} \Big) \mathrm{d}x.
\end{align*}
Because $(H_0 +a)^2 - H_0^2 = a^2+aH_0 + H_0a \in A$, the integrand is an element of $A$. Finally for any 
$b \in A$, the continuous functional calculus gives the estimate
\[
   \big\| (H_0^2+x^2)^{-1}b ((H_0+a)^2+x^2)^{-1} \big\|_A \leq C (1+x^2)^{-1},
\]
which then implies that the integral will be norm-convergent in $A$.
\end{proof}

The equivalence condition on the gapped ground states $\omega_0$ and $\omega_1$ is now 
turned into a locality-like condition, where we assume that while $J_0$ and $J_1$ are 
in $\mathrm{Mult}(A)$, their difference $J_0-J_1 \in A$.

We note that 
$J_0 \otimes \rho, J_1 \otimes \rho \in \mathrm{Mult}(A) \otimes \Cl_{0,1}$ are Real odd self-adjoint unitaries. 
Recalling Section \ref{subsec:vanDaele_and_boundary}, if $J_0-J_1 \in A$, 
then we obtain a class in the relative van Daele $K$-theory group
\begin{equation} \label{eq:vD_class_general}
  \big[ J_1 \otimes \rho\big] - \big[ J_0 \otimes \rho \big] \in DK(\mathrm{Mult}(A)\otimes \Cl_{0,1}, \, \mathrm{Mult}(A)/ A \otimes \Cl_{0,1}) \cong DK(A \otimes \Cl_{0,1}).
\end{equation}
Because $A$ is ungraded, $DK(A \otimes \Cl_{0,1}) \cong KO_2( A^\frakr)$.

Considering $J_0, \, J_1 \in \mathrm{Mult}(A)$ as operators on the Hilbert $C^*$-module $A_A$, if $J_0-J_1 \in A$, 
then $\|J_0-J_1\|_{\calQ_A} = 0$ and 
we can  apply Proposition \ref{prop:skew_cayley_kk} to obtain the Real unbounded Kasparov module
\begin{equation} \label{eq:Basic_Cayley_class}
  \Big( \Cl_{2,0}, \, \ol{(J_0-J_1)A}_A \otimes \bigwedge\nolimits^{\! *} \C, \,  \calC_{J_0}(J_1) \otimes \rho \Big)
\end{equation}
with $\Cl_{2,0}$-generators $\{\one \otimes \gamma, J_0\otimes \rho \}$. We denote the corresponding 
class as $\big[\calC_{J_0}(J_1) \big] \in KKR(\Cl_{2,0}, A) \cong KO_2(A^\frakr)$. 

By Proposition \ref{prop:skew_cayley_kk}, this $KKR$-theory class can also be represented by the bounded 
Kasparov module
\begin{equation} \label{eq:Bdd_transform_on_whole_module}
   \big[\calC_{J_0}(J_1) \big] = \Big[ \Big( \Cl_{n+2,0}, \,  A_A \otimes \bigwedge\nolimits^{\! *} \C, \,  F_{\calC_{J_0}(J_1)} \otimes \rho \Big) \Big] 
   \in KKR( \Cl_{2,0}, A)
\end{equation}
with $F_{\calC_{J_0}(J_1)} = \calC_{J_0}(J_1)(\one - \calC_{J_0}(J_1)^2)^{-1/2}$ the bounded transform.

The following 
result is clear to readers familiar with Kasparov theory, but we state it with a proof for completeness.

\begin{prop} \label{prop:Connected_J_trivial_K}
Suppose that $\{J_t\}_{t\in [0,1]}$ is a continuous path of skew-adjoint Real unitaries in $\mathrm{Mult}(A)$ such 
that $J_t - J_0 \in A$ for all $t\in [0,1]$. Then the corresponding class $\big[ \calC_{J_0}(J_1) \big]$ is trivial 
in $KKR(\C, A\otimes \Cl_{0,2})$.
\end{prop}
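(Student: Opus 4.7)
The plan is to transfer the question into van Daele $K$-theory via the Cayley transform isomorphism of Theorem~\ref{thm:Cayley_iso_extra_Clifford}, where homotopy invariance makes the conclusion essentially immediate.

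First, I would observe that under the Cayley isomorphism $DK(\mathrm{Mult}(A) \hox \Cl_{0,1},\, \mathrm{Mult}(A)/A \hox \Cl_{0,1}) \cong KKR(\Cl_{2,0}, A) \cong KKR(\C, A \hox \Cl_{0,2})$, the class $[\calC_{J_0}(J_1)]$ coming from the unbounded Kasparov module in equation~\eqref{eq:Basic_Cayley_class} corresponds to the relative van Daele difference $[J_1 \otimes \rho] - [J_0 \otimes \rho]$, where the elements $J_t \otimes \rho$ are Real odd self-adjoint unitaries in $\mathrm{Mult}(A) \hox \Cl_{0,1}$. Proving triviality of the Kasparov class is therefore equivalent to proving triviality of this relative van Daele class.

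Next, I would use the hypothesised path to build a homotopy on the van Daele side. The family $\{J_t \otimes \rho\}_{t \in [0,1]}$ is a norm-continuous path of odd self-adjoint Real unitaries in $\mathrm{Mult}(A) \hox \Cl_{0,1}$, and the standing assumption $J_t - J_0 \in A$ yields $J_t \otimes \rho - J_0 \otimes \rho \in A \hox \Cl_{0,1}$ for every $t$. Hence the path stays inside the relative van Daele framework at every stage, so by homotopy invariance of $DK$-theory we obtain $[J_1 \otimes \rho] = [J_0 \otimes \rho]$ in the relative group, and transporting back through the Cayley isomorphism gives $[\calC_{J_0}(J_1)] = 0$ in $KKR(\C, A \hox \Cl_{0,2})$.

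The only conceptual obstacle is cosmetic: one must verify that the path really is admissible in the relative van Daele setup, but both required inputs (norm continuity of $t \mapsto J_t$ and the ideal containment $J_t - J_0 \in A$) are built into the hypotheses. I briefly considered a more direct approach that interpolates the bounded transforms $\{F_{\calC_{J_0}(J_t)} \otimes \rho\}_{t \in [0,1]}$, but the identity $\one - F_{\calC_{J_0}(J_t)}^2 = \tfrac{1}{4}(J_0 - J_t)^2$ forces $F^2 \to \one$ as $t \to 0$, making the $t=0$ endpoint a degenerate-but-singular module and obscuring the argument. The van Daele route avoids this pathology entirely.
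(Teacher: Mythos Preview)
Your argument is correct but takes a different route from the paper's primary proof. The paper argues directly in $KKR$-theory: it assembles the pointwise submodules $\ol{(J_t-J_0)A}_A$ into a Hilbert $A\otimes C([0,1])$-module, equips it with the bounded transforms $F_{\calC_{J_0}(J_t)}\otimes\rho$, and observes that evaluation at $t=0$ yields the zero module $(\Cl_{2,0},\,0_A,\,0)$, which is degenerate. The ``singularity'' you worried about at $t=0$ is therefore harmless: the module itself collapses to zero, not merely the operator.

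Your van Daele route is cleaner and the paper explicitly acknowledges this: immediately after Proposition~\ref{prop:vD_class_to_Cayley_class} the authors remark that the identification $\calC([J_1\otimes\rho]-[J_0\otimes\rho])=[\calC_{J_0}(J_1)]$ gives a simpler proof of Proposition~\ref{prop:Connected_J_trivial_K}, exactly along the lines you propose. One small logical point: you are invoking the content of Proposition~\ref{prop:vD_class_to_Cayley_class} (that the Cayley isomorphism sends the specific relative class $[J_1\otimes\rho]-[J_0\otimes\rho]$ to $[\calC_{J_0}(J_1)]$), which in the paper's ordering appears after the proposition you are proving. There is no circularity, since the proof of Proposition~\ref{prop:vD_class_to_Cayley_class} does not rely on Proposition~\ref{prop:Connected_J_trivial_K}, but you should either cite it explicitly or absorb its short verification into your argument.
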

\begin{proof}
The conditions on the path $\{J_t\}_{t\in [0,1]}$ imply that the pointwise Hilbert $C^*$-module 
$\ol{(J_t - J_0)A}_A$ can be completed to a $A\otimes C([0,1])$-module, $\ol{(J_\bullet -J_0)(A\otimes C([0,1])}_{A\otimes C([0,1])}$. 
The continuity of $J_t$ also ensures that the pointwise bounded transform 
 $F_{\calC_{J_0}(J_t)} = -\tfrac{1}{2} J_0(J_tJ_0-1)(J_0J_t+1)^{-1}(2+J_0J_t+J_tJ_0)^{1/2}$ 
gives a well-defined and skew-adjoint and Fredholm operator $F_\bullet$ on 
$\ol{(J_\bullet -J_0)(A\otimes C([0,1])}_{A\otimes C([0,1])}$. Hence the triple
\[
   \Big( \Cl_{2,0}, \, \ol{(J_\bullet -J_0)(A\otimes C([0,1])}_{A\otimes C([0,1])} \otimes \bigwedge\nolimits^{\! *} \C, \,  F_\bullet \otimes \rho \Big)
\]
is a well-defined Kasparov module that gives a homotopy in $KKR(\Cl_{2,0}, A)$. However, evaluating at $t=0$, the 
corresponding Kasparov module $\big( \Cl_{2,0}, \, 0_A, \, 0\big)$ is degenerate and therefore 
trivial in $KKR(\Cl_{2,0}, A)$. Hence $\big[ \calC_{J_0}(J_1) \big] = \big[ \calC_{J_0}(J_0) \big]$ 
is also trivial.
\end{proof}

The contrapositive of Proposition \ref{prop:Connected_J_trivial_K} says that if $\big[ \calC_{J_0}(J_1) \big]$ is non-trivial in $KO_2(A^\frakr)$, 
then $J_0$ and $J_1$  cannot be connected by a path $\{J_t\}_{t\in [0,1]}$ that is local with respect to the algebra $A$, 
$J_t - J_0 \in A$ for all $t \in [0,1]$. 

We now relate our Cayley Kasparov module to the relative van Daele $K$-theory class 
$[J_0\otimes \rho]-[J_1\otimes \rho] \in DK(A \otimes \Cl_{0,1})$. 
\begin{prop} \label{prop:vD_class_to_Cayley_class}
The Cayley isomorphism $\calC: DK(A \otimes \Cl_{0,1}) \xrightarrow{\simeq} KKR( \Cl_{2,0}, A )$ from Theorem 
\ref{thm:Cayley_iso_extra_Clifford} is such that 
\[
   \calC \big( [J_1\otimes \rho] - [J_0\otimes \rho]) = \big[ \calC_{J_0}(J_1) \big].
\]
\end{prop}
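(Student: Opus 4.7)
The plan is to unpack both sides of the equality using the ungraded-to-graded correspondence noted at the end of Section \ref{subsec:vanDaele_and_boundary}, and then to verify by direct computation that the two resulting Kasparov modules coincide after the standard Clifford degree-shift identifications.

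First, I would note that $J_0\otimes\rho$ and $J_1\otimes\rho$ are Real, odd, self-adjoint unitaries in $\mathrm{Mult}(A\otimes\Cl_{0,1}) \cong \mathrm{Mult}(A)\otimes\Cl_{0,1}$, with difference $(J_1-J_0)\otimes\rho \in A\otimes\Cl_{0,1}$. Applying Theorem \ref{thm:Cayley_iso_extra_Clifford} with ambient algebra $A\otimes\Cl_{0,1}$ and no extra Clifford generators ($k=0$) produces a representative of $\calC([J_1\otimes\rho]-[J_0\otimes\rho])$ in $KKR(\Cl_{1,0},\,A\otimes\Cl_{0,1})$, namely the unbounded Kasparov module
\[
\Big(\Cl_{1,0},\, \overline{((J_1-J_0)\otimes\rho)(A\otimes\Cl_{0,1})}_{A\otimes\Cl_{0,1}},\, \mathcal{C}_{J_0\otimes\rho}(J_1\otimes\rho)\Big)
\]
with Clifford generator $J_0\otimes\rho$.

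Second, I would carry out the tensor-product computation of the Cayley transform. Using $\rho^2 = -\one$, $\rho^{-1}=-\rho$, and that elements of $\mathrm{Mult}(A)\otimes\one$ commute with $\one\otimes\rho$ in the graded tensor product (as $\mathrm{Mult}(A)$ is trivially graded), the three factors of $\rho$ combine to a single $\rho$ and one obtains
\[
\mathcal{C}_{J_0\otimes\rho}(J_1\otimes\rho) = J_0(J_1+J_0)(J_1-J_0)^{-1}\otimes \rho = \mathcal{C}_{J_0}(J_1)\otimes\rho,
\]
while the Hilbert module simplifies to $\overline{(J_1-J_0)A}_A\otimes\Cl_{0,1}$.

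Third, I would transport the class to $KKR(\Cl_{2,0},A)$ via the degree-shift isomorphism $KKR(\Cl_{1,0},A\hox\Cl_{0,1}) \cong KKR(\Cl_{2,0},A)$ used in Theorem \ref{thm:Cayley_iso_extra_Clifford}, i.e.\ external product with $(\Cl_{1,0},(\Cl_{1,0})_{\Cl_{1,0}},0)$ followed by the stabilisation $\Cl_{0,1}\hox\Cl_{1,0}\cong\End(\bigwedge^{\!*}\C)$. Under this identification the module becomes $\overline{(J_1-J_0)A}_A\otimes\bigwedge^{\!*}\C$, the operator becomes $\mathcal{C}_{J_0}(J_1)\otimes\rho$ with $\rho$ now the odd skew-adjoint generator of $\End(\bigwedge^{\!*}\C)\cong\Cl_{1,1}$, and the two $\Cl_{2,0}$-generators are $\{J_0\otimes\rho,\,\one\otimes\gamma\}$. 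This is exactly the representative of $[\mathcal{C}_{J_0}(J_1)]$ furnished by Proposition \ref{prop:skew_cayley_kk} applied with $n=0$, and the equality follows.

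The main obstacle will be the bookkeeping of Clifford generators through the degree-shift: one must verify that the generator $J_0\otimes\rho$ produced by Theorem \ref{thm:Cayley_iso_extra_Clifford} together with the generator $\one\otimes\gamma$ supplied by the stabilisation $\Cl_{0,1}\hox\Cl_{1,0}\cong\End(\bigwedge^{\!*}\C)$ match the $\Cl_{2,0}$-action of Proposition \ref{prop:skew_cayley_kk} with the correct signs and ordering. Since both constructions use the same fixed identification of $\End(\bigwedge^{\!*}\C)$ with $\Cl_{1,1}$, this is ultimately a routine check rather than a substantial difficulty.
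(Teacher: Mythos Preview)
Your proposal is correct and follows essentially the same approach as the paper's proof: apply Theorem \ref{thm:Cayley_iso_extra_Clifford} to the odd self-adjoint unitaries $J_0\otimes\rho,\, J_1\otimes\rho$ in $\mathrm{Mult}(A)\otimes\Cl_{0,1}$, compute that $\calC_{J_0\otimes\rho}(J_1\otimes\rho)=\calC_{J_0}(J_1)\otimes\rho$ on the module $\overline{(J_1-J_0)A}_A\otimes\Cl_{0,1}$, and then pass through the degree-shift $KKR(\Cl_{1,0},A\hox\Cl_{0,1})\cong KKR(\Cl_{2,0},A)$ to recover the representative from Proposition \ref{prop:skew_cayley_kk}. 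The paper's proof is simply a terser version of exactly this argument, recording only the two Kasparov modules before and after the degree-shift without spelling out the intermediate tensor computation you carry out explicitly.
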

\begin{proof}
The Cayley isomorphism maps $[J_1\otimes \rho] - [J_0\otimes \rho]$ to the class of
\begin{align*}
  \Big( \Cl_{1,0}, \, \ol{(J_0-J_1)A}_{A} \otimes {\Cl_{0,1}}_{\Cl_{0,1}},\, \calC_{J_0}(J_1)\otimes \rho \Big) 
   \sim \Big( \Cl_{2,0}, \, \ol{(J_0-J_1)A}_{A} \otimes \bigwedge\nolimits^{\!*} \C, \, \calC_{J_0}(J_1) \otimes \rho\Big).
\end{align*}
Hence we recover $\big[ \calC_{J_0}(J_1) \big]$.
\end{proof}

Proposition \ref{prop:vD_class_to_Cayley_class} also gives a simpler proof of Proposition 
\ref{prop:Connected_J_trivial_K} as a continous path of Real skew-adjoint unitaries  
$\{J_t\}_{t\in [0,1]}\subset \mathrm{Mult}(A)$ with $J_t -J_0 \in A$ for all $t \in [0,1]$ 
immediately implies that the
relative van Daele class $[J_0\otimes \rho] - [J_1 \otimes \rho]$ is trivial.

\begin{remark}[Altland--Zirnbauer symmetries relative to $A$] \label{rk:AZ_class_general}
Recalling Proposition \ref{prop:KZ_symmetries}, let us furthermore 
assume that there exists a family of mutually anti-commuting, skew-adjoint and 
Real unitaries $\{\kappa_j\}_{j=1}^n \subset \mathrm{Mult}(A)$ such that 
$J_k \kappa_j = - \kappa_j J_k$ for $j \in \{1,\ldots, n\}$ and $k \in \{0,1\}$.

We can again apply Proposition \ref{prop:skew_cayley_kk} and obtain the Real unbounded Kasparov module
\[
  \Big( \Cl_{n+2,0}, \, \ol{(J_0-J_1)A}_A \otimes \bigwedge\nolimits^{\! *} \C, \, \calC_{J_0}(J_1)\otimes \rho  \Big)
\]
with Clifford generators  $\{\one \otimes \gamma, J_0 \otimes \rho, \kappa_1\otimes \rho, \ldots, \kappa_n \otimes \rho\}$, 
which  gives a class  $\big[ \calC_{J_0}(J_1) \big] \in KKR(\Cl_{n+2,0}, A) \cong KO_{n+2}(A^\frakr)$.

As in the case without symmetry, if the class $\big[ \calC_{J_0}(J_1) \big]$ is non-zero in $KO_{n+2}(A^\frakr)$, then 
$J_0$ and $J_1$ cannot be connected by a path $\{J_t\}_{t \in [0,1]}$ that anti-commutes with 
$\{\kappa_j\}_{j=1}^n$ and is such that $J_t - J_0 \in A$ for all $t \in [0,1]$.
Hence the class $\big[ \calC_{J_0}(J_1) \big]$ represents a topological obstruction to connect the two symmetric 
BdG Hamiltonians via a path that respects the Altland--Zirnbauer symmetries and is local with respect to the 
auxiliary algebra $A$.

We can similarly consider the relative van Daele class, where the odd self-adjoint unitaries 
$J_0 \otimes \rho$ and $J_1 \otimes \rho$ in $\mathrm{Mult}(A) \otimes \Cl_{0,1}$ 
anti-commute with $\{\kappa_j \otimes \rho\}_{j=1}^n$. 
Recalling Remark \ref{rk:extra_clifford_gens_shifts_the_degree}, we have that 
\[
  [J_0 \otimes \rho] - [J_1 \otimes \rho] \in DK(\mathrm{Mult}(A) \otimes \Cl_{0, n+1}, \, \mathrm{Mult}(A) \otimes \Cl_{0, n+1}/ A \otimes \Cl_{0,n+1} ).
\]
and the same basic argument as Propositon \ref{prop:vD_class_to_Cayley_class} gives that this relative class 
is represented by $\big[ \calC_{J_0}(J_1) \big]$ under the isomorphism 
$DK(A \otimes \Cl_{0, n+1}) \xrightarrow{\simeq} KKR( \Cl_{n+2,0}, A)$.
\end{remark}

\subsection{Compact $G$-symmetries} \label{subsec:G_equiv_index_general}

Let us fix a compact group $G$ and consider symmetries  via a linear/anti-linear action $\beta$ of 
$G$ on $A$ relative to $\nu:G\to\Z_2$. 
That is, for a given $g \in G$, $\beta_g$ is linear when $\nu(g)=0$ and anti-linear when $\nu(g)=1$. Such an action has a unique 
extension to $\mathrm{Mult}(A)$, which we also denote by $\beta$. We  assume 
that this action is compatible with the real structure, 
$\beta_g(a^\frakr) = [\beta_g(a)]^\frakr$ for all $a\in \mathrm{Mult}(A)$ and $g\in G$. 
Such group actions may be built from a unitary/anti-unitary 
representation $W$ of $G$ on $(\calH,\Gamma)$ relative to $\nu:G\to\Z_2$ and 
such that $\mathrm{Ad}_{W_g}(\mathrm{Mult}(A)) \subset \mathrm{Mult}(A)$ and $[W_g,\Gamma]=0$ for all $g\in G$.

Let us now consider BdG  Hamiltonians $H_k$, $k\in \{0,1\}$, satisfying Assumption \ref{assump:A_locality}. 
If the BdG Hamiltonians satisfy Assumption \ref{assump:A_locality}(b), then we furthermore assume
 that 
$\beta_g(H_k) = H_k$ for all $g\in G$ and $k\in \{0,1\}$. This then implies that $\beta_g(J_k) = (-1)^{\nu(g)} J_k$ 
for all $g \in G$, $k\in \{0,1\}$ and $J_k = iH_k |H_k|^{-1}$.
If the BdG Hamiltonians satisfy Assumption \ref{assump:A_locality}(a), then we assume that
$\beta_g(J_k) = (-1)^{\nu(g)} J_k$ for all $g \in G$ and $k\in \{0,1\}$.

Given a $C^*$-algebra $B$ with $G$-action $\beta: G\to \Aut(B)$, recall that 
a Hilbert $C^*$-module $E_B$ is $G$-equivariant if there is an action $\wt{\eta}$ of $G$ on $E$ such that
\[
  \wt{\eta}_g( e \cdot b) = \wt{\eta}_g(e) \cdot \beta_g(b), \qquad \qquad  ( \wt{\eta}_g(e_1) \mid \wt{\eta}_g(e_2) )_B = \beta_g\big( ( e_1 \mid e_2)_B \big)
\]
for all $e, e_1,e_2 \in E_B$, $b \in B$ and $g\in G$.
A linear/anti-linear action on $E_B$ extends to a linear/anti-linear 
action ${\eta}$ on $\End_B(E)$, where for $T \in \End_B(E)$, ${\eta}_g(T)e = \wt{\eta}_g\big( T \wt{\eta}_{g^{-1}}(e)\big)$ 
for all $e \in E_B$ and $g \in G$.
The following is a simple check.
\begin{lemma}
Given $J_0, \, J_1 \in \mathrm{Mult}(A)$ and a linear/anti-linear group action $\beta$ on $\mathrm{Mult}(A)$ 
relative to $\nu: G\to \Z_2$ as above, 
the Hilbert $C^*$-module $\ol{(J_0-J_1)A}_A$ is $G$-equivariant via the action 
$\wt{\beta}_g( (J_0-J_1)a) = (-1)^{\nu(g)}(J_0-J_1)\beta_g(a)$ for $a \in A$ and $g \in G$.
\end{lemma}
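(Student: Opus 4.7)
The plan is to observe that the putative action $\wt{\beta}_g$ is simply the restriction to the submodule of the original (linear/anti-linear) action $\beta_g$ on $A$, which immediately settles well-definedness, continuity, and the group homomorphism property. Indeed, since $\beta_g$ is a $*$-algebra map (or its conjugate), one has
\[
\beta_g\big((J_0-J_1)a\big) \;=\; \beta_g(J_0-J_1)\,\beta_g(a) \;=\; (-1)^{\nu(g)}(J_0-J_1)\,\beta_g(a),
\]
using the hypothesis that $\beta_g(J_k)=(-1)^{\nu(g)}J_k$ for $k\in\{0,1\}$. Hence the formula defining $\wt{\beta}_g$ is exactly $\beta_g$ on the dense submodule $(J_0-J_1)A \subseteq A$, so there is nothing to check regarding well-definedness on the generators.

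Next I would verify that $\wt{\beta}_g$ preserves the submodule $\ol{(J_0-J_1)A}$ and extends to its closure. Each $\beta_g$ is an isometry on $A$ (it is a $*$-automorphism in the linear case, an anti-automorphism in the anti-linear case, both preserving the $C^*$-norm). The formula $\wt{\beta}_g((J_0-J_1)a)=(-1)^{\nu(g)}(J_0-J_1)\beta_g(a)$ shows that $\wt{\beta}_g$ maps $(J_0-J_1)A$ onto $(J_0-J_1)\beta_g(A) = (J_0-J_1)A$ bijectively and isometrically, so it extends uniquely by continuity to $\ol{(J_0-J_1)A}_A$. Group-action-ness ($\wt{\beta}_{gh}=\wt{\beta}_g\wt{\beta}_h$) is inherited from $\beta$.

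Third, I would verify the two axioms of an equivariant Hilbert $C^*$-module action on dense elements $e_i = (J_0-J_1)a_i$ and $b\in A$. Compatibility with the right action:
\[
\wt{\beta}_g(e\cdot b) = \beta_g\big((J_0-J_1)ab\big) = (-1)^{\nu(g)}(J_0-J_1)\beta_g(a)\beta_g(b) = \wt{\beta}_g(e)\cdot\beta_g(b).
\]
For the inner product, using that $(J_0-J_1)^* = -(J_0-J_1)$ is also preserved up to the sign $(-1)^{\nu(g)}$ by $\beta_g$:
\[
\bigl(\wt{\beta}_g(e_1)\mid \wt{\beta}_g(e_2)\bigr)_A
= ((-1)^{\nu(g)})^2\,\beta_g(a_1)^*(J_0-J_1)^*(J_0-J_1)\beta_g(a_2)
= \beta_g\bigl((e_1\mid e_2)_A\bigr),
\]
where the two signs cancel and we have used that $\beta_g$ is a $*$-(anti)homomorphism together with $\beta_g\bigl((J_0-J_1)^*(J_0-J_1)\bigr)=(J_0-J_1)^*(J_0-J_1)$.

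The only mildly subtle point is the anti-linear case $\nu(g)=1$: one must track that the inner product is conjugate-linear in the first slot, so that the anti-linearity of $\beta_g$ on scalars is absorbed precisely by the formula $(e_1\mid e_2)_A \mapsto \beta_g((e_1\mid e_2)_A)$, as is the sign $(-1)^{\nu(g)}$ appearing once in each slot. No genuine obstacle arises; the statement is essentially a bookkeeping lemma verifying that restricting $\beta_g$ to the $G$-invariant submodule $\ol{(J_0-J_1)A}$ gives a well-defined Hilbert-module action.
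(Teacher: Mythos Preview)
Your proof is correct and supplies exactly the routine verification the paper omits; the paper merely prefaces the lemma with ``The following is a simple check'' and gives no argument. One small terminological slip: in the anti-linear case $\beta_g$ is an anti-\emph{linear} automorphism (conjugate-linear over scalars but still multiplicative, since it arises as $\mathrm{Ad}_{W_g}$ for an anti-unitary $W_g$), not an anti-automorphism that reverses products---your computations correctly use multiplicativity, so this does not affect the argument.
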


We note that $\Dom(\calC_{J_0}(J_1))$ is invariant under $\wt{\beta}_g$  and  $\beta_{g}(\calC_{J_0}(J_1)) = (-1)^{\nu(g)} \calC_{J_0}(J_1)$.
In order to construct a $G$-equivariant Kasparov module, we would like $\calC_{J_0}(J_1) \otimes \rho$ to be $G$-invariant. 
We therefore define 
the following linear/anti-linear action $\wt{\alpha}$  on $\ol{(J_0-J_1)A}_A \otimes \bigwedge^*\C$,
\begin{equation} \label{eq:G_action_for_J0J1_kasmod}
  \wt{\alpha}_g \big( (J_0-J_1)a \otimes w\big) = (-1)^{\nu(g)} (J_0-J_1) \beta_g (a) \otimes \gamma^{\nu(g)} w 
\end{equation}
for any $a \in A$, $w \in \bigwedge^*\C$ and $g \in G$.
We therefore see that under the induced action of   $\wt{\alpha}$, 
\begin{align*}
\wt{\alpha}_g \circ (\calC_{J_0}(J_1) \otimes \rho) \circ \wt{\alpha}_{g^{-1}} \big( (J_0-J_1)a \otimes w \big) 
&= (-1)^{\nu(g)} J_0(J_1+J_0) (\beta_{g}\circ \beta_{g^{-1}})(a) \otimes \gamma^{\nu(g)} \rho \gamma^{\nu(g)} w \\
&= J_0(J_1+J_0) a \otimes w \\
&= (\calC_{J_0}(J_1) \otimes \rho)( (J_1-J_1)a \otimes w)
\end{align*}
for any $g \in G$, $a \in A$ and $w \in \bigwedge^*\C$. Hence
 $\calC_{J_0}(J_1) \otimes \rho$ is $G$-invariant under this action.

One can similarly check that the left $\Cl_{2,0}$-generators on $\ol{(J_0-J_1)A}_A \otimes \bigwedge^*\C$ are invariant 
under the induced action on $\alpha$ on $\End_A \big( \ol{(J_0-J_1)A}  \otimes \bigwedge^* \C\big)$ 
from $\wt{\alpha}$,  
\[
  \alpha_g( \one \otimes \gamma ) = \one \otimes \gamma, \qquad \qquad
  \alpha_g( J_0 \otimes \rho) = (-1)^{\nu(g)}J_0 \otimes (-1)^{\nu(g)} \rho = J_0\otimes \rho,
\]
for any $g \in G$.
We summarise our results.

\begin{prop} \label{prop:Equiv_K_theory_class_general}
Let $H_0$ and $H_1$ be BdG Hamiltonians satisfying Assumption \ref{assump:A_locality} 
and $J_k = i\,\mathrm{sgn}(H_k)$, $k\in \{0,1\}$. Let $\nu: G\to \Z_2$ be a 
homomorphism and suppose that 
there is a linear/anti-linear action $\beta$ of $G$ on $\mathrm{Mult}(A)$ relative to $\nu$ and 
such that $\beta_g(J_k) = (-1)^{\nu(g)}J_k$ for all $g \in G$ and $k \in \{0,1\}$. Then 
for $\wt{\alpha}$ the linear/anti-linear group action on the Hilbert $C^*$-module given by 
Equation \eqref{eq:G_action_for_J0J1_kasmod}, the triple
\[
  \Big( \Cl_{2,0}, \, \ol{(J_0-J_1)A}_A \otimes \bigwedge\nolimits^{\! *} \C, \, \calC_{J_0}(J_1) \otimes \rho \Big)
\]
is a Real unbounded  Kasparov module with Clifford generators $\{\one \otimes \gamma, J_0 \otimes \rho\}$ that
that restricts to a real $G$-equivariant unbounded Kasparov module on the elements fixed by the real structure.
\end{prop}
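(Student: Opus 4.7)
The plan is to assemble the claim from two essentially established pieces: the skew-adjoint Cayley construction of Proposition~\ref{prop:skew_cayley_kk}, which already furnishes the Real unbounded Kasparov module, and the equivariance computations sketched in the paragraphs leading up to the statement.

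First I would observe that under Assumption~\ref{assump:A_locality} the operators $J_0, J_1 \in \mathrm{Mult}(A)$ are Real skew-adjoint unitaries with $J_0 - J_1 \in A = \mathbb{K}_A(A)$. Viewing them as operators on the standard Hilbert $C^*$-module $A_A$, the difference vanishes in the Calkin algebra, so the hypothesis $\|J_0 - J_1\|_{\calQ_A(A)} < 2$ of Proposition~\ref{prop:skew_cayley_kk} is automatic. The second part of that proposition then applies verbatim and produces exactly the Real unbounded Kasparov module in the statement, with left Clifford action generated by $\{\one \otimes \gamma, J_0 \otimes \rho\}$.

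Next I would verify that the formula $\wt{\alpha}$ in equation~\eqref{eq:G_action_for_J0J1_kasmod} endows $\ol{(J_0-J_1)A}_A \otimes \bigwedge^{\!*} \C$ with a $G$-equivariant structure compatible with $\beta$. Four elementary checks are needed: well-definedness on the dense submodule (using $\beta_g(J_0 - J_1) = (-1)^{\nu(g)}(J_0-J_1) \in A$ and then extension by continuity); the group law $\wt{\alpha}_g \circ \wt{\alpha}_h = \wt{\alpha}_{gh}$ (which reduces to $\nu(gh) \equiv \nu(g)+\nu(h) \bmod 2$ together with $\gamma^2 = \one$, noting that the sign $(-1)^{\nu(h)}$ is real and therefore untouched by a possibly anti-linear $\wt{\alpha}_g$); compatibility with the right $A$-action and with the $A$-valued inner product, both of which are immediate from the defining formula; and compatibility with the Real structure on the module, which follows from the assumed $\beta \circ \frakr = \frakr \circ \beta$. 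The $G$-invariance of $\calC_{J_0}(J_1)\otimes \rho$ and of the Clifford generators $\one \otimes \gamma$ and $J_0 \otimes \rho$ under the induced action on endomorphisms has already been carried out in the excerpt; the crucial cancellation is between the sign $(-1)^{\nu(g)}$ appearing in $\beta_g(J_0) = (-1)^{\nu(g)} J_0$ and the conjugation $\gamma^{\nu(g)} \rho \gamma^{\nu(g)} = (-1)^{\nu(g)}\rho$ coming from the anti-commutation of $\gamma$ and $\rho$.

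Finally I would restrict all data to the elements fixed by the two real structures. On the real subalgebra $A^\frakr$ the action $\beta$ becomes a genuinely real-linear action, and the restricted Hilbert $C^*$-module, left action, Cayley operator, and Clifford generators all inherit compatible real-linear $G$-structures. The main bookkeeping hurdle, and the only place where the argument is not purely formal, is confirming that the combined twist by $(-1)^{\nu(g)}$ and $\gamma^{\nu(g)}$ in the definition of $\wt{\alpha}$ is precisely what is required so that, after passing to the $\frakr$-fixed submodule, the $G$-action becomes grading-preserving and real-linear; once this is in place, all remaining axioms of a real $G$-equivariant unbounded Kasparov module are direct consequences of their Real counterparts already verified above.
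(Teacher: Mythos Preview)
Your proposal is correct and follows essentially the same approach as the paper: the Real unbounded Kasparov module is obtained from Proposition~\ref{prop:skew_cayley_kk} (as in Equation~\eqref{eq:Basic_Cayley_class}), and the equivariance verifications you outline are precisely the computations the paper carries out in the paragraphs immediately preceding the proposition statement, which the paper then simply ``summarises''. Your write-up is slightly more explicit about the module-level checks (group law, inner-product compatibility, and the passage to the $\frakr$-fixed part), but the structure and key cancellations are identical.
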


Once again, the class $\big[\calC_{J_0}(J_1)\big] \in KO_2^G(A^\frakr)$ of the unbounded 
Real Kasparov module from Proposition \ref{prop:Equiv_K_theory_class_general} gives an obstruction to the existence 
of a $G$-invariant path of Real skew-adjoint unitaries $\{J_t\}_{t\in [0,1]} \subset \mathrm{Mult}(A)$ such that $J_t-J_0 \in A$. A 
bounded representative of $\big[\calC_{J_0}(J_1)\big]$ can be written as 
\[
  \Big( \Cl_{2,0}, \, A_A \otimes \bigwedge\nolimits^{\! *} \C, \, F_{\calC_{J_0}(J_1)} \otimes \rho \Big)
\]
with with Clifford generators $\{\one \otimes \gamma, J_0 \otimes \rho\}$ and 
group action $\wt{\alpha}_g(a \otimes w) = \beta_g(a) \otimes \gamma^{\nu(g)}w$ for all 
$a \otimes w \in A \otimes \bigwedge^*\C$ and $g \in G$.

\subsection{Defects and short exact sequences} \label{sec:DefectsAndSES}

Many defects  of interest in condensed matter physics such as  codimension $1$ boundaries, screw-dislocations  etc. 
have a description via a  short exact sequence of observable $C^*$-algebras.  With these examples in mind, 
we assume that there is a semi-split short exact sequence of $C^{\ast,\frakr}$-algebras
\begin{equation} \label{eq:Bulk_Defect_Extension}
  0 \to B \to E \xrightarrow{\phi} A \to 0.
\end{equation}
We will assume that $E$ is unital. Using that $\mathrm{Mult}(A) = \End_A(A)$, the map $\phi:E \to A$ induces
a unital $\ast$-homomorphism (also denoted by $\phi$), $\phi :E \to \mathrm{Mult}(A)$. 
Given skew-adjoint unitaries $J_0, \, J_1 \in \mathrm{Mult}(A)$, we will assume 
a triviality condition on $J_0$, namely that a lift $T \in \phi^{-1}(J_0)$ is invertible. Therefore, 
the phase $\wt{J}_0 = T|T|^{-1}$ is a Real skew-adjoint unitary in $E$ 
that will act as a basepoint. We also fix a skew-adjoint lift $\wt{J}_1 \in E$ 
of $J_1$.

Recalling Proposition \ref{prop:KZ_symmetries}, let us also assume that we have Altland--Zirnbauer 
symmetries that are compatible with respect the short exact sequence of 
Equation \eqref{eq:Bulk_Defect_Extension}. That is, we assume that there are 
mutually anti-commuting skew-adjoint unitaries $\{\kappa_j\}_{j=1}^n \subset E$ 
such that 
\[
  \kappa_j = \kappa_j^\frakr  = - \kappa_j^*, \qquad \qquad
  \kappa_j \wt{J}_0 = -\wt{J}_0 \kappa_j,  \qquad \qquad   \kappa_j \wt{J}_1 = -\wt{J}_1 \kappa_j
\]
for all $j \in \{1,\ldots, n\}$.
Such a condition ensures that $\{\phi(\kappa_j)\}_{j=1}^n \subset \mathrm{Mult}(A)$ act as Altland--Zirnbauer  symmetries 
in the sense of Remark \ref{rk:AZ_class_general}.

A simple way to obtain such Altland--Zirnbauer symmetries is if there is a vector space 
$\calW$ such that $\{ \kappa_j\}_{j=1}^n \subset \End(\calW)$ and there is a factorisation of the 
short exact sequence 
\[
  0 \to B' \otimes \End(\calW) \to E' \otimes \End(\calW) \to A' \otimes \End(\calW) \to 0.
\]

Assuming the condition that $J_0-J_1 \in A$, we can construct a bounded or unbounded Kasparov 
module that gives a class in $KKR(\Cl_{n+2}, A)$. Our task is to examine the image of this class under 
the boundary map induced from Equation \eqref{eq:Bulk_Defect_Extension}. 

\begin{prop} \label{prop:Boundary_KK_class}
The image of $\big[ \calC_{J_0}(J_1)\big]$  under the 
boundary map $\partial: KKR( \Cl_{n+2,0}, A) \to KKR( \Cl_{n+1, 0}, B )$ is  represented by the 
bounded Kasparov module
\[
  \Big( \Cl_{n+1, 0}, \,  B_B \otimes \bigwedge\nolimits^{\! *} \C, \, \wt{J}_1 \otimes \rho \Big)
\]
with $\Cl_{n+1,0}$-generators $\{\one\otimes \gamma, \kappa_1\otimes \rho, \ldots, \kappa_n \otimes \rho\}$.
\end{prop}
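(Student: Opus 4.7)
The plan is to translate the statement into the van Daele picture, where the boundary map is described explicitly by Proposition \ref{prop:bdry_map_is_nice}, and then pull back through the Cayley isomorphism to recover a $KKR$-representative.

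First, I would pass from skew-adjoint unitaries to odd self-adjoint unitaries by tensoring with $\Cl_{0,1}$. Concretely, $J_0 \otimes \rho$ and $J_1 \otimes \rho$ are Real odd self-adjoint unitaries in $\mathrm{Mult}(A) \otimes \Cl_{0,1}$ that anti-commute with the odd self-adjoint unitaries $\{\kappa_j \otimes \rho\}_{j=1}^n$. Because $J_0 - J_1 \in A$ by Assumption \ref{assump:A_locality}, the difference $(J_0-J_1)\otimes \rho$ lies in $A \otimes \Cl_{0,1}$, so by Remark \ref{rk:extra_clifford_gens_shifts_the_degree} and the excision description of van Daele $K$-theory we obtain a relative class
\[
  [J_1\otimes\rho]-[J_0\otimes\rho]\in DK\big(\mathrm{Mult}(A)\otimes\Cl_{0,n+1},\,\mathrm{Mult}(A)\otimes\Cl_{0,n+1}/A\otimes\Cl_{0,n+1}\big).
\]
By the Cayley isomorphism of Theorem \ref{thm:Cayley_iso_extra_Clifford} (applied just as in Proposition \ref{prop:vD_class_to_Cayley_class}, but with the extra anti-commuting Clifford generators), this relative class maps to $[\calC_{J_0}(J_1)] \in KKR(\Cl_{n+2,0},A)$.

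Next I would compute the boundary map of this van Daele class with respect to the extension \eqref{eq:Bulk_Defect_Extension} (tensored with $\Cl_{0,1}$). The hypotheses provide a Real, odd, self-adjoint basepoint $\wt{J}_0 \otimes \rho \in E \otimes \Cl_{0,1}$ with $\phi(\wt{J}_0) = J_0$, and a Real odd self-adjoint lift $\wt{J}_1\otimes\rho \in E\otimes \Cl_{0,1}$ of $J_1\otimes\rho$, both anti-commuting with $\{\kappa_j\otimes\rho\}_{j=1}^n$. This is exactly the situation of Proposition \ref{prop:bdry_map_is_nice} with $k=n+1$ and basepoint $\wt{J}_0\otimes\rho$. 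Applying that proposition, the composition $\calC_B\circ\partial$ sends $[J_1\otimes\rho]-[J_0\otimes\rho]$ to the class of the bounded Real Kasparov module
\[
  \Big(\Cl_{n+1,0},\,B_B\otimes \bigwedge\nolimits^{\!*}\C,\,\wt{J}_1\otimes\rho\Big),
\]
with the Clifford generators $\{\one\otimes\gamma,\kappa_1\otimes\rho,\ldots,\kappa_n\otimes\rho\}$, after identifying $KKR(\C,B\otimes\Cl_{0,n+1})\cong KKR(\Cl_{n+1,0},B)$.

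Finally, naturality of the Cayley isomorphism with respect to boundary maps makes the diagram
\[
\begin{CD}
DK(A\otimes\Cl_{0,n+1}) @>\calC_A>> KKR(\Cl_{n+2,0},A) \\
@V\partial VV @VV\partial V \\
DK(B\otimes\Cl_{1,n+1}) @>\calC_B>> KKR(\Cl_{n+1,0},B)
\end{CD}
\]
commute, so the image of $[\calC_{J_0}(J_1)]$ under the $KKR$-boundary map agrees with the Kasparov module computed above. The main obstacle I expect is verifying this naturality carefully (checking signs, Clifford degrees and the identification of the degree-shifted groups) and confirming that the lift $\wt{J}_1$ indeed satisfies all anti-commutation relations required so that Proposition \ref{prop:bdry_map_is_nice} applies verbatim; once that bookkeeping is settled the proposition follows.
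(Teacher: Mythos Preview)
Your proposal is correct and follows essentially the same route as the paper: translate $[\calC_{J_0}(J_1)]$ into the van Daele picture as $[J_1\otimes\rho]-[J_0\otimes\rho]$ via the Cayley isomorphism (Remark~\ref{rk:AZ_class_general}/Proposition~\ref{prop:vD_class_to_Cayley_class}), take $\wt{J}_0\otimes\rho$ as basepoint, and apply Proposition~\ref{prop:bdry_map_is_nice}. The paper's proof is terser and leaves the naturality of the Cayley isomorphism with respect to boundary maps implicit, whereas you spell out the commuting square; this is the only real difference in presentation.
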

\begin{proof}
The class $\big[ \calC_{J_0}(J_1)\big]$ can equivalently be represented by the 
van Daele $K$-theory element $[J_1 \otimes \rho]- [J_0 \otimes \rho] \in DK(A \otimes \Cl_{0,n+1})$. 
Taking $J_0$ and the lift $\wt{J}_0$ as a basepoint, 
we can therefore apply Proposition \ref{prop:bdry_map_is_nice} to obtain the result.
\end{proof}

Let us also note the special situation with no additional symmetries.
\begin{prop}
The image of $\big[ \calC_{J_0}(J_1)\big] \in KO_2(A^\frakr)$ under the boundary 
map $\partial: KO_2(A^\frakr) \to KO_1(B^\frakr)$ is given by $\big[e^{\pi \wt{J}_1} \big]$.
\end{prop}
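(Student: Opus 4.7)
The plan is to specialize Proposition~\ref{prop:Boundary_KK_class} to $n=0$ and then identify the resulting Kasparov module class with $[e^{\pi\wt J_1}]$ via van Daele K-theory and functional calculus. Invoking Proposition~\ref{prop:Boundary_KK_class} with $n=0$ directly gives that $\partial[\calC_{J_0}(J_1)] \in KKR(\Cl_{1,0},B)\cong KO_1(B^\frakr)$ is represented by the Real Kasparov module
\[
  \bigl(\Cl_{1,0},\ B_B\otimes\bigwedge\nolimits^{\!*}\C,\ \wt J_1\otimes\rho\bigr)
\]
with $\Cl_{1,0}$-action by $\one\otimes\gamma$; here $\wt J_1\in E\subset\mathrm{Mult}(B)$ is a Real skew-adjoint lift of $J_1$, so $\phi(\wt J_1^2)=J_1^2=-\one$ forces $\wt J_1^2+\one\in B$, which validates the Kasparov module condition $\one-(\wt J_1\otimes\rho)^2=(\wt J_1^2+\one)\otimes\one\in B\otimes\End(\bigwedge\nolimits^{\!*}\C)$.

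Next, I would produce the van Daele representative of $[e^{\pi\wt J_1}]$. The element $u:=e^{\pi\wt J_1}\in E$ is a Real unitary since $\wt J_1^\frakr=\wt J_1$ is skew-adjoint. Applying continuous functional calculus to the factorisation $e^{\pi z}+1=(z^2+1)h(z)$, valid for an entire function $h$ because $\pm i$ are simple zeros of the left-hand side, yields $u+\one=h(\wt J_1)(\wt J_1^2+\one)\in B$. Hence $[u]\in KO_1(B^\frakr)$ is well-defined, and is represented in van Daele K-theory (cf.\ the Examples following Definition~\ref{defn:vD-K}) by the odd self-adjoint unitary $U=\begin{pmatrix}0&u^*\\u&0\end{pmatrix}\in\mathrm{Mult}(B)\hox\Cl_{1,1}$ with basepoint $V_0=-\one\otimes\gamma$, noting $U-V_0\in B\hox\Cl_{1,1}$.

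Applying the Cayley isomorphism of Theorem~\ref{thm:Cayley_iso_extra_Clifford} (with $k=0$ and absorbing the $\Cl_{1,1}$ degrees via $\Cl_{1,1}\simeq\End(\bigwedge\nolimits^{\!*}\C)$), the relative class $[U]-[V_0]$ is sent to the Cayley Kasparov module $\bigl(\Cl_{1,0},\,\ol{(U-V_0)B}_B\otimes\bigwedge\nolimits^{\!*}\C,\,\calC_{V_0}(U)\bigr)$. A direct computation using formula~\eqref{eq:Cayley_1-F^2} and $u=e^{\pi\wt J_1}$ gives $\one-F_{\calC_{V_0}(U)}^2=\cos^2(\tfrac{\pi}{2}T)\otimes\one$ with $T:=-i\wt J_1$; this lies in $B\otimes\End(\bigwedge\nolimits^{\!*}\C)$ because $T^2-\one\in B$ forces $\cos^2(\tfrac{\pi}{2}T)\in B$ via the factorisation $\cos^2(\tfrac{\pi}{2}T)=(T^2-\one)g(T)$ for an entire $g$. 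A straight-line operator homotopy, in the spirit of the one ending the proof of Proposition~\ref{prop:bdry_map_is_nice}, should then connect $F_{\calC_{V_0}(U)}$ to $\wt J_1\otimes\rho$ through Fredholm operators. The main obstacle will be controlling Fredholmness along this interpolation, which should follow because both operators share $\wt J_1$ as their principal symbol modulo $B$ and $\wt J_1^2+\one\in B$ ensures the Kasparov module conditions persist uniformly.
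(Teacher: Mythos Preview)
Your approach is correct but takes a different and somewhat longer route than the paper's. You start from Proposition~\ref{prop:Boundary_KK_class} with $n=0$ to land on the Kasparov module $\bigl(\Cl_{1,0},\,B_B\otimes\bigwedge\nolimits^{\!*}\C,\,\wt J_1\otimes\rho\bigr)$, then work \emph{backward} from the target unitary $e^{\pi\wt J_1}$ through the Cayley isomorphism to produce a second Kasparov module, and finally argue these two agree via an operator homotopy. The paper instead bypasses Proposition~\ref{prop:Boundary_KK_class} entirely: it invokes \cite[Proposition~5.7]{BKRCayley} to represent $\partial[\calC_{J_0}(J_1)]$ directly by the unbounded Kasparov module with operator $\tanh(\tfrac{\pi}{2}\wt J_1)\otimes\rho$, and then applies the inverse skew-adjoint Cayley transform $T\mapsto(T+1)(T-1)^{-1}$ \emph{once}, which collapses algebraically to $e^{\pi\wt J_1}$ in a single line. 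Your route effectively performs the $\sin(\tfrac{\pi}{2}\,\cdot\,)\to\mathrm{id}$ homotopy (hidden inside Proposition~\ref{prop:bdry_map_is_nice}) and then has to undo it; the paper's route avoids that detour. What your approach buys is self-containment within the paper's internal results rather than a direct appeal to \cite{BKRCayley}.

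One point you leave heuristic deserves to be made explicit: to justify the straight-line homotopy you need $F_{\calC_{V_0}(U)}$ and $\wt J_1\otimes\rho$ to agree modulo $B\otimes\End(\bigwedge\nolimits^{\!*}\C)$. Writing $T=-i\wt J_1$ (self-adjoint, $\|T\|\le 1$ after contractive choice of lift), a direct computation from \eqref{eq:Skew_Cayley_F}-style formulae gives $F_{\calC_{V_0}(U)}=i\sin(\tfrac{\pi}{2}T)\otimes\rho$; since $q(T)^2=1$ in $E/B$ forces $q(\sin(\tfrac{\pi}{2}T))=q(T)$, the difference indeed lies in $B\otimes\End(\bigwedge\nolimits^{\!*}\C)$ and the homotopy goes through. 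Your parenthetical about ``absorbing the $\Cl_{1,1}$ degrees'' also hides a Morita equivalence step (Theorem~\ref{thm:Cayley_iso_extra_Clifford} with $k=0$ applied to $B\otimes\Cl_{1,1}$ lands in $KKR(\Cl_{1,0},B\otimes\Cl_{1,1})$, not $KKR(\Cl_{1,0},B)$) that should be stated.
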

\begin{proof}
Applying~\cite[Proposition 5.7]{BKRCayley}, the boundary map can be represented by the unbounded Real Kasparov module
\[
  \Big( \C, \, \ol{ \coth\big( \tfrac{\pi}{2} \wt{J}_1 \big) B} \otimes {\Cl_{0,1}}_{\Cl_{0,1}}, \, \tanh\big( \tfrac{\pi}{2} \wt{J}_1 \big)  \otimes \rho \Big).
\]
Then the skew-adjoint Cayley transform $(T+1)(T-1)^{-1} \in \mathbb{K}_B\big( \ol{ \coth( \tfrac{\pi}{2} \wt{J}_1 ) B} \big)^\sim$ 
for $T = \tanh\big( \tfrac{\pi}{2} \wt{J}_1 \big)$
gives an isomorphism to $KO_1(B^\frakr)$ (cf. \cite[Example 5.1]{BKRCayley}), 
where we see that 
\[
   \big[ \big(\tanh( \tfrac{\pi}{2} \wt{J}_1 )+1)(\tanh( \tfrac{\pi}{2} \wt{J}_1 )-1)^{-1} \big] 
   = \big[ e^{\pi \wt{J}_1} \big] \in KO_1( B^\frakr).   \qedhere
\]
\end{proof}

Lastly we consider the $G$-equivariant case, where the same proof as Proposition \ref{prop:Boundary_KK_class} 
applies with the extra considerations done in Section \ref{subsec:G_equiv_index_general}.
\begin{prop}
Suppose that the short exact sequence of Equation \eqref{eq:Bulk_Defect_Extension} is $G$-equivariant under linear/anti-linear actions 
on $B$, $E$ and $A$ relative to $\eta:G\to \Z_2$ and such that
$\beta^E_g( \wt{J}_k) = (-1)^{\nu(g)} \wt{J}_k$, $k \in \{0,1\}$. Then 
the image of  $\big[ \calC_{J_0}(J_1)\big] \in KKR^G(\Cl_{2,0},A)$ under the boundary map 
$\partial: KKR^G( \Cl_{2,0}, A) \to KKR^G(\Cl_{1,0}, B)$ is represented by the $G$-equivariant Real Kasparov module
\[
  \Big( \Cl_{1,0}, \,  B_B \otimes \bigwedge\nolimits^{\! *} \C, \, \wt{J}_1 \otimes \rho \Big)
\]
with Clifford generator $\one \otimes \gamma$ and group action 
$\alpha_g( b \otimes v) = \beta^B_g(b) \otimes \gamma^{\nu(g)} w$ for all $b \in B$, 
$w \in \bigwedge^*\C$ and $g \in G$.
\end{prop}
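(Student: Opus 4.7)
The plan is to mirror the proof of Proposition \ref{prop:Boundary_KK_class}, but to track the $G$-equivariant structure at each step, exactly as was done in Section \ref{subsec:G_equiv_index_general} for the basic construction. First, I would rephrase the class $\bigl[\calC_{J_0}(J_1)\bigr] \in KKR^G(\Cl_{2,0},A)$ in the van Daele picture. Under the equivariant Cayley isomorphism
\[
   \calC_A \colon DK^G(A\otimes \Cl_{0,1}) \xrightarrow{\;\simeq\;} KKR^G(\Cl_{2,0}, A),
\]
the relative class $[J_1\otimes\rho]-[J_0\otimes\rho]$ is sent to $\bigl[\calC_{J_0}(J_1)\bigr]$ by the same calculation as in Proposition \ref{prop:vD_class_to_Cayley_class}; the additional verification is that the $G$-action on the module $\overline{(J_0-J_1)A}_A \otimes \bigwedge^{\!*}\C$ defined in Equation \eqref{eq:G_action_for_J0J1_kasmod} is precisely the action induced by the van Daele representative via the Cayley construction, which follows from the assumption $\beta^A_g(J_k)=(-1)^{\nu(g)}J_k$ for $k\in\{0,1\}$.

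Next, I would apply the basepointed boundary map for van Daele $K$-theory. The hypothesis $\beta^E_g(\wt{J}_0)=(-1)^{\nu(g)}\wt{J}_0$ means the chosen invertible lift $\wt{J}_0$ is $G$-compatible in the sense needed to serve as an equivariant basepoint for $DK^G_e$. Similarly, the skew-adjoint lift $\wt{J}_1 \in E$ of $J_1$ can be chosen to satisfy $\beta^E_g(\wt{J}_1)=(-1)^{\nu(g)}\wt{J}_1$ (this is assumed directly). Then the proof of Proposition \ref{prop:bdry_map_is_nice} applies verbatim: the composition $\calC_B \circ \partial$ on $[J_1\otimes\rho]-[J_0\otimes\rho]$ is represented first by the unbounded Real Kasparov module
\[
  \Bigl(\C,\, \overline{\cos\bigl(\tfrac{\pi}{2}\wt{J}_1\bigr)B}_B \,\widehat{\otimes}\, {\Cl_{0,1}}_{\Cl_{0,1}},\, \tan\bigl(\tfrac{\pi}{2}\wt{J}_1\bigr)\,\widehat{\otimes}\, 1\Bigr),
\]
then as $\bigl(\Cl_{1,0},\, B_B\otimes \bigwedge^{\!*}\C,\, \sin(\tfrac{\pi}{2}\wt{J}_1)\otimes\rho\bigr)$, and finally as $\bigl(\Cl_{1,0},\, B_B\otimes \bigwedge^{\!*}\C,\, \wt{J}_1\otimes\rho\bigr)$ via the straight-line operator homotopy.

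The last step is to check that the $G$-action carried along by this chain of representatives is exactly $\alpha_g(b\otimes w)=\beta^B_g(b)\otimes \gamma^{\nu(g)}w$. For the continuous-function representatives $\tan(\tfrac{\pi}{2}\wt{J}_1)$ and $\sin(\tfrac{\pi}{2}\wt{J}_1)$ the action is inherited from the natural action of $G$ on $B \subset E$ (recall $B$ is an ideal, and $\beta^E_g$ restricts to $\beta^B_g$). The factor $\gamma^{\nu(g)}$ on $\bigwedge^{\!*}\C$ is the same twist introduced in Equation \eqref{eq:G_action_for_J0J1_kasmod} and is precisely what is needed so that $\wt{J}_1\otimes\rho$ — which transforms as $(-1)^{\nu(g)}\wt{J}_1\otimes \rho$ in the absence of the twist — becomes $G$-invariant, and so that the left Clifford generator $\one\otimes\gamma$ remains $G$-invariant. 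Together with the fact that the straight-line homotopy $(1-t)\sin(\tfrac{\pi}{2}\wt{J}_1)+t\wt{J}_1$ is $G$-equivariant since both endpoints transform identically under the action, this completes the identification.

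The main obstacle is not algebraic but bookkeeping: making sure the linear/anti-linear twist $\gamma^{\nu(g)}$ on $\bigwedge^{\!*}\C$ propagates correctly through the Cayley/van Daele isomorphism and through the unbounded-to-bounded transformation steps. Everything else is a direct specialisation of Proposition \ref{prop:bdry_map_is_nice} and the arguments in Section \ref{subsec:G_equiv_index_general}.
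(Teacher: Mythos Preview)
Your proposal is correct and follows essentially the same route as the paper: the paper simply states that the proof of Proposition~\ref{prop:Boundary_KK_class} applies verbatim once one incorporates the $G$-equivariant considerations of Section~\ref{subsec:G_equiv_index_general}, which is exactly the van Daele/Cayley argument with the $\gamma^{\nu(g)}$-twist bookkeeping you spell out. Your write-up is in fact more detailed than the paper's one-line justification.
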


While the $G$-equivariant case follows quite easily from our general boundary map computations, a more careful analysis 
is required when the extension and defect algebras $E$ and $B$ carry different linear/anti-linear group symmetries 
from $A$, as is often the case when considering systems with defects.

\section{Local equivalence, $K$-homology and coarse assembly} \label{sec:Coarse_and_KHom}

Our aim for this section is to connect the $K$-theoretic indices constructed in Section \ref{sec:KO_index_general} 
to $K$-homology in a physically meaningful way.
Due to its close connection with duality theory~\cite[Chapter 5--6]{HigsonRoe}, coarse geometry is a natural setting to 
consider this connection.

\subsection{Coarse geometry and indices}

We give a brief overview of  Roe $C^*$-algebras and the coarse index. A 
more comprehensive introduction can be found in~\cite{HigsonRoe, RoeMemoirs, Roe96}.
We consider a complex Hilbert space $\calH$ with a real involution 
$\Gamma=\Gamma^*= \Gamma^{-1}$ and mutually anti-commuting skew-adjoint 
unitaries $\{\kappa_j\}_{j=1}^n\subset \calB(\calH)$ that are
invariant under the real structure $\mathrm{Ad}_\Gamma$ on $\calB(\calH)$. 
Now suppose there is a skew-adjoint operator $F \in \calB(\calH)$ such that
\[
 \Gamma F\Gamma = F, \qquad \qquad  \one + F^2 \in \mathbb{K}(\calH), \qquad \qquad  F\kappa_j = -\kappa_j F.
\]
Then $F$ is Fredholm and we construct a Real Fredholm module and $K$-homology class
\[
 [F] = \Big[ \big( \Cl_{n+1,0}, \, \calH \otimes \bigwedge\nolimits^{\! *} \C, \, F \otimes \rho \big) \Big] \in KO^{-n-1}(\R)
\]
with Clifford generators $\big\{ \one \otimes \gamma, \kappa_1 \otimes \rho, \ldots, \kappa_n \otimes \rho \big\}$. 
We can also  define an ungraded Clifford module index $[\Ker(F)] \in \calM_{n}/ \calM_{n+1} \cong KO_{n+1}(\R)$. 
If $[F]=[F'] \in KO^{-n-1}(\R)$, then $[\Ker(F)] = [\Ker(F')] \in KO_{n+1}(\R)$. Therefore the 
skew-adjoint Fredholm index can be seen as a map $KO^{-n-1}(\R) \to KO_{n+1}(\R)$. 
The coarse assembly map generalises this basic idea.

Let $X$ be a second countable, metrizable and locally compact space with proper coarse structure. 
For the reader unfamiliar with coarse structures, it will suffice to consider $X$ as a second countable, locally compact and 
proper metric space (closed and bounded subsets of $X$ are compact). 
We also assume that $C_0(X)$ has a real structure $\frakr$, e.g. $f(x)^\frakr = \ol{f(\tau(x))}$ with 
$\tau$ an order-$2$ automorphism on $X$.
Fix a Hilbert space $\calH$ and real involution $\Gamma$ and suppose there is a 
non-degenerate Real representation $\varphi: C_0(X) \to \calB(\calH)$, 
$\varphi(f^\frakr) = \mathrm{Ad}_\Gamma \circ \varphi(f)$. We say that 
$\varphi$ is ample if no non-zero element of $C_0(X)$ acts compactly on $\calH$. 

\begin{example}
Let $M$ be a complete Riemannian manifold and $V$ a complex Hermitian vector bundle. 
Suppose that $C_0(M)$ has a real structure $f(x)^\frakr = \ol{f( \tau(x))}$ with $\tau$ an order-$2$ automorphism 
on $M$. Then taking $\mathfrak{C}$ pointwise complex conjugation on $V$, we can define a 
real structure on the space of $L^2$-sections $L^2(M, V)$ by $\psi(x)^\frakr = \mathfrak{C} \circ \psi( \tau(x))$ for 
all $\psi \in L^2(M,V)$. 
The multiplication representation of $C_0(M)$ on $L^2(M, V)$ is then a Real and ample representation.
To see this, we remark that the spectrum of $\phi(f)$ is the closure of $\Ran(f)$, which is a 
non-discrete set for $f$ non-zero and $M$ complete. 
\end{example}

\begin{defn}
Let $T \in \calB(\calH)$ and $\varphi: C_0(X) \to \calB(\calH)$ a representation.
\begin{enumerate}
  \item[(i)] We say that $T$ is pseudolocal with respect to $\varphi: C_0(X) \to \calB(\calH)$  
  if $\varphi(f_1) T \varphi(f_2)$ is compact for all $f_1, f_2 \in C_0(X)$ 
  such that  $f_1$ or $f_2$ have compact support and $\mathrm{supp}(f_1) \cap \mathrm{supp}(f_2) = \emptyset$.
  \item[(ii)] We say that $T$ has finite propagation (or $T$ is controlled) with respect to $\varphi: C_0(X) \to \calB(\calH)$ 
   if there exists an $R>0$ such that 
  $\varphi(f_1) T \varphi(f_2) = 0$ for any $f_1, f_2 \in C_0(X)$ such that $d(\mathrm{supp}(f_1), \mathrm{supp}(f_2) ) > R$.
\end{enumerate}
The $C^*$-subalgebra of $\calB(\calH)$ generated operators that are pseudolocal and have 
finite propagation with respect to $\varphi: C_0(X) \to \calB(\calH)$  is denoted by $D^*(X)$.
\end{defn}

The real involution $\Gamma$ on $\calH$ gives a real structure $\mathrm{Ad}_\Gamma$ on $D^*(X)$ that 
makes it a Real $C^*$-algebra.

\begin{lemma}[{Kasparov's Lemma, \cite[Lemma 5.4.7]{HigsonRoe}}]
A bounded operator $T$ on $\calH$ is pseudolocal with respect to $\varphi: C_0(X) \to \calB(\calH)$ 
if and only if $[T, \varphi(f)] \in \mathbb{K}(\calH)$ for all $f\in C_0(X)$.
\end{lemma}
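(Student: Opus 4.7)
The proof will split into the two implications.

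For $(\Leftarrow)$, I would argue directly. Given $[T, \varphi(f)] \in \mathbb{K}(\calH)$ for every $f \in C_0(X)$, and $f_1, f_2 \in C_0(X)$ with disjoint supports and one of compact support, the relation $f_1 f_2 \equiv 0$ forces $\varphi(f_1)\varphi(f_2) = 0$, so
\[
  \varphi(f_1) T \varphi(f_2) = \varphi(f_1 f_2) T + \varphi(f_1)[T, \varphi(f_2)] = \varphi(f_1)[T, \varphi(f_2)]
\]
is compact as the product of a bounded and a compact operator.

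The direction $(\Rightarrow)$ will require an approximation by partitions of unity. Since $\|[T, \varphi(f)]\| \leq 2\|T\|\|f\|_\infty$ and $\mathbb{K}(\calH)$ is norm-closed, density of $C_c(X)$ in $C_0(X)$ reduces the problem to $f \in C_c(X)$ with compact support $K$. For each $\delta > 0$ I would cover $K$ by finitely many open sets $U_1, \ldots, U_N$ of diameter less than $\delta$ and pick a subordinate partition of unity $\phi_1, \ldots, \phi_N \in C_c(X)$ with $\sum_i \phi_i \equiv 1$ on a neighborhood of $K$. Choosing $x_i \in U_i \cap K$ and setting $f_\delta = \sum_i f(x_i)\phi_i$, uniform continuity of $f$ combined with $\sum_i \phi_i(y) \le 1$ yields $\|f - f_\delta\|_\infty \leq \omega_f(\delta) \to 0$ as $\delta \to 0$, so $[T, \varphi(f_\delta)] \to [T, \varphi(f)]$ in operator norm. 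It then suffices to show that $[T, \varphi(f_\delta)]$ is compact, possibly modulo a norm-error vanishing with $\delta$.

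The key computation inserts the cutoff $\varphi(H)$, where $H = \sum_j \phi_j \in C_c(X)$ equals $1$ on $\bigcup_i \mathrm{supp}(\phi_i)$, on both sides of $T$, producing
\[
  [T, \varphi(f_\delta)] = \sum_{i,j}\big(f(x_i) - f(x_j)\big) \varphi(\phi_j) T \varphi(\phi_i) + R_\delta.
\]
Pairs $(i,j)$ with $\mathrm{supp}(\phi_i) \cap \mathrm{supp}(\phi_j) = \emptyset$ contribute compact terms by pseudolocality of $T$, while overlapping pairs satisfy $|f(x_i) - f(x_j)| \leq \omega_f(2\delta)$, so their total contribution has operator norm at most $C\omega_f(\delta)\|T\| \to 0$ as $\delta \to 0$. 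The main technical obstacle will be controlling the cutoff remainder $R_\delta$: although $\one - \varphi(H)$ is not of the form $\varphi(g)$ for any $g \in C_0(X)$, the identity $(\one - \varphi(H))\varphi(\phi_i) = 0$ lets one rewrite the correction as $(\one - \varphi(H))[T, \varphi(\phi_i)]$, which can be controlled by approximating $H$ via an exhausting sequence of compactly supported cutoffs and applying pseudolocality term-by-term. The detailed combinatorial verification is classical; see for instance Higson--Roe, \emph{Analytic $K$-homology}, Proposition~5.4.7.
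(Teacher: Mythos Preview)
The paper does not give a proof of this lemma at all; it simply records the statement and cites Higson--Roe. So there is nothing in the paper to compare your argument against, and your sketch is essentially a reconstruction of the classical proof in the cited reference.

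Your $(\Leftarrow)$ direction is clean and correct. For $(\Rightarrow)$ the partition-of-unity strategy is the right one, but one step deserves more care than you give it. You assert that the contribution of the overlapping pairs has operator norm at most $C\,\omega_f(\delta)\,\|T\|$ with an implicit constant $C$ independent of $\delta$. A term-by-term estimate will not give this: the number $N$ of patches, and hence the number of overlapping pairs, grows without bound as $\delta\to 0$. What saves you is an \emph{operator} inequality coming from $0\le \phi_i$ and $\sum_i \phi_i \le 1$; in Higson--Roe this is handled by working with $\phi_i^{1/2}$ on either side of $T$ and using that $\sum_i \varphi(\phi_i^{1/2}) S\,\varphi(\phi_i^{1/2})$ has norm at most $\|S\|$ for any bounded $S$. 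Without invoking something of this kind, the constant $C$ is not under control.

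A minor point: your identity $(\one-\varphi(H))\varphi(\phi_i)=0$ requires $H\equiv 1$ on every $\mathrm{supp}(\phi_i)$, not just on a neighbourhood of $K$; this is easy to arrange by enlarging the cover slightly before choosing the partition, but it should be said. Since you ultimately defer the combinatorics to Higson--Roe anyway, none of this is fatal, but you should be aware that the ``$C$ independent of $\delta$'' step is where the real content of the argument lies.
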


\begin{defn}
We say that $T \in \calB(\calH)$ is locally compact with respect to $\varphi: C_0(X) \to \calB(\calH)$ 
 if for all $f \in C_0(X)$, $\varphi(f)T$ and $T \varphi(f)$ are 
compact. We define the Roe algebra $C^*(X) = \ol{ \{T \in D^*(X)\,:\, T \text{ locally compact} \}}$.
\end{defn}

It follows from the definition that $C^*(X)$ is a closed two-sided ideal in $D^*(X)$. Note also that the definition of 
$D^*(X)$ and $C^*(X)$ rely on a choice of Hilbert space $\calH$ and representation $\varphi$. 
If $\varphi_1$ and $\varphi_2$ are ample representations, then 
$C^*(X, \varphi_1)\cong C^*(X,\varphi_2)$~\cite[Theorem 1]{EwertMeyer}. 

If $X$ is discrete and we take $\varphi: C_0(X) \to \ell^2(X)\otimes \C^n$ as 
$\varphi(f)(\psi(x)\otimes v) = f(x)\psi(x) \otimes v$ for $\psi \otimes v \in \ell^2(X) \otimes \C^n$, 
then  this representation is \emph{not} ample. 
In this setting, a natural object of study is the \emph{uniform} Roe algebra $C^*_u(X)$, which embeds 
in $C^*(X)$.

Because we consider both the discrete and continuous settings, we will work primarily with $C^*(X)$ with 
the knowledge that $C^*_u(X)$ can be embedded in this algebra if $X$ is discrete. At the 
level of $K$-theory, $KO_\ast(C^*_u(X)^\frakr)$ is much richer than $KO_\ast(C^*(X)^\frakr)$, though 
the $K$-theory of $C^*(X)$ will still capture the large scale  properties of $X$. 
See~\cite{Kubota17,EwertMeyer} for further discussion on this point.

Let us now consider the coarse index, which is a map  $KO^{-\ast}(C_0(X)^\frakr) \to KO_{\ast}(C^*(X)^\frakr)$. 
We first  note the following.

\begin{prop} \label{prop:anti-commuting_F_to_Khom}
Suppose that $F=F^\frakr = -F^* \in \calB(\calH)$ and 
assume one of the following:
\begin{enumerate}
  \item[(a)] There is an ample Real representation $\varphi: C_0(X) \to \calB(\calH)$ and $\{\kappa_j\}_{j=1}^n \subset \calB(\calH)$ are 
  mutually anti-commuting Real skew-adjoint unitaries with infinite-dimensional eigenspaces such that 
  for all $f\in C_0(X)$ and $j= 1,\ldots, n$, 
  \[
  [F, \varphi(f)], \,\, \varphi(f)(\one+ F^2), \,\,  [\kappa_j, \varphi(f)] \in \mathbb{K}(\calH), \qquad \qquad F \kappa_j = - \kappa_j F ,
\]
  \item[(b)] There is a Real representation $\varphi: C_0(X) \to \calB(\calH)$ and $\{\kappa_j\}_{j=1}^n \subset \calB(\calH)$ are 
  mutually anti-commuting Real skew-adjoint unitaries such for all $f\in C_0(X)$ and $j= 1,\ldots, n$, 
  \[
  [F, \varphi(f)], \,\, \varphi(f)(\one+ F^2)  \in \mathbb{K}(\calH), \qquad   [\kappa_j, \varphi(f)] = 0, \qquad F \kappa_j = - \kappa_j F .
\]
\end{enumerate}
Then the triple
\[
   \Big( C_0(X) \otimes \Cl_{n+1,0}, \, {}_\varphi \calH \otimes  \bigwedge\nolimits^{\! *} \C, \, F \otimes \rho \Big)
\]
is a Real Fredholm module (Real $C_0(X)$--$\C$ Kasparov module) with Clifford generators 
$\{\one \otimes \gamma, \kappa_1\otimes \rho, \ldots, \kappa_n \otimes \rho\}$. 
\end{prop}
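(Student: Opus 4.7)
The plan is to verify directly the axioms of a Real Kasparov $(C_0(X) \otimes \Cl_{n+1,0})$--$\C$ module. The Hilbert $\C$-module is $\calH \otimes \bigwedge^* \C$, which inherits a $\Z_2$-grading from $\bigwedge^* \C$ and a real structure $\Gamma \otimes \bar{\,\cdot\,}$ with respect to which $\gamma$ and $\rho$ are Real odd generators. The intended left action sends $f \otimes \one \mapsto \varphi(f) \otimes \one$ and the Clifford generators of $\Cl_{n+1,0}$ to the prescribed operators $\{\one \otimes \gamma, \kappa_1 \otimes \rho, \ldots, \kappa_n \otimes \rho\}$.

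First I would check the algebraic properties of $F \otimes \rho$ and of the Clifford generators. Using $F^* = -F$, $\rho^* = -\rho$, $F^\frakr = F$, and $\rho^\frakr = \rho$, the operator $F \otimes \rho$ is self-adjoint, odd, and Real. Each of $\one \otimes \gamma$ and $\kappa_j \otimes \rho$ is likewise an odd self-adjoint Real unitary, and the $\Cl_{n+1,0}$-anti-commutation relations follow from $\gamma\rho = -\rho\gamma$, $\rho^2 = -\one$, and $\kappa_i\kappa_j + \kappa_j\kappa_i = -2\delta_{ij}$. Similarly, $F \otimes \rho$ anti-commutes with every Clifford generator because $\gamma\rho = -\rho\gamma$ and $F\kappa_j = -\kappa_j F$.

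Next I would verify the two compactness conditions. Using $\rho^2 = -\one$, one finds $\one - (F \otimes \rho)^2 = (\one + F^2) \otimes \one$, so $\pi(f \otimes \one)\bigl(\one - (F \otimes \rho)^2\bigr) = \varphi(f)(\one + F^2) \otimes \one$ is compact by hypothesis. The graded commutator $[F \otimes \rho, \varphi(f) \otimes \one]_\pm$ reduces to $[F, \varphi(f)] \otimes \rho$, which is again compact by hypothesis. Combined with the strict anti-commutation of $F \otimes \rho$ with the Clifford generators, this handles the compactness axiom for all images of $C_0(X) \otimes \Cl_{n+1,0}$.

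The subtle point, which I expect to be the main obstacle, is that the map $C_0(X) \otimes \Cl_{n+1,0} \to \End(\calH \otimes \bigwedge^* \C)$ must be a genuine $\ast$-homomorphism, not merely one modulo compacts. In case (b) this is automatic because $[\kappa_j, \varphi(f)] = 0$ exactly, so the tensor-product prescription is well-defined. In case (a), however, $[\kappa_j, \varphi(f)]$ is only compact, and a strict representation must be produced. The ampleness of $\varphi$ together with the infinite-dimensionality of the $\pm i$-eigenspaces of each $\kappa_j$ is precisely what allows a standard absorption/commutant-lifting argument: after diagonalising with respect to the joint spectral projections of the $\kappa_j$ and applying a Voiculescu-type absorption within each infinite-dimensional block, one replaces each $\kappa_j$ by a compact perturbation $\kappa_j'$ which strictly commutes with $\varphi(C_0(X))$ while preserving its Reality, skew-adjointness, and the Clifford and anti-commutation relations with $F$. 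Since compact perturbations preserve the $KKR$-class, the resulting data assemble to the claimed Fredholm module.
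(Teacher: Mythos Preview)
Your verification of the Kasparov module axioms is correct, and you rightly flag the crux of case (a): the prescription $f \otimes c \mapsto \varphi(f) \cdot \pi(c)$ is a $\ast$-homomorphism only once one knows $[\varphi(f), \kappa_j] = 0$ exactly. Here your route diverges from the paper's. The paper does not perturb the $\kappa_j$; instead it argues directly that the commutator vanishes by identifying $\varphi(f)$ and $\kappa_j$ with their images in the Calkin algebra (using ampleness and the infinite-dimensional eigenspaces respectively) and observing that the compact commutator dies in the quotient. As literally written this conflates $\calB(\calH)$ with $\calQ(\calH)$ and is at best a heuristic, so your instinct to work harder is well founded.

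Your Voiculescu-type perturbation is a more honest approach, but two caveats. First, after a compact perturbation $\kappa_j \mapsto \kappa_j'$ you will not in general keep the \emph{exact} anti-commutation $F\kappa_j' = -\kappa_j' F$; you only get it modulo compacts. That is still enough, since the Kasparov axioms only require $[F\otimes\rho,\pi(a)]_\pm \in \mathbb{K}$, but your sentence claims too much. Second, arranging the perturbed $\kappa_j'$ to satisfy the Clifford relations \emph{exactly} while simultaneously strictly commuting with $\varphi(C_0(X))$ is a genuine lifting problem (from the Calkin algebra into the commutant of $\varphi$) and deserves more than an appeal to ``standard absorption''; this step should be made explicit.
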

\begin{proof}
Given condition (b), the result holds immediately from the definition of a Real Fredholm module. For condition (a), we need to 
check that the specified representation of $C_0(X) \otimes \Cl_{n+1,0}$ is well-defined. Because $\varphi$ is ample, $\varphi(f) = q\circ \varphi(f)$ 
for $q: \calB(\calH) \to \calB(\calH)/\mathbb{K}(\calH)$ the quotient onto the Calkin algebra. 
For $j \in \{1,\ldots, n\}$,  the skew-adjoint unitary $\kappa_j$ has infinite-dimensional 
eigenspaces, which implies that $q (\kappa_j) = \kappa_j$. We therefore have that 
\[
  [\varphi(f), \kappa_j] = [ q\circ \varphi( f), q(\kappa_j) ] = q\big( [\varphi(f), \kappa_j] \big) = 0, \quad \text{for all } j\in \{1,\ldots, n\}.
\]
Therefore the representation of $C_0(X) \otimes \Cl_{n+1,0}$ is indeed well-defined. The remaining conditions required to obtain 
a Real Fredholm module are immediate from the assumptions.
\end{proof}

Given the setting of Proposition \ref{prop:anti-commuting_F_to_Khom}, we let $[F]$ denote the corresponding 
class in $KKR(C_0(X) \otimes \Cl_{n+1,0}, \C) \cong KO^{-n-1}(C_0(X)^\frakr)$. 
When $f(x)^\frakr = \ol{f(x)}$, we have that $KO^{-n}(C_0(X)^\frakr) \cong KO_n(X)$,
the topological $K$-homology of the space $X$.

Let us now define the coarse index or coarse assembly map. Assume the setting of Proposition \ref{prop:anti-commuting_F_to_Khom}.
If $F \in \calB(\calH)$ does not have finite propagation  with respect to $\varphi:C_0(X)\to \calB(\calH)$, 
take a Real partition of unity $\{\eta_i\}$  of $X$ subordinate to a locally finite open cover and define 
\[
   F' = \sum_i \eta_i^{1/2} F \eta_i^{1/2},
\]
which converges in the strong topology. The operator $F'$ also determines a Real Fredholm module and 
$[F] = [F'] \in KO^{-n-1}(C_0(X))$~\cite[Page 354]{HigsonRoe}. Therefore $F'$ is an element of $D^*(X)$ such 
that $(F')^2+\one \in C^*(X)$. Hence $q(F')$ is a skew-adjoint unitary in the quotient $D^*(X)/C^*(X)$. 
In particular, $q(F') \otimes \rho \in D^*(X)/C^*(X) \otimes \Cl_{0,1}$ is an odd self-adjoint unitary that anti-commutes 
 with the odd self-adjoint unitaries $\{q(\kappa_j)\otimes \rho\}_{j=1}^n$. Recalling Remark \ref{rk:extra_clifford_gens_shifts_the_degree} 
 and fixing a basepoint odd Real self-adjoint unitary $e \in D^*(X)/C^*(X) \otimes \Cl_{0,1}$ that anti-commutes with 
$\{q(\kappa_j)\otimes \rho\}_{j=1}^n$, we therefore obtain a 
class in the van Daele $K$-theory group $[q(F')\otimes \rho]\in DK_e( D^*(X)/C^*(X) \otimes \Cl_{0,n+1} )$.
Following Roe~\cite{Roe02,Roe04}, the coarse index $\mu_X: KO^{-n-1}(C_0(X)^\frakr) \to KO_{n+1}(C^*(X)^\frakr)$ can be 
defined by the composition 
\[
  KO^{-n-1}( C_0(X)^\frakr)  \to DK_e( D^*(X)/C^*(X) \otimes \Cl_{0,n+1} ) \xrightarrow{\calC \circ \partial}   KKR( \Cl_{n+1,0}, C^*(X)),
\]
where we have identified $KKR( \Cl_{n+1,0}, C^*(X))$ with $KO_{n+1}(C^*(X)^\frakr)$ and 
$\calC \circ \partial$ is the boundary map in van Daele $K$-theory composed with the 
equivalence between $DK$ and $KKR$.
This map can also be defined in the equivariant setting 
(see Sections \ref{sec:Compact_G_assembly} and \ref{sec:cocompact_assembly} below).

\begin{example}
Suppose that $X$ is a second countable, metrizable and \emph{compact} space with proper coarse structure.  
Then $C^*(X) \cong \mathbb{K}(\calH)$ 
and the assembly map $\mu_X: KO^{-n-1}(C(X)^\frakr)\to KO_{n+1}(C^*(X)^\frakr)$ reduces 
to the Clifford module index map $KO^{-n-1}(\R) \ni [F] \mapsto [\Ker(F)] \in KO_{n+1}(\R)$ considered at the beginning of this section.
\end{example}

\subsection{Locally equivalent ground states}

Let us now fix a Real non-degenerate representation $\varphi:C_0(X) \to \calB(\calH)$.
We are interested in ground states on $A^\mathrm{car}_\mathrm{sd}(\calH,\Gamma)$ with respect 
to the quasifree action $\alpha: \R \to \Aut[ A^\mathrm{car}_\mathrm{sd}(\calH,\Gamma)]$ such that 
$\alpha_t( \frakc(v) ) = \frakc\big( e^{itH} v\big)$ for $t \in \R$, $v \in \calH$ and $H =H^*=-\Gamma H \Gamma$ 
 a BdG Hamiltonian. We will now
 restrict to BdG Hamiltonians and dynamics that are local with respect to the representation $\varphi$.

\begin{assumption} \label{assump:SingleHamilt}
We consider  BdG Hamiltonians $H=H^*=-\Gamma H \Gamma$ such that $0\notin \sigma(H)$.
\begin{enumerate}
  \item[(a)] (Unbounded case) If $H$ is unbounded, we assume that $\chi(H)  \in \mathrm{Mult}(C^*(X))$ 
  for any normalising function $\chi$ and $[\varphi(f), i\, \mathrm{sgn}(H)] \in \mathbb{K}(\calH)$ for all $f \in C_0(X)$.
  \item[(b)] (Bounded case) If $H$ is bounded, we assume that $H$ is invertible in $D^*(X)$.
\end{enumerate}
\end{assumption}

 If $\chi(H) \in D^*(X)$ for any regularising function $\chi$, then Assumption \ref{assump:SingleHamilt}(a) 
   is satisfied. Our slightly weaker condition allows us to accommodate Hamiltonians 
   with higher-order terms, see Example \ref{ex:unbdd_examples}.

The functional calculus gives us the following.
\begin{lemma}
Let $H \in \calB(\calH)$ be gapped and pseudolocal with respect to $\varphi:C_0(X)\to \calB(\calH)$. 
Then for any regularising function $\chi$,  $\chi(H)$, $J=iH|H|^{-1}$ and $P = \chi_{[0,\infty)}(H)$ 
are pseudolocal with respect to $\varphi:C_0(X)\to \calB(\calH)$.
\end{lemma}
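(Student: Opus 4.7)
The plan is to use Kasparov's Lemma to reduce pseudolocality of an operator $T$ to the condition $[T,\varphi(f)] \in \mathbb{K}(\calH)$ for every $f \in C_0(X)$, and then apply the continuous functional calculus together with a polynomial approximation argument. Since $H$ is bounded and pseudolocal, we have $[H,\varphi(f)] \in \mathbb{K}(\calH)$ for all $f \in C_0(X)$ by Kasparov's Lemma, and we must upgrade this to $[g(H),\varphi(f)] \in \mathbb{K}(\calH)$ for the three functions $g = \chi$, $g(t) = i\,\mathrm{sgn}(t)$, and $g = \chi_{[0,\infty)}$.

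First I would handle polynomials: by the Leibniz-type identity
\[
[H^n,\varphi(f)] = \sum_{k=0}^{n-1} H^k [H,\varphi(f)] H^{n-1-k},
\]
an easy induction shows $[p(H),\varphi(f)] \in \mathbb{K}(\calH)$ for every polynomial $p$, since $\mathbb{K}(\calH)$ is a two-sided ideal in $\calB(\calH)$. Next I would exploit the fact that $H \in \calB(\calH)$ has compact spectrum $\sigma(H) \subset \R$, and that $H$ gapped means $0 \notin \sigma(H)$, so there is $\epsilon>0$ with $\sigma(H) \subset (-\infty,-\epsilon]\cup[\epsilon,\infty)$. Restricted to $\sigma(H)$, each of the functions $\chi$ (for any normalising function), $t \mapsto i\,\mathrm{sgn}(t)$, and $t\mapsto \chi_{[0,\infty)}(t)$ is continuous, because the only discontinuity of $\mathrm{sgn}$ or $\chi_{[0,\infty)}$ lies at $0$, which is excluded. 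By Stone--Weierstrass, any such continuous $g$ is the uniform limit on $\sigma(H)$ of a sequence of polynomials $p_n$, and hence $p_n(H) \to g(H)$ in the norm of $\calB(\calH)$ by the continuous functional calculus.

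Finally I would close the argument using that $\mathbb{K}(\calH)$ is norm-closed: from
\[
\bigl\|[p_n(H),\varphi(f)] - [g(H),\varphi(f)]\bigr\| \leq 2\|p_n(H)-g(H)\|\,\|\varphi(f)\| \longrightarrow 0,
\]
and $[p_n(H),\varphi(f)] \in \mathbb{K}(\calH)$ from the polynomial step, we conclude $[g(H),\varphi(f)] \in \mathbb{K}(\calH)$ for each $f \in C_0(X)$. Applying this to $g = \chi$, $g=i\,\mathrm{sgn}$, and $g = \chi_{[0,\infty)}$ gives the three statements, using Kasparov's Lemma once more to repackage them as pseudolocality. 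The only mild subtlety, and arguably the main point, is ensuring that $\chi_{[0,\infty)}$ and $\mathrm{sgn}$ are genuinely continuous on $\sigma(H)$, which is exactly guaranteed by the spectral gap at $0$; no analogous direct argument would go through for a non-gapped $H$.
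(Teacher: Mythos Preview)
Your proof is correct and is essentially an explicit unpacking of what the paper asserts in one line: the paper simply states that ``the functional calculus gives us the following'' without further detail, and your Kasparov's Lemma plus polynomial-approximation argument is exactly the standard way to justify that assertion. There is nothing to add.
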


\begin{example}[Bounded/discrete examples] \label{ex:bdd_examples}
Let $\Lambda$ be a proper and discrete metric space, e.g. $\Lambda=\Z^d$. In this case, $C_0(\Lambda)$ acts naturally on 
$\calV=\ell^2(\Lambda)\otimes \C^{n}$, which we can extend to $\calH = \calV \otimes \C^{2}$ 
with real involution $\mathfrak{C}(\one \otimes \sigma_1)$. We consider BdG Hamiltonians as in 
Equation \eqref{eq:Generic_BdG},
\[
  H= \begin{pmatrix} h & \delta \\ \delta^* & - \ol{h} \end{pmatrix}, \qquad h, \, \delta \in \calB[ \ell^2(\Lambda)\otimes \C^n], 
   \qquad \ol{A} = \mathfrak{C} A \mathfrak{C}, \qquad \delta^* = - \ol{\delta}.
\]
The representation $\varphi: C_0(\Lambda)\to \calB[ \ell^2(\Lambda)\otimes \C^n ]$ can be decomposed into a 
sum of terms of $\{p_x\}_{x\in \Lambda}$, the projection onto the site $\{x\} \otimes \C^n$.
With this in mind, we impose the following conditions for $T= h$ or $\delta$:
\begin{itemize}
  \item There exists $R>0$ such that $p_y T p_x = 0$ if $d(x,y)>R$,
  \item The operators $p_x T$ and $T p_x$ are compact for any $x \in \Lambda$.
\end{itemize}
If these conditions are satisfied for $h$ and $\delta$, then $H$ is an element of the \emph{uniform} Roe algebra, 
$C^*_u(\Lambda)$, which can be naturally embedded in $C^*(\Lambda)$. In particular, if $H$ is invertible in $C^*_u(\Lambda)$, 
then it easily follows that $\chi(H)$ and $iH|H|^{-1}$ are elements in $C^*_u(\Lambda)\subset D^*(\Lambda)$ for 
any normalising function $\chi$. 
As such, for two BdG Hamiltonians $H_0, \, H_1$ satisfying these 
conditions, we obtain that $H_0-H_1, \, J_0 - J_1 \in C^*(\Lambda)$.
\end{example}

\begin{example}[Unbounded/continuous examples] \label{ex:unbdd_examples}
Let $M$ be a complete Riemannian manifold  and consider the multiplication representation of $C_0(M)$ 
acting diagonally on $\calH = L^2(M, \C^n) \otimes \C^2$ with 
real involution $\mathfrak{C}(\one \otimes \sigma_1)$ and $\mathfrak{C}$ complex conjugation. 
As before, we restrict to gapped BdG Hamiltonians 
of the form $H = \begin{pmatrix} h & \delta \\ \delta^* & -\ol{h} \end{pmatrix}$ with 
$\delta^* = - \ol{\delta}$. In typical examples, $h$ and $\delta$ are differential 
operators (cf. Example \ref{ex:canonical_example}). 

To apply our general framework, we would like to find gapped  BdG Hamiltonians such that 
$J= i\,\mathrm{sgn}(H)$ is pseudolocal and an element of $\mathrm{Mult}(C^*(M))$. We 
also wish to consider pairs of BdG Hamiltonians such that $J_0-J_1 \in C^*(M)$. We will 
examine these conditions separately.

\vspace{0.1cm}

To show $J= i\,\mathrm{sgn}(H)$ is pseudolocal, 
some care needs to be taken if $H$ is a second or higher-order differential operator as the commutator 
$[H, \varphi(f)]$ will generally not be bounded for $f \in C_c^\infty(M)$. 
Such BdG Hamiltonians will not give rise to 
a spectral triple (unbounded $C_0(M)$--$\C$ Kasparov module), 
whose properties can then 
be used to show that $\mathrm{sgn}(H)$ is pseudolocal. 
However, following~\cite[Appendix A]{GM14} and fixing $\varepsilon >0$, 
we ask for BdG Hamiltonians to have 
$\varepsilon$-bounded commutators in the sense that for all $f \in C_c^\infty(M)$,
\begin{enumerate}
  \item[(i)] $\varphi(f) \Dom(H) \subset \Dom(H)$,
  \item[(ii)] The operators $[H, \varphi(f)](1+H^2)^{-\frac{1-\varepsilon}{2}}$ and $(1+H^2)^{-\frac{1-\varepsilon}{2}}[H, \varphi(f)]$ 
  extend to bounded operators on $\calH$.
\end{enumerate}
The basic intuition behind the parameter $\varepsilon$ is as follows. Suppose that $H$ is an order-$m$ differential 
operator and take $f \in C_c^\infty(M)$. Then the commutator $[H, \varphi(f)]$ will have order $m-1$. Therefore 
$[H, \varphi(f)](1+H^2)^{-\frac{m-1}{2m}}$ and $(1+H^2)^{-\frac{m-1}{2m}}[H, \varphi(f)]$ are order-$0$ 
pseudodifferential operators and so are bounded on $L^2(M, \C^{2n})$. Hence $H$ is $\varepsilon$-bounded 
with $\varepsilon = \tfrac{1}{m}$. More generally, 
the parameter $\varepsilon^{-1}$ can be thought of as the order of the operator $H$.

For BdG Hamiltonians with $\varepsilon$-bounded commutators and whose resolvent is locally compact with respect to $\varphi$, the 
bounded transform $F_H = H( 1+H^2)^{-1/2}$ is such that the triple 
\[
  \Big( C_0(M) \otimes \Cl_{1,0}, \, \calH \otimes \bigwedge\nolimits^{\! \ast} \C, \, i F_H \otimes \rho \Big)
\] 
will give a Real Fredholm module~\cite[Theorem A.6]{GM14}. Furthermore, because $H$ is invertible, 
$[0,1] \ni t \mapsto F_t = H\big( (1-t) + H^2 \big)^{-1/2}$ is a norm-continuous path between 
$F_H$ and $\mathrm{sgn}(H)$. Therefore for any $f\in C_0(M)$, $[\varphi(f), \mathrm{sgn}(H)]$ is a norm-limit of compact 
operators and so is also compact. Hence $J = i \,\mathrm{sgn}(H)$ is pseudolocal with respect to $\varphi$.

\vspace{0.1cm}

Let us now consider the condition that $J \in \mathrm{Mult}(C^*(M))$ for BdG Hamiltonians 
$H = \begin{pmatrix} h & \delta \\ \delta^* & -\ol{h} \end{pmatrix}$.
If $\delta$ a first-order differential operator and 
$h = \Delta_M - \mu$ with $\Delta_M$ the Laplace--Beltrami operator, it is in general quite 
non-trivial to show that $J \in \mathrm{Mult}(C^*(M))$.
In the case that $M = \R^d$, the operator $\Delta_{\R^d}+V$ is affiliated to the Roe algebra 
$C^*(\R^d)$ for any $V \in L^\infty(\R^d)$~\cite[Proposition 1]{EwertMeyer}. The  
first-order coupling term $\delta$ will also be 
affiliated to $C^*(\R^d)$. Therefore, provided there is sufficient compatibility with $h$ and $\delta$ (e.g. 
$h$ and $\delta$ commute) and 
$H$ is gapped, $H^{-1}$ will then be a matrix over 
$\mathrm{Mult}(C^*(\R^d))$. Ignoring matrix degrees of freedom, we then obtain that  
 $J \in \mathrm{Mult}(C^*(\R^d))$.
For the case of more general $M$ we will simply assume that $J \in \mathrm{Mult}(C^*(M))$.

\vspace{0.1cm}

Given a pair of BdG Hamiltonians $H_0$ and $H_1$ such that $J_0$ and $J_1$ 
are pseudolocal elements in $\mathrm{Mult}(C^*(M))$, 
we furthermore  assume that $\Dom(H_0)=\Dom(H_1)$ and 
$\varphi(f)(H_0-H_1)(i+ H_0)^{-1} \in \mathbb{K}(\calH)$ 
for all $f \in C_0(M)$. Then  $J_0-J_1 \in C^*(M)$ and 
we are in the setting of Section \ref{sec:KO_index_general}.
In more general examples, the framework 
of~\cite{CGPRS} provides technical conditions on $H_0$ and $H_1$ so that 
$J_0-J_1 \in C^*(M)$.
\end{example}

\begin{defn} \label{def:local_equiv_state}
\begin{enumerate}
  \item[(i)] We say that the representation $\varphi: C_0(X) \to \calB(\calH)$ is locally compatible with $\Gamma$ if 
  $\Gamma$ restricts to a real involution on $\ol{\Ran(\varphi(f))}\subset \calH$ for all $f\in C_0(X)$.
  \item[(ii)] We say that two pure and quasifree states $\omega_0$ and $\omega_1$ of $A^\mathrm{car}_\mathrm{sd}(\calH,\Gamma)$ 
  are locally equivalent with respect to $(X, \varphi)$ if $\varphi$ is locally compatible with $\Gamma$ 
and there is a dense $\ast$-algebra $\calA \subset C_0(X)$ such that 
$\omega_0$ is equivalent to $\omega_1$ 
as a state on $A^\mathrm{car}_\mathrm{sd}\big( \ol{\Ran(\varphi(f))}, \Gamma \big)$ for all $f \in \calA$.
\end{enumerate}
\end{defn}

A physically reasonable choice for a dense $\ast$-subalgebra $\calA \subset C_0(X)$ is $C_c(X)$, the algebra 
of compactly supported functions. 
In our examples of interest, $C_0(X)$ acts diagonally on 
$\calH = \calV \otimes \C^2$ with $\Gamma = \mathfrak{C}(\one \otimes \sigma_1)$. 
Letting $\Pi_{\varphi(f)}$ denote the projection onto $\ol{\Ran(\varphi(f))}$, 
$\Pi_{\varphi(f)} \Gamma = \Gamma \Pi_{\varphi(f)}$ and $\varphi$ is locally compatible with $\Gamma$ 
when $\mathfrak{C}$ is pointwise complex conjugation for example.
Recalling Theorem \ref{thm:Araki_quasifree}, the basis projections $P_0$ and $P_1$ 
will give locally equivalent quasifree states  $\omega_0$ and $\omega_1$ if and only 
if $\Pi_{\varphi(f)} (P_0 - P_1) \Pi_{\varphi(f)}$ is Hilbert-Schmidt for all $f\in \calA$.

\begin{remarks}
\begin{enumerate}
  \item[(i)] If $\omega_0$ is unitarily equivalent to $\omega_1$, then it is locally equivalent for any $(X,\varphi)$ as in 
  Definition \ref{def:local_equiv_state}.
  \item[(ii)] If the space $X$ is compact, then our notion of local equivalence reduces to equivalence of the states 
$\omega_0$ and $\omega_1$.
  \item[(iii)] If the representation $\varphi: C_0(X) \to \calB(\calH)$ \emph{commutes} with the BdG Hamiltonians, 
  $\varphi(f)H_k = H_k\varphi(f)$ for $k \in \{0,1\}$, then $\omega_0$ and $\omega_1$ are locally equivalent if 
  and only if they are equivalent.
  \item[(iv)] A similar notion of local quasiequivalence is defined for states of nets of $C^*$-algebras 
  $\calO \to \mathfrak{A}(\calO)$ that appear in algebraic quantum field theory, see~\cite{LudersRoberts, DAntoniHollands} for example. 
  Here we work with a different class of spaces.
\end{enumerate}
\end{remarks}

\begin{example}
Let us revisit the case of a discrete and proper metric space $\Lambda$ from Example \ref{ex:bdd_examples}. 
Take two invertible BdG Hamiltonians 
$H_0, \, H_1 \in C^*_u(\Lambda) \subset C^*(\Lambda)$ acting on $\ell^2(\Lambda, \C^n)\otimes \C^{2}$ 
and $\Gamma = \mathfrak{C}(\one \otimes \sigma_1)$ with $\mathfrak{C}$ component-wise complex conjugation. 
We can consider $C_c(\Lambda)$ as a dense 
$\ast$-subalgebra of $C_0(\Lambda)$. In particular, any function $f \in C_c(\Lambda)$ will be supported 
on a {\em finite} set $Y \subset \Lambda$. Hence, the restriction of $H_0$ and $H_1$ to 
$\ol{\Ran(\varphi(f))}$ is the restriction to $\ell^2( Y) \otimes \C^{2n} \cong \C^{|Y|} \otimes \C^{2n}$. 
Because we are now in a finite-dimensional Hilbert space, all pure  states are unitarily equivalent to each other. 
Therefore we see that in discrete examples, local equivalence is satisfied without issue.
\end{example}

\begin{prop} \label{prop:local_equiv_to_Roe_alg}
If the gapped ground states $\omega_0$ and $\omega_1$ are locally equivalent 
with respect to $(X, \varphi)$, then $\varphi(f)(J_0 - J_1) \in \mathbb{K}(\calH)$  for all $f \in C_0(X)$, i.e., $J_0-J_1 \in C^*(X)$. 
\end{prop}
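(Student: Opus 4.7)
The plan is to reduce the claim to compactness of two-sided spatial compressions $\varphi(f_1)(P_0-P_1)\varphi(f_2)$, and then extract compactness from the Hilbert--Schmidt content of local equivalence via a sandwich construction. Since $J_0-J_1 = 2i(P_0-P_1)$, the statement is equivalent to $\varphi(f)(P_0-P_1) \in \mathbb{K}(\calH)$ for every $f \in C_0(X)$, and by norm-density of $C_c(X)$ in $C_0(X)$ together with boundedness of $\varphi$, it suffices to verify this for $f \in C_c(X)$.

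For such an $f$, choose a cutoff $g \in C_c(X)$ with $fg = f$ and write
\[
   \varphi(f)(P_0-P_1) \;=\; \varphi(f)(P_0-P_1)\varphi(g) \;+\; \varphi(f)\,[\varphi(g),\,P_0-P_1].
\]
Assumption \ref{assump:SingleHamilt} together with the preceding pseudolocality lemma gives $[\varphi(g),P_0-P_1]\in \mathbb{K}(\calH)$, so the second summand is compact. The task is therefore to prove $\varphi(f_1)(P_0-P_1)\varphi(f_2) \in \mathbb{K}(\calH)$ for all $f_1, f_2 \in C_c(X)$, which by linearity we may take to be real-valued.

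Next, adopt the canonical dense $*$-subalgebra $\calA = C_c(X)$ from the discussion following Definition \ref{def:local_equiv_state}, and choose a real-valued $h \in \calA$ equal to $1$ on $\mathrm{supp}(f_1)\cup \mathrm{supp}(f_2)$. Because $\varphi(h)$ is self-adjoint and commutes with $\varphi(f_i)$, its range projection $\Pi_{\varphi(h)}$ belongs to the abelian von Neumann algebra generated by $\varphi(h)$, so also commutes with each $\varphi(f_i)$; the relation $hf_i = f_i$ additionally forces $\Ran\varphi(f_i) \subseteq \ol{\Ran\varphi(h)}$, whence $\Pi_{\varphi(h)}\varphi(f_i) = \varphi(f_i)\Pi_{\varphi(h)} = \varphi(f_i)$. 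This yields the identity
\[
   \varphi(f_1)(P_0-P_1)\varphi(f_2) \;=\; \varphi(f_1)\,\Pi_{\varphi(h)}(P_0-P_1)\Pi_{\varphi(h)}\,\varphi(f_2).
\]
By the reformulation of local equivalence recorded in the paragraph following Definition \ref{def:local_equiv_state}, $\Pi_{\varphi(h)}(P_0-P_1)\Pi_{\varphi(h)}$ is Hilbert--Schmidt for every $h\in\calA$, so the right-hand side lies in the Hilbert--Schmidt class, hence in $\mathbb{K}(\calH)$. Chaining the reductions produces $\varphi(f)(J_0-J_1)\in\mathbb{K}(\calH)$ for all $f\in C_0(X)$; combined with $J_0-J_1\in\mathrm{Mult}(C^*(X))$ from Assumption \ref{assump:SingleHamilt}, this places $J_0-J_1$ in $C^*(X)$.

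The main technical input underlying the argument is the translation of local equivalence --- that is, equivalence of the generically mixed restricted quasifree states on $A^{\mathrm{car}}_{\mathrm{sd}}(\ol{\Ran\varphi(h)},\Gamma)$ --- into the Hilbert--Schmidt property of the compressed difference $\Pi_{\varphi(h)}(P_0-P_1)\Pi_{\varphi(h)}$. This depends on extending the Powers--St{\o}rmer--Araki characterisation of quasi-equivalence from pure to mixed quasifree CAR states, and is asserted by the author in the paragraph following Definition \ref{def:local_equiv_state}; absent this assertion I would otherwise pause here to verify the implication.
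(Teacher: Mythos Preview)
Your proof is correct and follows a route that differs from the paper's in one structural choice. The paper shows $|\varphi(f)(J_0-J_1)|^2$ is compact by expanding it, commuting $\varphi(f)$ past $J_0-J_1$ modulo compacts via pseudolocality, and then using that $\varphi(f_n)(J_0-J_1)\varphi(f_n)$ is compact for $f_n\in\calA$; compactness of $\varphi(f)(J_0-J_1)$ then follows from the polar decomposition. You instead insert a cutoff $g$ with $fg=f$ to split $\varphi(f)(P_0-P_1)$ into a two-sided compression plus a commutator term, and handle the compression by sandwiching with the range projection $\Pi_{\varphi(h)}$. Your argument is more direct and avoids the square-root/polar-decomposition step, which is a genuine simplification.

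The trade-off is a mild loss of generality: you explicitly take $\calA=C_c(X)$ so as to have an $h\in\calA$ equal to $1$ on a prescribed compact set. Definition~\ref{def:local_equiv_state} only posits \emph{some} dense $\ast$-subalgebra $\calA$, and the paper's proof works for any such $\calA$ since it uses only norm-density. In the applications the paper has in mind ($\calA=C_c(X)$ is singled out immediately after the definition) this costs nothing, but strictly speaking your argument proves a slightly weaker statement.

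Your final paragraph is an accurate observation: the passage from equivalence of the restricted (generally mixed) quasifree states to the Hilbert--Schmidt condition on $\Pi_{\varphi(h)}(P_0-P_1)\Pi_{\varphi(h)}$ is asserted in the text rather than derived, and both proofs rely on it.
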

\begin{proof}
Choose $f\in C_0(X)$ with an approximating sequence $f_n \in \calA$. Without loss of generality, we can assume that 
$f$ is  real-valued.
As $\omega_0$ and $\omega_1$ are locally equivalent with respect to $(X,\varphi)$ 
$\varphi(f_n)(P_0 -P_1)\varphi(f_n)$ maps the unit ball of $\calH$ to a precompact set. Hence it is compact 
and so is $\varphi(f_n)(J_0-J_1)\varphi(f_n)= 2i \varphi(f_n)(P_0-P_1)\varphi(f_n)$.

We will first show that $\big| \varphi(f)(J_0-J_1) \big| \in \mathbb{K}$, where we compute 
\begin{align*}
  \big( \varphi(f)(J_0- J_1) \big)^* \varphi(f)(J_0-J_1) &= -(J_0-J_1)\varphi(f)^2 (J_0-J_1) \\
  &= -\varphi(f)(J_0-J_1)\varphi(f)(J_0-J_1)  + \mathbb{K} \\
  &= \lim_{n\to \infty} - \varphi(f_n) (J_0 - J_1) \varphi(f_n) (J_0-J_1) + \mathbb{K},
\end{align*}
where in the second line we used that $[\varphi(f),(J_0-J_1)] \in \mathbb{K}$ as $J_0-J_1$ is pseudolocal 
with respect to $\varphi:C_0(X)\to \calB(\calH)$. 
Hence we have a limit of compact operators which is compact. Because the compact operators are 
closed under square root, this implies that $\big| \varphi(f)(J_0-J_1) \big| \in \mathbb{K}$. By the polar 
decomposition, $\varphi(f)(J_0-J_1)$ is compact. 
\end{proof}

\subsection{Local equivalence to $K$-homology class}

Building from our work in the previous subsection, we will extend Assumption \ref{assump:SingleHamilt} 
and  consider \emph{pairs} of BdG Hamiltonians that are local with respect to the Real representation 
$\varphi:C_0(X)\to \calB(\calH)$.

\begin{assumption} \label{assump:ControlledHamiltonians}
We assume that the BdG Hamiltonians $H_0, \, H_1$ are such that for all $k \in \{0,1\}$,
\begin{enumerate}
  \item[(i)] $0 \notin \sigma(H_k)$  and $H_k = H_k^* = -\Gamma H_k \Gamma$,
  \item[(ii)] $\chi(H_k)  \in \mathrm{Mult}(C^*(X))$ for any normalising function $\chi$ and 
  $J_k  = i\,\mathrm{sgn}(H_k)$ is pseudolocal with respect to $\varphi:C_0(X) \to \calB(\calH)$,
  \item[(iii)]  $\varphi(f)(J_0-J_1) \in \mathbb{K}(\calH)$ for 
  all $f \in C_0(X)$; i.e. $J_0-J_1 \in C^*(X)$.
\end{enumerate}
If $H_0$ and $H_1$ are bounded, we assume they are invertible as elements of $D^*(X)$. 
\end{assumption}

\begin{remarks}
\begin{enumerate}
  \item[(i)] By Proposition \ref{prop:local_equiv_to_Roe_alg}, pairs of invertible BdG Hamiltonians $H_0, \, H_1$ 
  satisfying Assumption \ref{assump:SingleHamilt} and whose ground states are locally equivalent 
   will satisfy the conditions of Assumption \ref{assump:ControlledHamiltonians}. 
  \item[(ii)] If $X$ is a compact space, then $C^*(X)= \mathbb{K}(\calH)$ and we recover the setting of Section \ref{sec:Basic_Quasifree_index}.
\end{enumerate}
\end{remarks}

We now use Assumption \ref{assump:ControlledHamiltonians} and the relative Cayley transform $\calC_{J_0}(J_1)$ 
to construct a $K$-homology class from the pair of gapped ground 
states $\omega_0$ and $\omega_1$.

\begin{prop} \label{prop:Coarse_KHom_class_noSymm}
Suppose that $H_0$ and $H_1$ satisfy Assumption \ref{assump:ControlledHamiltonians}.
Further assume one of the following:
\begin{enumerate}
  \item[(a)] The Real representation $\varphi:C_0(X) \to \calB(\calH)$ is ample,
  \item[(b)] The Hamiltonian $H_0$ is trivially local with respect to $(X,\varphi)$ in the sense that 
  $[\varphi(f), J_0] = 0$ for all $f \in C_0(X)$. 
\end{enumerate}
Then the triple
\[
  \Big( C_0(X) \otimes \Cl_{2,0}, \, {}_\varphi \calH \otimes \bigwedge\nolimits^{\! *} \C, \, F_{\calC_{J_0}(J_1)} \otimes \rho \Big)
\]
is a Real Fredholm module with Clifford generators $\{\one \otimes \gamma, J_0 \otimes \rho\}$ and 
$F_{\calC_{J_0}(J_1)}$ the bounded transform of the Cayley transform $\calC_{J_0}(J_1) = J_0(J_1 + J_0)(J_1-J_0)^{-1}$.
\end{prop}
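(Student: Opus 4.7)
The plan is to apply Proposition~\ref{prop:anti-commuting_F_to_Khom} with $n=1$, $\kappa_1 = J_0$, and $F := F_{\calC_{J_0}(J_1)}$ obtained from the skew-adjoint Cayley machinery of Section~\ref{subsec:skew_cayley}. That construction already gives that $F \in \calB(\calH)$ is bounded, Real, skew-adjoint, and anti-commutes with $J_0$. Viewing operators on the trivially graded $\calH$ as even in the graded tensor product, a direct computation shows $\{\one \otimes \gamma,\,J_0 \otimes \rho\}$ are mutually anti-commuting odd self-adjoint unitaries squaring to $+\one$ (giving the $\Cl_{2,0}$-representation), and that $F \otimes \rho$ is odd, self-adjoint, and graded-commutes with both generators.

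For the compact-resolvent condition, Equation~\eqref{eq:Skew_Cayley_F_properties} gives
\[
  \varphi(f)\bigl(\one - (F\otimes \rho)^2\bigr) \;=\; \varphi(f)(\one + F^2)\otimes \one \;=\; -\tfrac{1}{4}\bigl[\varphi(f)(J_0-J_1)\bigr](J_0-J_1)\otimes \one,
\]
which lies in $\mathbb{K}(\calH \otimes \bigwedge\nolimits^{\!*}\C)$ by Assumption~\ref{assump:ControlledHamiltonians}(iii). The graded commutator $[J_0 \otimes \rho,\,\varphi(f)\otimes \one] = [J_0,\varphi(f)]\otimes \rho$ vanishes in case~(b) by direct assumption, and in case~(a) it is compact by pseudolocality of $J_0$ (Assumption~\ref{assump:ControlledHamiltonians}(ii)) via Kasparov's Lemma; ampleness in case~(a) also ensures that the $\pm i$-eigenspaces of $J_0$ are infinite-dimensional, so that the representation of $C_0(X) \otimes \Cl_{2,0}$ descends correctly through the Calkin algebra as required by Proposition~\ref{prop:anti-commuting_F_to_Khom}(a).

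The remaining step is showing $[F,\varphi(f)] \in \mathbb{K}(\calH)$, i.e., pseudolocality of $F$. The strategy is to exhibit $F$ as an element of the $C^*$-algebra generated by the pseudolocal operators $J_0,J_1 \in D^*(X)$: by \eqref{eq:Skew_Cayley_F_properties} we have $|F| = \tfrac{1}{2}|J_0 - J_1|$, which lies in $D^*(X)$ via continuous functional calculus of $(J_0-J_1)^*(J_0-J_1) \in D^*(X)$; together with skew-adjointness and $\{F, J_0\} = 0$ this reduces the problem to controlling the polar-phase of $F$ in terms of $J_0$ and the unitary $U := J_0 J_1$. A resolvent-integral representation of the bounded transform, combined with compactness of $\varphi(f)(J_0 - J_1)$ and the resolvent identity, expresses $[F,\varphi(f)]$ as a norm-limit of compact operators, yielding compactness by closure of $\mathbb{K}(\calH)$. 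The main obstacle is the careful domain and norm estimates for the integral representation involving the unbounded Cayley operator $\calC_{J_0}(J_1)$, which I expect to require passage to the reducing subspace $\overline{(J_1-J_0)\calH}$ where Lemma~\ref{lem:Cayley_selfadj} guarantees skew-adjointness and regularity of $\calC_{J_0}(J_1)$.
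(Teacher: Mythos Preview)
Your overall structure matches the paper's: both reduce to Proposition~\ref{prop:anti-commuting_F_to_Khom} with $n=1$ and $\kappa_1=J_0$, and your treatment of the $\one+F^2$ condition and the well-definedness of the $C_0(X)\otimes\Cl_{2,0}$-representation in cases~(a) and~(b) is essentially identical to the paper's.

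The difference lies in the pseudolocality step $[F,\varphi(f)]\in\mathbb{K}(\calH)$. Your proposed route via polar decomposition and resolvent integrals for the unbounded $\calC_{J_0}(J_1)$ is much harder than necessary and carries a small error: from Equation~\eqref{eq:Skew_Cayley_F_properties} one has $|F|^2=-F^2=\tfrac14(2-J_0J_1-J_1J_0)=\tfrac14|J_0+J_1|^2$, so $|F|=\tfrac12|J_0+J_1|$, not $\tfrac12|J_0-J_1|$. More importantly, the domain issues you anticipate for the integral representation on $\overline{(J_1-J_0)\calH}$ are genuine and would require real work to resolve.

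The paper bypasses all of this by observing that the explicit bounded formula~\eqref{eq:Skew_Cayley_F} already exhibits $F_{\calC_{J_0}(J_1)}$ as $J_0\,\eta(J_1J_0,(J_1J_0)^*)$ for a continuous function $\eta$ of the \emph{normal} (in fact unitary) operator $J_1J_0$. Since the pseudolocal operators form a $C^*$-algebra containing $J_0$ and $J_1$, they contain $J_1J_0$ and hence $\eta(J_1J_0,(J_1J_0)^*)$ by the continuous functional calculus; multiplying by $J_0$ keeps you in this algebra. This one-line argument replaces your entire final paragraph and avoids any contact with the unbounded operator $\calC_{J_0}(J_1)$.
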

\begin{proof}
We first consider case (a). Because $J_0$ has infinitely degenerate $\pm i$ eigenspaces, it generates an ample 
representation of the ungraded Clifford algebra $\Cl_{0,1}$ on $\calH$. Therefore, denoting 
$q: \calB(\calH) \to \calB(\calH) / \mathbb{K}(\calH)$ the quotient map, 
$[\varphi(f), J_0] = q\big([ \varphi(f), J_0] \big) = 0$ for all $f\in C_0(X)$ 
 as $\varphi$ is ample and $[ \varphi(f), J_0]$ is compact. Hence the representation of $C_0(X)\otimes \Cl_{2,0}$ is well-defined.
 
Recall the bounded transform $F_{\calC_{J_0}(J_1)} = \tfrac{1}{2} J_0(J_1J_0-1)(J_0J_1+1)^{-1}(2+J_0J_1+J_1J_0)^{1/2}$ and 
its basic properties from Equations \eqref{eq:Skew_Cayley_F} and \eqref{eq:Skew_Cayley_F_properties} on Page \ref{eq:Skew_Cayley_F_properties}. 
Because $J_1J_0$ is normal, we can write $F_{\calC_{J_0}(J_1)} =  J_0\, \eta( J_1J_0, (J_1J_0)^*)$ with 
$\eta(z, \ol{z}) = \tfrac{1}{2} (z-1)(\ol{z}+1)^{-1}(2+z+\ol{z})^{1/2}$ a continuous function on the relevant domain. 
Because $J_0$ and $J_1$ have compact 
commutators with $\varphi(f)$, $[\varphi(f), J_0 \, \eta(J_1 J_0) ]$ will also be compact for 
any $f\in C_0(X)$. We similarly have from Equation \eqref{eq:Skew_Cayley_F_properties} that
\[
  \one + F_{\calC_{J_0}(J_1)}^2  = -\frac{1}{4}(J_0-J_1)^2 \in C^*(X).  
\]
Because $J_0$ anti-commutes with $\calC_{J_0}(J_1)$, it will anti-commute with $F_{\calC_{J_0}(J_1)}$ and the conditions 
to obtain a Real Fredholm module are satisfied.

For case (b), we immediately obtain a well-defined representation of $C_0(X)\otimes \Cl_{2,0}$. The 
rest of the proof then follows the same argument as case (a).
\end{proof}

\begin{remark}
Following the perspective of SPT phases, the condition $[\varphi(f), J_0] = 0$ for all $f \in C_0(X)$ from 
case (b) of Proposition \ref{prop:Coarse_KHom_class_noSymm} specifies a trivial locally gapped system 
for which we then consider $H_1$ such that $J_0 - J_1 \in C^*(X)$. Because case (b) of Proposition \ref{prop:Coarse_KHom_class_noSymm} 
does not require an ample representation, the result can be applied for $C_0(\Lambda)$ acting on $\ell^2(\Lambda)\otimes \C^{2n}$ with 
$\Lambda$ a discrete and proper metric space.
\end{remark}

\subsection{$K$-theory classes and coarse assembly}

Given the Real representation $\varphi:C_0(X)\to \calB(\calH)$ and BdG Hamiltonians $H_0$ and $H_1$ satisfying 
Assumption \ref{assump:ControlledHamiltonians},  we assume that we are in one of the following settings:
\begin{enumerate}
  \item[(a)] $\varphi$ is ample,
  \item[(b)] $[\varphi(f),J_0]=0$ for all $f\in C_0(X)$.
\end{enumerate}
In either setting, we obtain a $K$-homology element $\big[F_{\calC_{J_0}(J_1)}^\mathrm{Hom}\big] \in KO^{-2}(C_0(X)^\frakr)$ 
by Proposition \ref{prop:Coarse_KHom_class_noSymm}.
Because $J_0, \, J_1 \in  \mathrm{Mult}(C^*(X))$ 
with $J_0-J_1 \in C^*(X)$, we can also consider the $K$-theory elements constructed in Section \ref{Subsec:General_KTheory_class_noSymm} 
for $A= C^*(X)$. Namely, by Proposition \ref{prop:skew_cayley_kk}, we have the class
\begin{align} \label{eq:Coarse_Cayley_class}
\big[\calC_{J_0}(J_1) \big] &=  \Big[ \big( \Cl_{2,0}, \, \ol{(J_0-J_1)C^*(X)}_{C^*(X)} \otimes \bigwedge\nolimits^{\!*} \C, \, \calC_{J_0}(J_1) \otimes \rho \big) \Big] 
\in KO_2(C^*(X)^\frakr),
\end{align}
which by Proposition \ref{prop:vD_class_to_Cayley_class} is equivalent to the relative van Daele $K$-theory element
\begin{equation} \label{eq:vD_class}
  \big[ J_1 \otimes \rho\big] - \big[ J_0 \otimes \rho \big] \in DK(\mathrm{Mult}(C^*(X)) \otimes \Cl_{0,1},\, \mathrm{Mult}(C^*(X))/C^*(X)\otimes \Cl_{0,1}) ,
\end{equation}
where we recall that $DK(A, A/I) \cong DK(I)$.

Our task is to relate the $K$-homology element $\big[F_{\calC_{J_0}(J_1)}^\mathrm{Hom}\big]$  to the 
 $K$-theory classes in Equations \eqref{eq:Coarse_Cayley_class} and \eqref{eq:vD_class}. To do this, 
we will use the coarse assembly map $\mu_X: KO^{-\ast}(C_0(X)^\frakr) \to KO_\ast(C^*(X)^\frakr)$.

\begin{thm} \label{thm:Coarse_Assembly_Works}
The coarse assembly map $\mu_X: KO^{-2}(C_0(X)^\frakr) \to KO_2(C^*(X)^\frakr)$ is such that 
$\mu_X\big( [ F_{\calC_{J_0}(J_1)}^\mathrm{Hom} ] \big) = \big[ \calC_{J_0}(J_1) \big]$ with 
$\big[F_{\calC_{J_0}(J_1)}^\mathrm{Hom}\big]$ the $K$-homology class from Proposition \ref{prop:Coarse_KHom_class_noSymm} and 
$\big[ \calC_{J_0}(J_1) \big]$ the $K$-theory class from Equation \eqref{eq:Coarse_Cayley_class}.
\end{thm}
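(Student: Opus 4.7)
The plan is to unpack the coarse assembly map through its presentation by boundary maps in van Daele $K$-theory, and directly apply the explicit formula of Proposition \ref{prop:bdry_map_is_nice}. The key observation that makes the argument essentially tautological is that the operator $F_{\calC_{J_0}(J_1)}$ representing the $K$-homology class from Proposition \ref{prop:Coarse_KHom_class_noSymm} is built algebraically from $J_0, J_1 \in \mathrm{Mult}(C^*(X))$, so it already lies in $\mathrm{Mult}(C^*(X)) \subset D^*(X)$. Thus the partition-of-unity smoothing $F \mapsto F' = \sum_i \eta_i^{1/2} F \eta_i^{1/2}$ used in defining $\mu_X$ produces a representative in the same $K$-homology class as $F_{\calC_{J_0}(J_1)}$, and no modification occurs at this step.

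First I would identify the class of the Fredholm module $(C_0(X) \otimes \Cl_{2,0}, \calH \otimes \bigwedge\nolimits^{\!*}\C, F_{\calC_{J_0}(J_1)} \otimes \rho)$ with a relative van Daele class for the quotient $D^*(X)/C^*(X)$. By Equation \eqref{eq:Skew_Cayley_F_properties}, $\one + F_{\calC_{J_0}(J_1)}^2 = -\tfrac{1}{4}(J_0-J_1)^2 \in C^*(X)$, so the image $q(F_{\calC_{J_0}(J_1)})$ is a skew-adjoint unitary in $D^*(X)/C^*(X)$ that anti-commutes with $q(J_0)$. Under Remark \ref{rk:extra_clifford_gens_shifts_the_degree}, $q(F_{\calC_{J_0}(J_1)}) \otimes \rho$ then presents a class in $DK_e(D^*(X)/C^*(X) \otimes \Cl_{0,2})$ for a suitable basepoint $e$; by construction this is the image of $[F^{\mathrm{Hom}}_{\calC_{J_0}(J_1)}]$ at the first stage of the assembly map.

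The main computation is to apply Proposition \ref{prop:bdry_map_is_nice} to the semi-split sequence $0 \to C^*(X) \to D^*(X) \to D^*(X)/C^*(X) \to 0$, with Clifford generator $J_0 \otimes \rho \in D^*(X) \otimes \Cl_{0,1}$. Crucially, the operator $F_{\calC_{J_0}(J_1)} \otimes \rho \in D^*(X) \otimes \Cl_{0,1}$ is itself a Real, odd, self-adjoint lift of $q(F_{\calC_{J_0}(J_1)}) \otimes \rho$ that anti-commutes with $\{\one \otimes \gamma, J_0 \otimes \rho\}$ (the anti-commutation with $J_0 \otimes \rho$ follows because $J_0$ anti-commutes with $F_{\calC_{J_0}(J_1)}$ and both tensor factors are odd). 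Proposition \ref{prop:bdry_map_is_nice} then gives
\[
  \calC \circ \partial \bigl( [q(F_{\calC_{J_0}(J_1)}) \otimes \rho] - [e] \bigr) = \Bigl[ \bigl( \Cl_{2,0},\, C^*(X)_{C^*(X)} \otimes \bigwedge\nolimits^{\!*} \C,\, F_{\calC_{J_0}(J_1)} \otimes \rho \bigr) \Bigr],
\]
and by Equation \eqref{eq:Bdd_transform_on_whole_module} the right-hand side is precisely the bounded representative of $[\calC_{J_0}(J_1)] \in KO_2(C^*(X)^\frakr)$, yielding the claimed equality.

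The only real obstacle is bookkeeping the basepoint $e$ and verifying that it contributes trivially to the composition $\calC \circ \partial$. I would choose a Real odd self-adjoint unitary basepoint in $D^*(X) \otimes \Cl_{0,1}$ anti-commuting with $J_0 \otimes \rho$; such an element exists via ampleness of $\varphi$ under hypothesis (a) of Proposition \ref{prop:Coarse_KHom_class_noSymm}, and is available after a trivial finite-dimensional stabilisation under hypothesis (b). Its lift to $D^*(X) \otimes \Cl_{0,1}$ is a unitary, so the resulting Kasparov module produced by Proposition \ref{prop:bdry_map_is_nice} is degenerate and contributes $0 \in KKR(\Cl_{2,0}, C^*(X))$, completing the identification.
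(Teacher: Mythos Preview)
Your approach is essentially the same as the paper's: identify the $K$-homology class with a van Daele element, then push through the boundary map via Proposition \ref{prop:bdry_map_is_nice} using $F_{\calC_{J_0}(J_1)}$ itself as the self-adjoint lift. The handling of the basepoint is also correct.

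There is, however, one genuine slip. You write that $F_{\calC_{J_0}(J_1)}$ ``already lies in $\mathrm{Mult}(C^*(X)) \subset D^*(X)$'', but the inclusion goes the other way: since $C^*(X)$ is an ideal in $D^*(X)$ one has $D^*(X) \subset \mathrm{Mult}(C^*(X))$, not the reverse. Under Assumption \ref{assump:ControlledHamiltonians} the paper only guarantees $J_0, J_1 \in \mathrm{Mult}(C^*(X))$ and pseudolocality; it explicitly does \emph{not} assume $J_k \in D^*(X)$ (see the remark after Assumption \ref{assump:SingleHamilt} about higher-order operators). So you cannot in general feed $F_{\calC_{J_0}(J_1)}$ directly into the boundary map of the $D^*(X)$ extension, and if you first smooth to some $F' \in D^*(X)$ you lose the algebraic form that makes Proposition \ref{prop:bdry_map_is_nice} land on the desired representative.

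The paper's fix is exactly the naturality move you need: by naturality of the long exact sequence for the map $D^*(X) \hookrightarrow \mathrm{Mult}(C^*(X))$, the assembly map can equivalently be computed using the extension $0 \to C^*(X) \to \mathrm{Mult}(C^*(X)) \to \calQ(C^*(X)) \to 0$ (cf.\ \cite[Proposition 5.11]{Roe96}). In that extension $F_{\calC_{J_0}(J_1)}$ genuinely is an element of the middle algebra, and your argument goes through verbatim. Replacing your reversed inclusion with this one-line naturality invocation makes the proof correct and essentially identical to the paper's.
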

\begin{proof}
We use a duality theory approach to the assembly map as developed by Roe~\cite{Roe02,Roe04}. 
By the naturality of the long-exact sequence in $K$-theory (including van Daele $K$-theory), it suffices to 
consider the assembly map via the boundary map that arises from the short exact sequence
\[
  0 \to C^*(X) \to \mathrm{Mult}(C^*(X)) \to \mathrm{Mult}(C^*(X))/ C^*(X) \to 0
\]
rather than the short exact sequence from the ideal $C^*(X) \subset D^*(X)$~\cite[Proposition 5.11]{Roe96}. 
Let $\calQ(C^*(X)) = \mathrm{Mult}(C^*(X))/C^*(X)$ and $q: \mathrm{Mult}(C^*(X)) \to \calQ(C^*(X))$ the quotient map.
Proposition \ref{prop:Coarse_KHom_class_noSymm} implies that
$q( F_{\calC_{J_0}(J_1)})\otimes \rho \in \calQ(C^*(X)) \otimes \Cl_{0,1}$ is an odd self-adjoint unitary that anti-commutes 
with $q(J_0) \otimes \rho$. Let us now fix a reference 
odd self-adjoint unitary $e \in \calQ(C^*(X)) \otimes \Cl_{0,1}$ that lifts to an 
odd self-adjoint unitary in $\mathrm{Mult}(C^*(X))\otimes \Cl_{0,1}$ that 
 anti-commutes with $J_0 \otimes \rho$. Then 
$[F_{\calC_{J_0}(J_1)} \otimes \rho] \in DK_e( \calQ(C^*(X)) \otimes \Cl_{0,2})$, where the degree shift is because
$F_{\calC_{J_0}(J_1)}\otimes \rho$ and $e$ anti-commute with $J_0 \otimes \rho$ (cf. Remark~\ref{rk:extra_clifford_gens_shifts_the_degree}).
The coarse assembly map is given by the  composition
\[
  KO^{-2}(C_0(X)^\frakr) \to DK_e( \calQ(C^*(X)) \otimes \Cl_{0,2}) \xrightarrow{\partial} DK_e( C^*(X) \otimes \Cl_{0,1} ) \xrightarrow{\simeq} KO_2(C^*(X)^\frakr),
\]
where the first map is given by $\big[F_{\calC_{J_0}(J_1)}^\mathrm{Hom}\big] \mapsto \big[q(F_{\calC_{J_0}(J_1)})  \otimes \rho \big]$ and 
 $\partial$ is the boundary map in van Daele $K$-theory.
Applying Proposition \ref{prop:bdry_map_is_nice}, 
the boundary map 
composed with the equivalence between $DK$ and $KKR$ is represented by 
\begin{align*}
  \partial \big( \big[q(F_{\calC_{J_0}(J_1)})  \otimes \rho \big] \big) 
  &= \big[ ( \Cl_{1,0}, \, (C^*(X)\otimes \Cl_{0,1})_{C^*(X)\otimes \Cl_{0,1}},  \, F_{\calC_{J_0}(J_1)} \otimes \rho ) \big] \\
  &= \Big[ \big( \Cl_{2,0}, \, C^*(X)_{C^*(X)} \otimes \bigwedge\nolimits^{\! *} \C, \, F_{\calC_{J_0}(J_1)} \otimes \rho \big) \Big]
\end{align*}
with left Clifford generators $J_0\otimes \rho$ and $\{J_0\otimes \rho, \one \otimes \gamma\}$ in the 
first and second lines respectively.
Recalling Proposition \ref{prop:skew_cayley_kk} and Equation \eqref{eq:Bdd_transform_on_whole_module} 
on Page \pageref{eq:Bdd_transform_on_whole_module}, 
this Kasparov module is a representative of 
$\big[ \calC_{J_0}(J_1) \big]$ as required.
\end{proof}

Theorem \ref{thm:Coarse_Assembly_Works} shows that the coarse index associated to the pair of BdG Hamiltonians 
$H_0$ and $H_1$ satisfying Assumption \ref{assump:ControlledHamiltonians} encodes a topological obstruction to 
locally connect the skew-adjoint unitaries $J_0$ and $J_1$  with respect to $(X, \varphi)$, as is explained in the 
following corollary.

\begin{cor} \label{cor:Assembly_as_locality_obstruction}
Suppose that there is a continuous path $\{J_t\}_{t\in [0,1]}$ of Real skew-adjoint unitaries in $\mathrm{Mult}(C^*(X))$ such 
$\varphi(f)(J_0 -J_t) \in \mathbb{K}(\calH)$ for all $f \in C_0(X)$ and $t\in [0,1]$. Then the coarse 
index $\mu_X\big( [ F_{\calC_{J_0}(J_1)}^\mathrm{Hom} ] \big)$ is trivial in $KO_2(C^*(X)^\frakr)$.
\end{cor}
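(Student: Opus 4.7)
The plan is to reduce the corollary to Proposition \ref{prop:Connected_J_trivial_K} via the assembly map computation of Theorem \ref{thm:Coarse_Assembly_Works}. The latter identifies
\[
  \mu_X\big( [ F_{\calC_{J_0}(J_1)}^\mathrm{Hom} ] \big) = \big[ \calC_{J_0}(J_1) \big] \in KO_2(C^*(X)^\frakr),
\]
so it is enough to prove that the right-hand side is zero. Proposition \ref{prop:Connected_J_trivial_K}, applied with $A = C^*(X)$, does exactly this provided its hypothesis is verified for the given path.

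First I would verify that the path $\{J_t\}_{t\in[0,1]}$ satisfies the conditions of Proposition \ref{prop:Connected_J_trivial_K} with $A = C^*(X)$. Continuity and membership in $\mathrm{Mult}(C^*(X))$ are given, so what remains is $J_t - J_0 \in C^*(X)$ for each $t \in [0,1]$. Since $J_0, J_t \in \mathrm{Mult}(C^*(X))$ and $J_0$, $J_t$ are assumed pseudolocal with respect to $\varphi:C_0(X)\to \calB(\calH)$ (Assumption \ref{assump:ControlledHamiltonians}), Kasparov's lemma gives $[J_t - J_0, \varphi(f)] \in \mathbb{K}(\calH)$ for every $f \in C_0(X)$. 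Combined with the standing hypothesis $\varphi(f)(J_0 - J_t) \in \mathbb{K}(\calH)$, the identity
\[
   (J_0-J_t)\varphi(f) = \varphi(f)(J_0-J_t) - [\varphi(f),\, J_0-J_t]
\]
shows that $J_0-J_t$ is locally compact on both sides, i.e. $J_t - J_0 \in C^*(X)$.

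With the hypothesis in place, Proposition \ref{prop:Connected_J_trivial_K} yields $[\calC_{J_0}(J_1)] = 0$ in $KKR(\C, C^*(X)\otimes \Cl_{0,2}) \cong KO_2(C^*(X)^\frakr)$, and the corollary follows by combining this with the displayed equation above. The main (mild) obstacle is bookkeeping: upgrading the one-sided local compactness in the hypothesis to the two-sided condition built into the definition of $C^*(X)$. All substantive content — the description of $\mu_X$ via a boundary map in van Daele $K$-theory and the homotopy invariance of the Cayley class along paths in the multiplier algebra — has already been established in earlier sections, so no new technical input is required.
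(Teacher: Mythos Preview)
Your approach matches the paper's: invoke Theorem \ref{thm:Coarse_Assembly_Works} to identify the coarse index with $[\calC_{J_0}(J_1)]$ and then apply Proposition \ref{prop:Connected_J_trivial_K} with $A = C^*(X)$. There is one slip in the bookkeeping: Assumption \ref{assump:ControlledHamiltonians} only gives pseudolocality of $J_0$ and $J_1$, not of the intermediate $J_t$ for $t\in(0,1)$, so you cannot cite it to obtain $[\varphi(f), J_0-J_t] \in \mathbb{K}(\calH)$. The fix is simpler than the route you chose: since each $J_t$ is skew-adjoint, so is $J_0-J_t$, and hence $(J_0-J_t)\varphi(f) = -\big(\varphi(\bar f)(J_0-J_t)\big)^* \in \mathbb{K}(\calH)$ directly from the one-sided hypothesis. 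The paper's own two-sentence proof does not spell out this verification at all, treating the hypothesis as synonymous with $J_t-J_0 \in C^*(X)$ (cf.\ the ``i.e.'' in Assumption \ref{assump:ControlledHamiltonians}(iii)).
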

\begin{proof}
If such a path $\{J_t\}_{t\in [0,1]}$ exists, then the class $\big[ \calC_{J_0}(J_1)\big] \in KO_2(C^*(X)^\frakr)$ is trivial 
by Proposition \ref{prop:Connected_J_trivial_K}. Hence the coarse index vanishes.
\end{proof}

Let us also briefly consider the case of Altland--Zirnbauer symmetries, i.e., we assume that there 
exist mutually anti-commuting Real
skew-adjoint unitaries $\{\kappa_j\}_{j=1}^n \subset \mathrm{Mult}(C^*(X))$ that anti-commute with $J_0$ and $J_1$. 
We additionally assume that the 
eigenspaces of $\kappa_j$ are infinite dimensional for all $j \in\{1,\ldots, n\}$.
Such a circumstance trivially happens when 
$\calH = \calH' \otimes \calW$, where $\calH'$ is infinite-dimensional,  
$\varphi = \varphi' \otimes \one_\calW$ and $\kappa_j = \one_{\calH'} \otimes \kappa_j'$ with 
$\{\kappa_j'\}_{j=1}^n$ ungraded Clifford generators in $\End(\calW)$.

Proposition \ref{prop:Coarse_KHom_class_noSymm} along with our additional assumptions on $\{\kappa_j\}_{j=1}^n$ imply that
the triple 
\begin{equation} \label{eq:AZ_FredMod}
   \Big( C_0(X) \otimes \Cl_{n+2,0}, \, {}_\varphi \calH \otimes \bigwedge\nolimits^{\! *} \C, \, F_{\calC_{J_0}(J_1)} \otimes \rho \Big)
\end{equation}
is a Real Fredholm module with left Clifford generators $\{\one \otimes \gamma, J_0 \otimes \rho, \kappa_1 \otimes \rho, \ldots, \kappa_n \otimes \rho\}$.
We therefore obtain obtain a $K$-homology class $\big[ F_{\calC_{J_0}(J_1)}^\mathrm{Hom} \big] \in KO^{-n-2}(C_0(X)^\frakr)$.
We similarly have a $K$-theory class from the Kasparov module constructed in Remark \ref{rk:AZ_class_general} with $A=C^*(X)$, 
\begin{equation} \label{eq:AZ_Coarse_KasMod}
 \Big( \Cl_{n+2,0}, \, \ol{(J_0-J_1)C^*(X)}_{C^*(X)}  \otimes \bigwedge\nolimits^{\! *} \C , \, \calC_{J_0}(J_1) \otimes \rho \Big)
\end{equation}
with Clifford generators $\{\one \otimes \gamma, J_0 \otimes \rho, \kappa_1 \otimes \rho, \ldots, \kappa_n \otimes \rho\}$.
Like the case for $n=0$, our result is that the coarse assembly map relates the equivalence classes 
of the Kasparov modules in Equations \eqref{eq:AZ_FredMod}  
and \eqref{eq:AZ_Coarse_KasMod}.

\begin{thm}
Let $\big[ F_{\calC_{J_0}(J_1)}^\mathrm{Hom} \big] \in KO^{-n-2}(C_0(X)^\frakr)$ and $\big[ \calC_{J_0}(J_1) \big] \in KO_{n+2}(C^*(X)^\frakr)$ 
denote the classes from the Real Kasparov modules in Equations \eqref{eq:AZ_FredMod} and \eqref{eq:AZ_Coarse_KasMod} respectively. 
Then $\mu_X\big( [F_{\calC_{J_0}(J_1)}^\mathrm{Hom} ] \big) = \big[ \calC_{J_0}(J_1) \big]$.
\end{thm}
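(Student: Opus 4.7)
The plan is to extend the argument of Theorem \ref{thm:Coarse_Assembly_Works} by keeping track of the additional Clifford symmetries $\{\kappa_j\}_{j=1}^n$ throughout the computation. First I would invoke Roe's duality theory presentation of the assembly map~\cite{Roe02,Roe04} and the naturality of the long exact sequence in $K$-theory to reduce to computing the boundary map of the short exact sequence
\[
  0 \to C^*(X) \to \mathrm{Mult}(C^*(X)) \xrightarrow{q} \calQ(C^*(X)) \to 0,
\]
where $\calQ(C^*(X)) = \mathrm{Mult}(C^*(X))/C^*(X)$.

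Second, I would verify that $q(F_{\calC_{J_0}(J_1)}) \otimes \rho \in \calQ(C^*(X)) \otimes \Cl_{0,1}$ is an odd self-adjoint unitary anti-commuting with $q(J_0) \otimes \rho, q(\kappa_1) \otimes \rho, \ldots, q(\kappa_n) \otimes \rho$. The anti-commutation with $q(\kappa_j) \otimes \rho$ follows from $\kappa_j F_{\calC_{J_0}(J_1)} = -F_{\calC_{J_0}(J_1)} \kappa_j$, which is inherited from the anti-commutation of the $\kappa_j$ with the skew-adjoint Cayley transform $\calC_{J_0}(J_1)$. Fixing a basepoint odd self-adjoint unitary $e$ in $\calQ(C^*(X)) \otimes \Cl_{0,1}$ lifting to $\mathrm{Mult}(C^*(X)) \otimes \Cl_{0,1}$ and anti-commuting with $\{J_0 \otimes \rho, \kappa_1 \otimes \rho, \ldots, \kappa_n \otimes \rho\}$, Remark \ref{rk:extra_clifford_gens_shifts_the_degree} gives a well-defined class
\[
[q(F_{\calC_{J_0}(J_1)}) \otimes \rho] \in DK_e\big(\calQ(C^*(X)) \otimes \Cl_{0,n+2}\big),
\]
and the first step of the assembly map sends $\big[F_{\calC_{J_0}(J_1)}^\mathrm{Hom}\big]$ to this van Daele class.

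Third, I would apply Proposition \ref{prop:bdry_map_is_nice} with the $n+1$ Clifford generators $\{J_0 \otimes \rho, \kappa_1 \otimes \rho, \ldots, \kappa_n \otimes \rho\}$, taking $F_{\calC_{J_0}(J_1)} \otimes \rho$ as the Real, odd, self-adjoint lift of $q(F_{\calC_{J_0}(J_1)}) \otimes \rho$ that (by Proposition \ref{prop:Coarse_KHom_class_noSymm}) anti-commutes with these generators. This directly produces the bounded Real Kasparov module
\[
  \Big( \Cl_{n+1,0}, \, C^*(X)_{C^*(X)} \otimes \Cl_{0,1}, \, F_{\calC_{J_0}(J_1)} \otimes \rho \Big)
\]
with Clifford generators $\{J_0 \otimes \rho, \kappa_1 \otimes \rho, \ldots, \kappa_n \otimes \rho\}$, which under the standard identification $KKR(\Cl_{n+1,0}, C^*(X) \otimes \Cl_{0,1}) \cong KKR(\Cl_{n+2,0}, C^*(X))$ becomes the Real Kasparov module
\[
  \Big( \Cl_{n+2,0}, \, C^*(X)_{C^*(X)} \otimes \bigwedge\nolimits^{\! *} \C, \, F_{\calC_{J_0}(J_1)} \otimes \rho \Big)
\]
with Clifford generators $\{\one \otimes \gamma, J_0 \otimes \rho, \kappa_1 \otimes \rho, \ldots, \kappa_n \otimes \rho\}$.

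Finally, I would recognise this as precisely the bounded transform of the unbounded Kasparov module in Equation \eqref{eq:AZ_Coarse_KasMod}, so by Proposition \ref{prop:skew_cayley_kk} and the discussion around Equation \eqref{eq:Bdd_transform_on_whole_module} it represents $\big[\calC_{J_0}(J_1)\big] \in KO_{n+2}(C^*(X)^\frakr)$, completing the proof. The main obstacle is really bookkeeping: confirming that all the degree shifts and basepoint choices in the van Daele groups with extra anti-commuting Clifford generators line up correctly so that Proposition \ref{prop:bdry_map_is_nice} can be applied verbatim. Since Proposition \ref{prop:bdry_map_is_nice} is already formulated with an arbitrary number of anti-commuting Clifford generators, no new computation is required beyond the $n=0$ case.
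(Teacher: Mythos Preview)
Your proposal is correct and follows essentially the same approach as the paper: the paper's proof is just the one-line observation that the argument of Theorem~\ref{thm:Coarse_Assembly_Works} goes through verbatim once the extra Clifford generators $\{\kappa_j\}_{j=1}^n$ are carried along via Remark~\ref{rk:extra_clifford_gens_shifts_the_degree}, applying Proposition~\ref{prop:bdry_map_is_nice} to the degree-shifted boundary map. You have simply spelled out the bookkeeping that the paper leaves implicit.
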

Taking into account the extra Clifford symmetries (cf. Remark \ref{rk:extra_clifford_gens_shifts_the_degree}), 
the proof follows the same argument as Theorem \ref{thm:Coarse_Assembly_Works}, where we 
 apply Proposition \ref{prop:bdry_map_is_nice} to 
the map $DK( \calQ(C^*(X)) \otimes \Cl_{0, n+1}) \xrightarrow{\partial\circ \calC} KKR( \Cl_{n+2,0}, C^*(X))$.

\subsection{Compact $G$-symmetry} \label{sec:Compact_G_assembly}

Let $G$ be a compact group and $W$ a unitary/anti-unitary representation on $\calH$ relative to 
a homomorphism $\nu: G\to \Z_2$, i.e., $W_g$ is unitary if $\nu(g)=0$ and anti-unitary 
if $\nu(g)=1$. We also assume that $W_g \Gamma = \Gamma W_g$. Let us similarly assume 
that there is a left-action $G\times X \to X$  
which gives rise to an linear/anti-linear 
action $\eta$ on $C_0(X)$ with respect to $\nu$. 
We therefore consider representations $\varphi: C_0(X) \to \calB(\calH)$ 
such that $\varphi\circ \eta_g( f ) = \mathrm{Ad}_{W_g} \circ \varphi(f)$ for all $f \in C_0(X)$ and $g \in G$.
Given  BdG Hamiltonians $H_0$ and $H_1$ satisfying Assumption \ref{assump:ControlledHamiltonians},
we assume that 
$W_g (\Dom(H_k) ) \subset \Dom(H_k)$ and $W_g H_k W_g^* = H_k$ for all $g \in G$ and $k\in \{0,1\}$. 
In particular, this 
implies that, for all $g \in G$,  $\mathrm{Ad}_{W_g}( J_k) = (-1)^{\nu(g)} J_k$ for $k \in \{0,1\}$ 
and $\mathrm{Ad}_{W_g} (\calC_{J_0}(J_1) ) = (-1)^{\nu(g)} \calC_{J_0}(J_1)$.

The action $\mathrm{Ad}_{W}$ on $\calB(\calH)$ gives a well-defined linear/anti-linear action on 
$C^*(X)$, $\mathrm{Mult}(C^*(X))$ and the quotient $\calQ(C^*(X)) = \mathrm{Mult}(C^*(X))/C^*(X)$
relative to the homomorphism $\nu: G\to \Z_2$ that commutes with the Real 
structure $\mathrm{Ad}_\Gamma$. 
We denote these actions on $C^*(X)$ and $\calQ(C^*(X))$ by $\beta^{C^*(X)}$  
and $\beta^{\calQ^*(X)}$ respectively.
We can then define the action $\wt{\alpha}$ of $G$ on the Hilbert $C^*$-module $C^*(X)_{C^*(X)} \otimes \bigwedge^* \C$
\begin{equation} \label{eq:Equiv_action_for_assembly}
  \wt{\alpha}_g( T \otimes v) = \beta^{C^*(X)}_g(T) \otimes \gamma^{\nu(g)}v, 
  \qquad  g \in G, \,\, \, T \otimes v \in C^*(X)\otimes \bigwedge\nolimits^{\! *}\C.
\end{equation}
One then checks that the induced action $\alpha$ of $G$ on 
$\End_{C^*(X)}(C^*(X)) \otimes \End( \bigwedge^*\C)$  is such that for all $g \in G$,
\[
   \alpha_g ( J_0 \otimes \rho) = J_0 \otimes \rho, \qquad
  \qquad \alpha_g(F_{\calC_{J_0}(J_1)} \otimes \rho) = F_{\calC_{J_0}(J_1)} \otimes \rho, \qquad \qquad 
  \alpha_g( \one \otimes \gamma) = \one \otimes \gamma.
\]
Analogously, we can define a group action on $\calH \otimes \bigwedge^* \C$ via the 
unitary/anti-unitary representation $\wt{W}$ such that 
$\wt{W}_g = W_g \otimes \gamma^{\nu(g)}$ for all $g \in G$. The operators 
$F_{\calC_{J_0}(J_1)} \otimes \rho,  J_0\otimes \rho$ 
and $\one \otimes \gamma$ will then be  $G$-invariant under the induced action $\mathrm{Ad}_{\wt{W}}$.
With the preliminaries established, we can now state the result.

\begin{thm}
The triple 
\[
  \Big( \Cl_{2,0}, \, {}_{\varphi}\calH \otimes \bigwedge\nolimits^{\! *} \C, \, F_{\calC_{J_0}(J_1)} \otimes \rho \Big) 
\]
is a Real $G$-equivariant Fredholm module with $G$-action by $\wt{W}$ and 
Clifford generators $\{ \one \otimes \gamma, J_0 \otimes \rho\}$. Under the equivariant 
assembly map $\mu_X^G: KO^{-2}(C_0(X)^\frakr) \to KO_2^G(C^*(X))$, the class of this Real Fredholm 
module is mapped to the $K$-theory class represented by the unbounded $G$-equivariant 
Real Kasparov module from Proposition \ref{prop:Equiv_K_theory_class_general} with $A=C^*(X)$,
\[
  \Big( \Cl_{2,0}, \, \ol{(J_0-J_1)C^*(X)}_{C^*(X)} \otimes \bigwedge\nolimits^{\! *} \C, \, \calC_{J_0}(J_1) \otimes \rho \Big)
\]
with $G$-action by $\wt{\alpha}$ from Equation \eqref{eq:Equiv_action_for_assembly}
and Clifford generators $\{ \one \otimes \gamma, J_0 \otimes \rho\}$.
\end{thm}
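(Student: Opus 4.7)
The plan is to adapt the strategy of Theorem \ref{thm:Coarse_Assembly_Works} by adding the $G$-equivariant structure at each step. First I would verify the claim that the triple is a Real $G$-equivariant Fredholm module. The Fredholm module conditions (boundedness, compactness of $[F, \varphi(f)]$ and $\varphi(f)(\one + F^2)$) follow immediately from Proposition \ref{prop:Coarse_KHom_class_noSymm}, since forgetting the $G$-action returns the non-equivariant setting. It remains to check equivariance: the representation $\varphi$ intertwines $\eta$ on $C_0(X)$ with $\mathrm{Ad}_{\wt{W}}$ by construction, and the computations immediately preceding the statement show that $F_{\calC_{J_0}(J_1)} \otimes \rho$, $J_0 \otimes \rho$ and $\one \otimes \gamma$ are $G$-invariant under $\mathrm{Ad}_{\wt{W}}$ since $W_g J_k W_g^* = (-1)^{\nu(g)} J_k$ and the sign from the anti-linearity of $\gamma^{\nu(g)}$ on $\rho$ exactly cancels.

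Next, for the assembly map computation, I would follow the path of Theorem \ref{thm:Coarse_Assembly_Works} verbatim but keep track of the $G$-equivariant structures. By the naturality of the long exact sequence in equivariant $K$-theory, and by~\cite[Proposition 5.11]{Roe96}, the equivariant assembly map $\mu_X^G$ can be realized via the boundary map in the equivariant van Daele $K$-theory of the $G$-equivariant short exact sequence
\[
  0 \to C^*(X) \to \mathrm{Mult}(C^*(X)) \to \calQ(C^*(X)) \to 0,
\]
where all three algebras carry the actions induced by $\mathrm{Ad}_W$. The class of the equivariant Fredholm module
maps to $[q(F_{\calC_{J_0}(J_1)}) \otimes \rho] \in DK^G_e(\calQ(C^*(X)) \otimes \Cl_{0,2})$ relative to a $G$-invariant basepoint $e$ that anti-commutes with $q(J_0)\otimes \rho$. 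Then I would apply a $G$-equivariant version of Proposition \ref{prop:bdry_map_is_nice} to express $\partial \circ \calC$ via a lift: the operator $F_{\calC_{J_0}(J_1)} \otimes \rho \in \mathrm{Mult}(C^*(X)) \otimes \Cl_{0,1}$ is itself a Real, odd, self-adjoint, $G$-invariant lift of $q(F_{\calC_{J_0}(J_1)}) \otimes \rho$ that anti-commutes with $J_0 \otimes \rho$, so the boundary is represented by the bounded equivariant Kasparov module
\[
  \Big( \Cl_{2,0}, \, C^*(X)_{C^*(X)} \otimes \bigwedge\nolimits^{\!*} \C, \, F_{\calC_{J_0}(J_1)} \otimes \rho \Big).
\]
Finally, the straight-line homotopy $F_t = (1-t) F_{\calC_{J_0}(J_1)} + t\, \mathrm{sgn}(\calC_{J_0}(J_1))$ discussed after Proposition \ref{prop:bdry_map_is_nice} is $G$-equivariant, and the resulting bounded Kasparov module is precisely the bounded transform of the unbounded Kasparov module of Proposition \ref{prop:Equiv_K_theory_class_general}.

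The main technical obstacle I expect is verifying that Proposition \ref{prop:bdry_map_is_nice} carries over to the $G$-equivariant setting in a way that preserves the action $\wt{\alpha}$ from Equation \eqref{eq:Equiv_action_for_assembly}. The underlying~\cite[Proposition 5.7]{BKRCayley} uses functional calculus (tangent and cosine) of the lift $\tilde{V}$, and since the lift $F_{\calC_{J_0}(J_1)}\otimes \rho$ and all auxiliary operators ($J_0 \otimes \rho$, the Clifford generators, and the basepoint $e$) are $G$-invariant, the functional calculus respects $\wt{\alpha}$ and the entire argument carries through equivariantly. With this checked, the remaining identifications are formal and the theorem follows by the same diagram chase as in the non-equivariant case.
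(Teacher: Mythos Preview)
Your proposal is correct and follows essentially the same approach as the paper: the paper's proof is the single sentence ``Because all relevant operators are $G$-invariant, the same proof as Proposition \ref{prop:Coarse_KHom_class_noSymm} and Theorem \ref{thm:Coarse_Assembly_Works} gives the result,'' and your proposal is a careful unpacking of exactly that strategy. The only minor discrepancy is that your final homotopy $F_t = (1-t) F_{\calC_{J_0}(J_1)} + t\,\mathrm{sgn}(\calC_{J_0}(J_1))$ is not quite the one appearing in the proof of Proposition \ref{prop:bdry_map_is_nice} (which is $(1-t)\sin(\tfrac{\pi}{2}\tilde{V}) + t\tilde{V}$), but this does not affect the argument since either route identifies the boundary class with the bounded transform of the unbounded Kasparov module from Proposition \ref{prop:Equiv_K_theory_class_general}.
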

\begin{proof}
Because all relevant operators are $G$-invariant, 
the same proof as Proposition \ref{prop:Coarse_KHom_class_noSymm} and 
Theorem \ref{thm:Coarse_Assembly_Works} gives the result.  
\end{proof}

\subsection{Discrete $\Upsilon$-symmetries} \label{sec:cocompact_assembly}

A full discussion of discrete symmetries and the $\Upsilon$-equivariant assembly map 
deserves a separate treatment and so we will only give a basic overview.
We use the notation $\Upsilon$ for a discrete group to distinguish the setting 
from the case of a compact group action. 
We will furthermore restrict ourselves to \emph{linear} group actions ($\nu(g) = 0$ for all $g \in \Upsilon$).

Fix a discrete group $\Upsilon$ and a proper, isometric and cocompact left-action $\Upsilon \times X \to X$ giving 
a Real action $\eta:G\to \Aut(C_0(X))$. 
We similarly take  a unitary representation 
$V: \Upsilon \to \calU(\calH)$ such that $[V_g, \Gamma]=0$ and 
$\varphi \circ \eta_g(f) = \mathrm{Ad}_{V_g} \circ \varphi(f)$ for all $g\in \Upsilon$ and 
$f\in C_0(X)$. We will furthermore assume that $\calH$ is a $\Upsilon$-adequate 
$X$-module in the sense of~\cite[Definition 5.13]{Roe96} (this condition can always be guaranteed).

Once again we take BdG Hamiltonians $H_0$ and $H_1$ acting on $\calH$ that satisfy 
Assumption \ref{assump:ControlledHamiltonians} and furthermore for all $g \in \Upsilon$ and $k\in\{0,1\}$,
\[
  V_g \cdot \Dom(H_k) \subset \Dom(H_k), \qquad \qquad V_g H_k V_g^* = H_k.
\]
Let $\mathrm{Mult}(C^*(X))^\Upsilon$ and $C^*(X)^\Upsilon$ denote the subalgebras of 
$\mathrm{Mult}(C^*(X))$ and $C^*(X)$ respectively consisting of elements that 
are fixed by $\mathrm{Ad}_{V_g}$ for all $g \in \Upsilon$. 
Then because $V$ is a unitary representation, 
\[
  J_0, \, J_1 \in \mathrm{Mult}(C^*(X))^\Upsilon, \qquad \qquad J_0 - J_1 \in C^*(X)^\Upsilon.
\]

In~\cite[Section 2]{Roe02}, Roe constructs a full right Hilbert $C^*_r(\Upsilon)$-module ${L^2_\Upsilon(X)}_{C^*_r(\Upsilon)}$ such that 
$C^*(X)^\Upsilon$ is  isomorphic to $\mathbb{K}_{C^*_r(\Upsilon)}(L^2_\Upsilon(X))$. That is, 
$C^*(X)^\Upsilon$ is Morita equivalent to $C^*_r(\Upsilon)$ and we obtain an invertible element 
\[
   \big[ ( C^*(X)^\Upsilon, \, {L^2_\Upsilon(X)}_{C^*_r(\Upsilon)}, \, 0 ) \big] \in KKR( C^*(X)^\Upsilon,  C^*_r(\Upsilon) ).
\]
We can therefore construct a class in the $K$-theory of $C^*_r(\Upsilon)$ by composing this Morita equivalence with 
our generic $K$-theory construction from Proposition \ref{prop:skew_cayley_kk},
\begin{align*} 
  &\Big( \Cl_{2,0}, \,  \ol{(J_0-J_1)C^*(X)^\Upsilon}_{C^*(X)^\Upsilon} \otimes \bigwedge\nolimits^{\! *} \C, \, 
  \calC_{J_0}(J_1) \otimes \rho \Big) 
  \otimes_{ C^*(X)^\Upsilon } \big( C^*(X)^\Upsilon, \, {L^2_\Upsilon(X)}_{C^*_r(\Upsilon)}, \, 0 \big)  \nonumber \\
  &\qquad = \Big( \Cl_{2,0}, \, \ol{ (J_0-J_1)C^*(X)^{\Upsilon}\cdot L^2_\Upsilon(X)}_{C^*_r(\Upsilon)} \otimes \bigwedge\nolimits^{\! *} \C, 
  \, \calC_{J_0}(J_1) \otimes \rho \Big)
\end{align*}
with left Clifford generators $\{J_0 \otimes \rho, \one \otimes \gamma\}$.
Noting that $J_0, \, F_{\calC_{J_0}(J_1)} \in \mathrm{Mult}(C^*(X))^\Upsilon  \subset  \End_{C^*_r(\Upsilon)}( L^2_\Upsilon(X) )$ and 
using the properties of the bounded transform $F_{\calC_{J_0}(J_1)}$ from Equations \eqref{eq:Skew_Cayley_F} 
and \eqref{eq:Skew_Cayley_F_properties} on Page \pageref{eq:Skew_Cayley_F}, 
a bounded representative of this Kasparov module is given by 
\begin{equation} \label{eq:Gp_K-theory_class}
   \Big( \Cl_{2,0}, \, L^2_\Upsilon(X)_{C^*_r(\Upsilon)} \otimes \bigwedge\nolimits^{\! *} \C, \, F_{\calC_{J_0}(J_1)} \otimes \rho \Big)
\end{equation}
with left Clifford generators $\{J_0 \otimes \rho, \one \otimes \gamma\}$.

We can define a representation of $\Upsilon$ on $\calH \otimes \bigwedge^*\C$ by 
$\wt{V}_g( v \otimes w ) = V_g v \otimes w $ for all $v \otimes w \in \calH \otimes \bigwedge^*\C$ and $g\in \Upsilon$. 
Because $F_{\calC_{J_0}(J_1)}$ and $J_0$ are invariant under $\mathrm{Ad}_{V_g}$ for all $g\in \Upsilon$, 
Proposition \ref{prop:Coarse_KHom_class_noSymm} once again gives that 
\begin{equation} \label{eq:DiscGp_FredMod}
  \Big( C_0(X) \otimes \Cl_{2,0}, \, {}_{\varphi}\calH \otimes \bigwedge\nolimits^{\! *} \C, \, F_{\calC_{J_0}(J_1)} \otimes \rho \Big)
\end{equation}
is a $\Upsilon$-equivariant Real Fredholm module with group action by $\wt{V}$ and 
left Clifford generators $\{\one \otimes \gamma, J_0 \otimes \rho\}$.

\begin{thm}
The coarse assembly map $\mu_X^\Upsilon: KO^{-2}_\Upsilon( C_0(X)^\rs) \to KO_2( C^*_r( \Upsilon)^\frakr )$ 
maps the class of the Real Fredholm module from Equation \eqref{eq:DiscGp_FredMod} to the $K$-theory class 
represented by the Real Kasparov module from Equation \eqref{eq:Gp_K-theory_class}.
\end{thm}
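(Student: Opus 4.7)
The plan is to mimic the proof of Theorem \ref{thm:Coarse_Assembly_Works} while keeping track of the $\Upsilon$-invariance of every operator involved, and then invoke Roe's Morita equivalence $C^*(X)^\Upsilon \sim_M C^*_r(\Upsilon)$ at the end. First I would note that, for a proper, isometric and cocompact action, the equivariant assembly map $\mu_X^\Upsilon$ admits a duality-theoretic description analogous to the non-equivariant one: it factors as the quotient-to-van-Daele map composed with the boundary of the $\Upsilon$-invariant short exact sequence
\[
  0 \to C^*(X)^\Upsilon \to \mathrm{Mult}(C^*(X))^\Upsilon \to \calQ(C^*(X))^\Upsilon \to 0,
\]
followed by the Morita identification $KO_2(C^*(X)^\Upsilon) \cong KO_2(C^*_r(\Upsilon)^\frakr)$ induced by the bimodule $L^2_\Upsilon(X)$. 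This is a direct equivariant extension of~\cite[Proposition 5.11]{Roe96} together with the Morita equivalence from~\cite[Section 2]{Roe02}.

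Next I would use that $V_g H_k V_g^* = H_k$ forces $\mathrm{Ad}_{V_g}(J_k) = J_k$ and $\mathrm{Ad}_{V_g}(F_{\calC_{J_0}(J_1)}) = F_{\calC_{J_0}(J_1)}$ for every $g \in \Upsilon$, so that $J_0, J_1 \in \mathrm{Mult}(C^*(X))^\Upsilon$ and $J_0 - J_1 \in C^*(X)^\Upsilon$. In particular, $q(F_{\calC_{J_0}(J_1)}) \otimes \rho$ is a Real odd self-adjoint unitary in $\calQ(C^*(X))^\Upsilon \otimes \Cl_{0,1}$ anti-commuting with the basepoint $q(J_0) \otimes \rho$, and thus gives a class in $DK_e(\calQ(C^*(X))^\Upsilon \otimes \Cl_{0,2})$ to which the class in Equation \eqref{eq:DiscGp_FredMod} is sent by the first step of the assembly.

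Then I would apply Proposition \ref{prop:bdry_map_is_nice} to the $\Upsilon$-invariant extension above, using $F_{\calC_{J_0}(J_1)}$ as a Real, odd, self-adjoint lift of $q(F_{\calC_{J_0}(J_1)})$ which anti-commutes with the lifted Clifford generator $J_0 \otimes \rho$. This identifies the image of the boundary map with the bounded Kasparov module
\[
  \Big( \Cl_{2,0}, \, C^*(X)^\Upsilon_{C^*(X)^\Upsilon} \otimes \bigwedge\nolimits^{\!*} \C, \, F_{\calC_{J_0}(J_1)} \otimes \rho \Big),
\]
with Clifford generators $\{\one \otimes \gamma, J_0 \otimes \rho\}$. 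Taking the Kasparov product with the Morita equivalence module $(C^*(X)^\Upsilon, L^2_\Upsilon(X)_{C^*_r(\Upsilon)}, 0)$ and noting that both $J_0$ and $F_{\calC_{J_0}(J_1)}$ extend to endomorphisms of $L^2_\Upsilon(X)$ via the inclusion $\mathrm{Mult}(C^*(X))^\Upsilon \subset \End_{C^*_r(\Upsilon)}(L^2_\Upsilon(X))$, one obtains precisely the bounded representative of the Kasparov module in Equation \eqref{eq:Gp_K-theory_class}.

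The main obstacle is the first step: justifying that the equivariant assembly map really does decompose as a boundary map in Real, $\Z_2$-graded van Daele $K$-theory of the fixed-point extension, followed by Morita equivalence. Once this naturality is in hand, the computation of the boundary via Proposition \ref{prop:bdry_map_is_nice} and the assembly of the Kasparov product is essentially formal, since all operators are already $\Upsilon$-invariant and the Clifford structure carries through unchanged.
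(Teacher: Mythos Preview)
Your proposal is correct and follows essentially the same argument as the paper: the proof there also invokes Roe's duality-theoretic description of the equivariant assembly map as the composition $KO^{-2}_\Upsilon(C_0(X)) \to DK_e(\calQ(C^*(X))^\Upsilon \otimes \Cl_{0,2}) \xrightarrow{\partial} DK_e(C^*(X)^\Upsilon \otimes \Cl_{0,1}) \xrightarrow{\simeq} KO_2(C^*_r(\Upsilon)^\frakr)$, applies Proposition~\ref{prop:bdry_map_is_nice} exactly as in Theorem~\ref{thm:Coarse_Assembly_Works} to identify the boundary with $\big(\Cl_{2,0},\, C^*(X)^\Upsilon_{C^*(X)^\Upsilon} \otimes \bigwedge\nolimits^{\!*}\C,\, F_{\calC_{J_0}(J_1)} \otimes \rho\big)$, and then tensors with the Morita equivalence bimodule $L^2_\Upsilon(X)_{C^*_r(\Upsilon)}$ to land on Equation~\eqref{eq:Gp_K-theory_class}. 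Your identification of the first step as the point requiring external input (Roe's description of the equivariant assembly map via the fixed-point extension and Morita equivalence, \cite{Roe02}) matches the paper's reliance on the same citation.
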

\begin{proof}
Once again, we use Roe's  description of the coarse assembly map via duality theory~\cite{Roe02}. 
The coarse assembly map is the composition
\[
  KO^{-2}_\Upsilon(C_0(X) ) \to DK_e( \calQ(C^*(X))^\Upsilon \otimes \Cl_{0,2} ) \xrightarrow{\partial} DK_e( C^*(X)^\Upsilon \otimes \Cl_{0,1} ) 
  \xrightarrow{\simeq} KO_2( C^*_r( \Upsilon)^\frakr ),
\]
with $\calQ(C^*(X))^\Upsilon = \mathrm{Mult}(C^*(X))^\Upsilon / C^*(X)^\Upsilon$ and 
in the last step we use the Morita equivalence of $C^*(X)^\Upsilon$ with $C^*_r(\Upsilon)$. 
The class of the Fredholm module from Equation \eqref{eq:DiscGp_FredMod}  is initially mapped to 
$[q(F_{\calC_{J_0}(J_1)}) \otimes \rho]$. Then 
by the same argument as  Theorem \ref{thm:Coarse_Assembly_Works},
$\partial [q(F_{\calC_{J_0}(J_1)}) \otimes \rho]$ combined with the 
 isomorphism $DK( C^*(X)^\Upsilon \otimes \Cl_{0,1} ) \cong KKR( \Cl_{2,0}, C^*(X)^\Upsilon )$ is 
represented by the Kasparov module 
$\big( \Cl_{2,0},  \, {C^*(X)}^\Upsilon \otimes \bigwedge^* \C, \, F_{\calC_{J_0}(J_1)} \otimes \rho \big)$ 
with $\Cl_{2,0}$-generators $\{\one \otimes \gamma, J_0 \otimes \rho\}$. 
 Applying 
the Morita equivalence, 
\begin{align*}
  &\Big( \Cl_{2,0}, \,  C^*(X)^{\Upsilon}_{C^*(X)^\Upsilon} \otimes \bigwedge\nolimits^{\! *} \C, \,  F_{\calC_{J_0}(J_1)} \otimes \rho \Big) 
  \otimes_{ C^*(X)^\Upsilon } \big( C^*(X)^\Upsilon, \, {L^2_\Upsilon(X)}_{C^*_r(\Upsilon)}, \, 0 \big)  \nonumber \\
  &\qquad =   \Big( \Cl_{2,0}, \, L^2_\Upsilon(X)_{C^*_r(\Upsilon)} \otimes \bigwedge\nolimits^{\! *} \C, \, F_{\calC_{J_0}(J_1)} \otimes \rho \Big)
\end{align*}
with $\Cl_{2,0}$-generators $\{\one \otimes \gamma, J_0 \otimes \rho\}$. 
We therefore obtain 
the Real Kasparov module from Equation \eqref{eq:Gp_K-theory_class}.  
\end{proof}

\begin{remarks}
\begin{enumerate}
  \item[(i)]  The $\Upsilon$-equivariant assembly map, 
  interpreted as a higher index, provides a topological 
obstruction to the existence of a  $\Upsilon$-invariant path of Real skew-adjoint unitaries 
$\{J_t\}_{t\in [0,1]}\subset \mathrm{Mult}(C^*(X))^\Upsilon$ such that  
$\varphi(f)(J_0-J_t) \in \mathbb{K}(\calH)$ for all $f \in C_0(X)$ 
and $t \in [0,1]$. As in the non-equivariant setting (Corollary \ref{cor:Assembly_as_locality_obstruction}), 
the existence of such a path implies that $\mu_X^\Upsilon$ applied to the $K$-homology class 
from Equation \eqref{eq:DiscGp_FredMod}  
will be trivial in $KO_2(C^*_r(\Upsilon))$. 

  \item[(ii)] In the case that $\Upsilon = S$, a space group embedded in Euclidean space, the group 
$KO_\ast( C^*_r(S) )$ has been extensively studied as way to classify free-fermionic topological insulators and 
superconductors, see~\cite{Kubota16, GKT21} for example. A more comprehensive comparison between 
the $K$-homology invariants that arise from locally equivalent quasifree ground states with the $K$-theory invariants 
from free-fermionic topological phases would be interesting to consider. We leave this question to another place.
\end{enumerate}

\end{remarks}

\subsection*{Acknowledgements}
Thanks to Ken Shiozaki, who first explained to me that free-fermionic SPT phases should be 
classified by $K$-homology. I also thank Yosuke Kubota for many helpful discussions on this topic. 
Finally, I thank the anonymous referees whose careful reading and detailed comments have 
greatly improved the paper.
This work is supported by a JSPS Grant-in-Aid for Early-Career Scientists (No. 19K14548).

\end{document}